\def\BibTeX{{\rm B\kern-.05em{\sc i\kern-.025em b}\kern-.08em
		T\kern-.1667em\lower.7ex\hbox{E}\kern-.125emX}}
\def \a {{\bf a}}
\def \b {{\bf b}}
\def \c {{\bf c}}
\def \r {{\bf r}}
\newtheorem{lemma}{\textbf{Lemma}}
\newtheorem{definition}{\textbf{Definition}}
\newtheorem{remark}{\textbf{Remark}}
\newtheorem{proposition}{\textbf{Proposition}}
\newtheorem{theorem}{\textbf{Theorem}}
\newcounter{example}
\newenvironment{example}[1][]{\refstepcounter{example}\par\medskip
	\noindent \textbf{Example~\theexample. #1} \rmfamily}{\medskip}
\newcounter{Remark}
\newcounter{Corollary}
\begin{document}
	\title{ISI-Aware Code Design: A Linear Approach Towards Reliable Molecular Communication}
	
	\author{Tamoghno Nath,  \IEEEmembership{Student Member, IEEE}, Krishna Gopal Benerjee,  \IEEEmembership{Member, IEEE}, and \\ Adrish Banerjee, \IEEEmembership{Senior Member, IEEE}
		\thanks{The authors are affiliated with the Department of Electrical Engineering, Indian Institute of Technology Kanpur, India (email: \{tamoghno, kgopal, adrish\}@iitk.ac.in).}
	}
	\maketitle
	
	\begin{abstract}
		Intersymbol Interference (ISI) is one of the major bottlenecks in Molecular Communication via Diffusion (MCvD) systems resulting in degraded system performance.
		This paper first introduces two new families of linear channel codes to minimize the effect of ISI: linear Zero Pad Zero Start (ZPZS) and linear Zero Pad (ZP) codes, ensuring that every codeword is devoid of consecutive bit-1s. 
		Subsequently, the ZPZS linear and ZP linear codes are combined to form a binary ZP code, aiming for a higher code rate compared to the linear ZP codes, which can be decoded with a simple Majority Location Rule (MLR) algorithm. 
		Additionally, a linear Leading One Zero Pad (LOZP) code is proposed, which relaxes the zero padding constraints considering the placement of bit-1s in the codeword as an important metric to have an improved code rate than the ZP code.
		Finally, a closed-form expression is deduced to compute the expected ISI for the proposed codes and to demonstrate that the expected ISI is a function of the average bit-1 density of the codewords in a code.
		To compare the ISI and BER performance with average bit-1 density of the proposed codes, two types of MCvD channel are considered: (i) channel without refresh, where the previously transmitted bits persist for a longer duration and (ii) channel with refresh, where the channel is cleared after each successful reception of the message. 
		The ISI comparison, across different sequence distributions for a given length and weight, shows that the linear LOZP code exhibits superior resilience against ISI in a channel with refresh due to the placement of bit-1s at the initial positions, whereas the ZP code performs better in channel without refresh by reducing the average bit-1 density of the code.
		The asymptotic upper bound of the code rate is derived for the proposed codes, which depicts that a trade-off exists between the ISI and code rate. 
		The simulation results show that the proposed family of ZP and linear LOZP codes can improve the Bit Error Rate (BER) performance by controlling the bit-1 locations and the average bit-1 density of the code, specifically where the ISI is more pronounced over the channel noise, thus providing a better reliability compared to the conventional error correcting codes at different data rate regimes.
	\end{abstract}
	
	\begin{IEEEkeywords}
		Molecular communication, intersymbol interference, linear channel codes, majority location rule decoding.
	\end{IEEEkeywords}
	
	\section{Introduction}
	\IEEEPARstart{W}{ith} the evolution of nano-technology, chemical signalling has been one of the most promising solutions for the communication between bio-nanomachines. 
	Since the bio-nanomachines are made of biological materials and are highly reluctant to electromagnetic wave signals, molecular communication has been an obvious choice for a large variety of domains, such as biomedical instrumentation, smart drug delivery, lab-on-a-chip \cite{6122529} and also in industrial settings, such as monitoring in confined environments \cite{6881284}.
	
	Influenced by natural molecular communication systems, MCvD is one of the most preferred transmission mechanisms for short to medium-range communication \cite{suda2005exploratory}.  
	Particularly for a bio-engineering field, the external energy requirement and ease of bio-nanomachine structure have become two major constraints to establishing a secure communication link with limited computational resources. 
	In an MCvD system, specifically for bitwise communication \cite{nakano_eckford_haraguchi_2013}, the transmitter (Tx) releases molecules into the environment over a certain period, which freely diffuse in the medium.
	However, only a small number of transmitted molecules arrive at the fully absorbing receiver (Rx) in the current bit interval following Brownian motion. The rest of the molecules persist in the channel, leading to the channel memory and cause ISI \cite{7841486}. 
	This fully absorbing receiver model is biologically motivated by scenarios where small signalling molecules (e.g., Ca$^{2+}$) passively diffuse through the cell membrane and are permanently internalized upon reception, such as certain targeted drug delivery applications. However, there are other passive and reactive receiver models to depict different biological applications in an MCvD system as discussed in \cite{8742793}.
	Due to the ISI in the diffusion model with an absorbing receiver, the communication system experiences a performance drop in throughput and data reliability. Therefore, reducing the ISI effect is one of the main concerns in an MCvD system.

	\textit{Related Work}: In the existing literature, various ISI mitigation techniques have been proposed.
	The existing literature has mainly discussed four types of ISI mitigation aspects, namely, (i) modulation schemes \cite{9184816,10334472}, (ii) detection and equalizing techniques \cite{9839216,9768128,9840365,10088444}, (iii) channel codes design \cite{7273857,7248461,8633972,7859349,Minimum,6708566,8648429,8972472,10041114,10356127}, and (iv) source code design \cite{10250855,9336654}. Additionally, \cite{6868273,noel2014improving} have discussed several channel models for ISI-mitigation in an MCvD system (e.g., MCvD channel with drift and enzymes). 
	
	In this paper, we have considered the channel coding approach to reduce the effect of ISI in the MCvD system. The existing literature mostly considered the conventional error correcting codes employed in an MCvD system, such as Hamming code, LDPC code, Cyclic Reed-Muller code \cite{7273857}, Self Orthogonal Convolution code\cite{7248461} and Reed Solomon code \cite{7859349,8633972,10356127}. 
	The authors in \cite{7273857,7248461} emphasized the significance of critical distance, a measure of the actual transmission distance, and coding gain as key parameters, demonstrating the necessity of selecting a code for achieving optimal performance when the transmission distance from the source to the target is known.
	In \cite{7859349}, Reed Solomon codes were proposed for an MCvD channel to enhance transmission reliability. The authors demonstrated that these codes can decode codewords with longer block lengths in less decoding time compared to the Hamming code. 
	However, traditional error correcting codes have not integrated any ISI-reducing constraints while employed in the MCvD systems, consequently, motivating researchers to develop ISI-reducing channel codes in MCvD systems.
	
	ISI-free codes were initially introduced to eliminate ISI between codewords with very low decoding complexity \cite{6708566}. However, these codes show a better BER performance in low data rate regimes, where the effect of ISI is comparatively less \cite{6708566,8648429}.
	Also, the minimum energy channel code was explored to reduce the effect of ISI by considering energy dissipation as an essential metric \cite{Minimum}. 
	Another method, called Crossover Resistant Coding with Time Gap, has been proposed by the authors to reduce ISI for a one-dimensional noiseless channel \cite{keshavarz2019inter}. 
	In this channel code, the idea is to exploit a time gap between two consecutive codewords using two distinct types of molecules. 
	This motivates the construction of the ISI-mtg code \cite{8972472}, which constrains the consecutive bit-1s in the proposed codebook, and also, every codeword in the code starts with a bit-0.
	Hyun et al. devised an algorithm based on Huffman coding to construct a codebook that also avoids consecutive bit-1s \cite{10250855}. 
	In \cite{9336654}, the authors showed that controlling the ratio of bit-1 to bit-0 in the transmitted sequence using the inverse Huffman code can mitigate the effect of ISI in the channel.
	Furthermore, the authors introduced a non-linear \textit{ISI-red} code, leveraging constraints on the `maximum weight' and `location of bit-1s' in \cite{nath2023novel}. This approach enabled achieving a higher code rate compared to the existing ISI-reducing codes while maintaining similar ISI performance. Similarly, Tang et. al., in \cite{10356127}, showed that placing high-energy symbols at codeword beginnings in a Reed Solomon code reduces ISI. Whereas, in \cite{bhattacharjee2022channel,10134568}, the authors identified specific sequences that are highly affected by ISI within the code, emphasizing the need to avoid them.
	
	\textit{Motivation and Contribution}:
	It is well understood that the encoding and decoding complexity of the channel codes also have to be minimal for the bio-nanomachines in an MCvD system. However, to the best of our knowledge, the existing ISI-reducing channel codes have not considered any linear mapping to simplify the decoding complexity at the receiver. 
	Therefore, driven by the run-length constraints and the code constraints proposed in \cite{8972472}, first a
		 ZPZS linear code and then a linear approach for
		the ZP code have been proposed to combat the effect of ISI
		in the MCvD channel.
	Additionally, there are different sequence patterns for a given length and weight to reduce ISI in an MCvD channel.
	We propose a linear code construction based on the weight property as discussed in \cite{9840783} and compare the ISI performance of the proposed linear codes under specific code parameters. Note that bit-0s experience detrimental ISI, potentially causing bit flipping and consequent errors, whereas ISI has a constructive effect on bit-1, helping its correct detection at the receiver. 
	Therefore, the location of bit-1s in the code, the average ISI, the cumulative ISI by all the bit-0s and the maximum ISI experienced by a bit-0 in the codeword are some of the key parameters to construct an ISI-reducing linear code.
	The main contributions of this paper are:
	\begin{enumerate}
		\item We propose different families of linear channel codes based on two constraints: (i) ZP constraint, and (ii) Zero Start (ZS) constraint, which is defined as linear ZPZS code and a shifted construction of the ZPZS linear code or ZP linear code (Lemma \ref{ZPZE code parameters}, Lemma \ref{ZP code parameter} and Lemma \ref{constr2}). We also construct a family of ZP codes, a union of the linear ZPZS and the shifted linear ZP codes. Hence, the code rate has been improved over the linear ZPZS code, while each codeword in the codebook still satisfies the ZP constraint  (Lemma \ref{zp_non_linear_constr} and Lemma \ref{lemma_constrt2_non_linear_zp}).
		\item We additionally design a binary linear LOZP code based on the Ones-at-starting-position sequence described in \cite{9840783}, which relaxes the ZP constraint within the codeword up to a certain position (Lemma \ref{constr_3} and Lemma \ref{constr4}). The generalized construction of the LOZP code provides adaptability between ISI and the position of the maximum number of allowable consecutive bit-1s in the codeword.
		\item We derive a closed-form expression to compute the expected ISI in terms of bit-$1$-density of the code in Theorem \ref{ISI_density_theorem}. Then, we discuss the expected ISI of the proposed family of codes (Lemma \ref{ZPZS_sequence_weight}, Remark \ref{ZP_sequence_weight}, Remark \ref{ZP_and_ZPZS_sequence_weight} and Remark \ref{isi_zp_codes}) and validate the analytical expressions with the simulation result. Furthermore, we compare the expected ISI and the total ISI on bit-0 for the codeword experiencing the maximum ISI (Lemma \ref{isi_expected_lozp_code} and Remark \ref{isi_lozp_code_bound}). Our analysis indicates that a LOZP code with the same length and maximum weight as an ISI-mtg code is less affected by ISI in an MCvD channel (Lemma \ref{lemma_lozp_exist_for_isi_mtg} and Lemma \ref{bit_1_position_lemma}). 
		\item We introduce a location-based simple decoding mechanism for the proposed ZP codes and name it MLR decoding (Algorithm \ref{Encoding Algo}). This decoding technique is motivated by linear decoding, and therefore, the decoding complexity is also minimal for the bio-nanomachines. We also compare the simulation framework of the proposed decoding mechanism with the theoretical analysis.
		\item Finally, we derive an upper bound on the code rate for the family of ZP codes and also evaluate the asymptotic code rate of the proposed codes. The BER results show that the proposed channel codes perform better than the existing channel codes in the presence of noise and channel memory for different data rate regimes.
	\end{enumerate}
	
	\textit{Organization}:
	Section \ref{Sec 1} illustrates some basic definitions and notations used in this paper. 
	In Section \ref{Sec 2}, an MCvD system model and the channel characteristics are discussed in the presence of ISI. 
	Section \ref{Sec 3} describes ZPZS, ZP and LOZP code constructions. In Section \ref{Sec 5_1}, we have derived the closed-form expression of the expected ISI for the proposed codes followed by the encoding and decoding mechanisms in Section \ref{Sec 4}. Section \ref{Sec 5} compares the bounds of the code rate and also discusses the asymptotic code rates on the proposed codes. 
	Finally, in Section \ref{Sec 6}, the simulation results have been discussed for the proposed codes followed by the conclusion in Section \ref{sec 7}.
	
	\section{Notation and Preliminaries}\label{Sec 1}
	In this section, we give an overview of the notations and also discuss some basic definitions that have been used throughout this paper.
	
	A one dimensional array $\a$ = $a_1a_2\ldots a_n\in\mathbb{Z}_2^n$ of length $n$ over the binary alphabet $\mathbb{Z}_2=\{0,1\}$ is called a binary sequence, where $\in\mathbb{Z}_2^n$ is the set of all the binary sequences of length $n$.
	For two binary sequences $\a$ = $a_1a_2\ldots a_n$ of length $n$ and $\b$ = $b_1b_2\ldots b_m$ of length $m$, the binary sequence $\a\b$ = $a_1a_2\ldots a_nb_1b_2\ldots b_m$ of length $n+m$ is the concatenated sequence of the binary sequences $\a$ and $\b$. 
	For a binary sequence $\a$ of length $n$, the $nm$-length sequence $\a^m$ is obtained from $m$ times concatenation of sequence $\a$.  
	For a binary sequence $\a$ = $a_1a_2\ldots a_n$ and $1\leq i\leq j\leq n$, the sequence $\a(i,j)=a_ia_{i+1}\ldots a_j$ of length $j-i+1$ is called a sub-sequence of the sequence $\a$.
	Also, in this paper, we define an all-zero block with $i$ rows and $j$ columns as $\mathbf{0}_{i,j}$.
	Again, a two-dimensional array over the binary alphabet $\{0,1\}$ is called a binary matrix.
	For a binary sequence and binary matrix, the shifted sequence and shifted matrix can be defined as follows.
	\begin{definition}
		For any binary sequence $\a$ = $a_1a_2\ldots a_n$ of length $n$, the sequence $T(\a)$ = $a_2a_3\ldots a_na_1$ is called shifted binary sequence or shifted sequence. 
		Similarly, for any matrix $G$ with $k$ rows $\textbf{g}_i$ ($i = 1,2,\ldots,k$), the shifted matrix $T(G)$ is a matrix with $k$ rows $T(\textbf{g}_i)$.  
		\label{Def shifted}
	\end{definition}
	\begin{example}
		For the sequence $\a$ = $0 1 1 1$, the shifted sequence is $T(\a)$ = $1 1 1 0$.
		Similarly, for the matrix 
		$G = 
		\begin{bmatrix}
			0 & 1 & 1 & 0\\
			0 & 1 & 0 & 1
		\end{bmatrix}$, 
		the shifted matrix is $T(G) = 
		\begin{bmatrix}
			1 & 1 & 0 & 0\\
			1 & 0 & 1 & 0
		\end{bmatrix}$.
	\end{example}

	\begin{definition}\label{ZP_definition}
		A binary sequence $\textbf{c}$ = $c_1c_2\ldots c_n$ of length $n$ satisfies
		\begin{itemize}
			\item ZP constraint if the sequence $\textbf{c}$ is free from consecutive bit-1s, $i.e.$, for $i$ = $1,2,\ldots,n-1$, if $c_i = 1$ then $c_{i+1} = 0$, and
			\item ZS constraint if the binary sequence $\textbf{c}$ starts with bit-0,  $i.e.$, $c_1$ = $0$.
		\end{itemize}
		\label{Def ZS and ZE sequences}
	\end{definition}
	For example, the binary sequence $\a$ = $01001010101000$ of length $14$ satisfies both the ZP and ZS constraints.
	For any positive integers $n$ and $\mathcal{S}$, a set $\mathcal{C}\subseteq\mathbb{Z}_2^n$ of size $\mathcal{S}$ is called an ($n,\mathcal{S}$) binary code $\mathcal{C}$ or \textit{simply} ($n,\mathcal{S}$) code $\mathcal{C}$.
	In particular, if the binary code $\mathcal{C}$ is the row span of a full rank matrix $G$ with $k$ rows and $n$ columns then the code is called binary linear code with the generator matrix $G$ and the parameter $[n,k]$, where the size of the code is $\mathcal{S}$ = $2^k$.
	
	Again, for any ($n,\mathcal{S}$) code $\mathcal{C}$, the code $T(\mathcal{C})$ = $\{c_1c_2\ldots c_n:c_nc_1c_2\ldots c_{n-1}\in\mathcal{C}\}$ is called Shifted code.
	For example, the code $\mathcal{C}$ = $\{000,001,010,110,111\}$ and the shifted code $T(\mathcal{C})$ = $\{000,010,100,101,111\}$ are ($3,5$) binary codes. 
	\begin{proposition}
		For any ($n,\mathcal{S}$) code $\mathcal{C}$, the shifted code $T(\mathcal{C})$ is also an ($n,\mathcal{S}$) code. 
		\label{Prop Shifted code Parameters}
	\end{proposition}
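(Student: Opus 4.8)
The plan is to recognize the shift map $T$ as a bijection of the ambient space $\mathbb{Z}_2^n$ onto itself, and then to observe that $T(\mathcal{C})$ is nothing but the image of $\mathcal{C}$ under this bijection, so that the size is automatically preserved. First I would unwrap the set-builder definition: a sequence $\mathbf{c} = c_1 c_2 \ldots c_n$ lies in $T(\mathcal{C})$ exactly when $c_n c_1 c_2 \ldots c_{n-1} \in \mathcal{C}$. Setting $\mathbf{a} = c_n c_1 c_2 \ldots c_{n-1}$ and applying Definition~\ref{Def shifted} gives $T(\mathbf{a}) = c_1 c_2 \ldots c_n = \mathbf{c}$, so membership in $T(\mathcal{C})$ is equivalent to $\mathbf{c} = T(\mathbf{a})$ for some $\mathbf{a} \in \mathcal{C}$. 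Hence $T(\mathcal{C}) = \{ T(\mathbf{a}) : \mathbf{a} \in \mathcal{C}\}$, i.e.\ the per-element shift of $\mathcal{C}$.

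Next I would check that $T(\mathcal{C}) \subseteq \mathbb{Z}_2^n$. Since a cyclic shift merely permutes the coordinates of a length-$n$ binary sequence, every element of $T(\mathcal{C})$ is again a binary sequence of length $n$; this confirms that $T(\mathcal{C})$ is a code of block length $n$. The core step is then to argue $|T(\mathcal{C})| = \mathcal{S}$. I would establish injectivity of $T$ on $\mathbb{Z}_2^n$ by exhibiting its inverse, the right cyclic shift $\mathbf{c} = c_1 c_2 \ldots c_n \mapsto c_n c_1 \ldots c_{n-1}$; composing the two shifts in either order returns the original sequence, so $T$ is a bijection. A bijection restricts to a bijection between any subset and its image, whence $|T(\mathcal{C})| = |\mathcal{C}| = \mathcal{S}$, and therefore $T(\mathcal{C})$ is an $(n,\mathcal{S})$ code.

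There is essentially no real obstacle here; the only point demanding care is confirming that the set-builder description of $T(\mathcal{C})$ in the statement coincides with the image $\{T(\mathbf{a}) : \mathbf{a} \in \mathcal{C}\}$, after which injectivity of the shift—and hence preservation of cardinality—is immediate. Equivalently, one may phrase the entire argument in one line: $T$ is a fixed permutation of the finite set $\mathbb{Z}_2^n$, so it cannot collapse distinct codewords, and the image of an $\mathcal{S}$-element subset under a permutation again has exactly $\mathcal{S}$ elements.
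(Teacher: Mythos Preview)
Your argument is correct: once one checks that the set-builder definition of $T(\mathcal{C})$ agrees with the image $\{T(\mathbf{a}):\mathbf{a}\in\mathcal{C}\}$, the fact that the cyclic shift is a bijection on $\mathbb{Z}_2^n$ immediately gives $|T(\mathcal{C})|=\mathcal{S}$ and block length $n$. The paper itself states this proposition without proof, treating it as self-evident, so there is no alternative approach to compare against; your write-up simply spells out the obvious details carefully.
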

	\begin{proposition}
		For any [$n,k$] linear code $\mathcal{C}$ with the generator matrix $G$, the shifted code $T(\mathcal{C})$ is also an [$n,k$] linear code with the generator matrix $T(G)$. 
		\label{Prop Linear Shifted code Parameters}
	\end{proposition}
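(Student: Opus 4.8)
The plan is to recognize the left cyclic shift $T$ as a \emph{linear bijection} on $\mathbb{Z}_2^n$ and then let the standard image-of-a-subspace argument do the work. First I would record the two structural facts about $T$ that drive everything: the map $T:\mathbb{Z}_2^n\to\mathbb{Z}_2^n$, $T(a_1a_2\ldots a_n)=a_2a_3\ldots a_na_1$, is merely a permutation of the $n$ coordinates, hence it is $\mathbb{Z}_2$-linear (it satisfies $T(\mathbf{u}+\mathbf{v})=T(\mathbf{u})+T(\mathbf{v})$) and it is invertible, its inverse being the right cyclic shift $c_1\ldots c_n\mapsto c_nc_1\ldots c_{n-1}$.

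Next I would reconcile the set-builder definition of $T(\mathcal{C})$ with the direct image of $\mathcal{C}$ under $T$. By the definition of $T(\mathcal{C})$, a sequence $\mathbf{c}=c_1\ldots c_n$ lies in $T(\mathcal{C})$ exactly when $c_nc_1\ldots c_{n-1}\in\mathcal{C}$; since $\mathbf{c}=T(c_nc_1\ldots c_{n-1})$, this says precisely that $\mathbf{c}=T(\mathbf{x})$ for some $\mathbf{x}\in\mathcal{C}$. Hence $T(\mathcal{C})=\{T(\mathbf{x}):\mathbf{x}\in\mathcal{C}\}$ is the image of the subspace $\mathcal{C}$ under the linear map $T$, and is therefore itself a linear subspace of $\mathbb{Z}_2^n$.

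The core computation is then to identify the spanning set. Writing the rows of $G$ as $\mathbf{g}_1,\ldots,\mathbf{g}_k$, every $\mathbf{x}\in\mathcal{C}$ has the form $\mathbf{x}=\sum_{i=1}^{k}\lambda_i\mathbf{g}_i$ with $\lambda_i\in\mathbb{Z}_2$; applying $T$ and using linearity gives $T(\mathbf{x})=\sum_{i=1}^{k}\lambda_i\,T(\mathbf{g}_i)$. Thus $T(\mathcal{C})$ is exactly the set of $\mathbb{Z}_2$-linear combinations of $T(\mathbf{g}_1),\ldots,T(\mathbf{g}_k)$, which are by definition the rows of $T(G)$; so $T(\mathcal{C})$ is the row span of $T(G)$.

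Finally I would check the dimension count so that the parameters read $[n,k]$. Because $T$ is injective it carries linearly independent vectors to linearly independent vectors; as $G$ is full rank its rows $\mathbf{g}_i$ are independent, so the rows $T(\mathbf{g}_i)$ of $T(G)$ are independent and $T(G)$ has rank $k$. Equivalently, the bijectivity of $T$ forces $\dim T(\mathcal{C})=\dim\mathcal{C}=k$. Hence $T(G)$ is a full-rank $k\times n$ generator matrix and $T(\mathcal{C})$ is an $[n,k]$ linear code. The only point demanding genuine care, rather than routine linear algebra, is the bookkeeping in the second step, matching the right shift appearing inside the definition of $T(\mathcal{C})$ to the left shift $T$ acting on codewords; once the direct-image description is secured, linearity and invertibility of a coordinate permutation deliver the rest immediately.
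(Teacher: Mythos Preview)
Your argument is correct and complete: recognizing $T$ as a coordinate permutation (hence a $\mathbb{Z}_2$-linear bijection), reconciling the set-builder description of $T(\mathcal{C})$ with the direct image under $T$, and then invoking the standard image-of-a-subspace fact together with preservation of linear independence yields exactly the claim. The only delicate point you flagged, the left-versus-right shift bookkeeping, is handled cleanly.

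As for comparison, the paper states this proposition in its preliminaries without proof; it is treated as a routine fact and no argument is supplied. So there is nothing to compare against: your write-up simply fills in what the authors left implicit, and does so by the expected linear-algebra route.
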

	\begin{definition}
		If each codeword of an ($n,\mathcal{S}$) binary code satisfies 
		\begin{itemize}
			\item ZP constraint then the binary code is called ZP code,
			\item ZS constraint then the binary code is called ZS code, and 
			\item both ZP and ZS constraints then the binary code is called ZPZS code.
		\end{itemize}  
		If the binary code is linear then ZP code, ZS code and ZPZS code are called ZP linear code, ZS linear code and ZPZS linear code, respectively.
		\label{Def ZS and ZE code}
	\end{definition}
	\begin{example}
		For binary matrices
		$G_1 = \begin{bmatrix}
			0 & 0 & 0 & 1\\
			0 & 1 & 1 & 0
		\end{bmatrix},$ 
		$G_2 = 
		\begin{bmatrix}
			1 & 0 & 0 & 0\\
			0 & 0 & 1 & 0
		\end{bmatrix}$ 
		$\mbox{ and }
		G_3 = 
		\begin{bmatrix}
			0 & 1 & 0 & 0\\
			0 & 0 & 0 & 1
		\end{bmatrix},$
		the [$4,2$] binary linear codes with the generator matrix $G_1$ is a ZS linear code, with the generator matrix $G_2$ is a ZP linear code, and with the generator matrix $G_3$ is a ZPZS linear code, where the ZPZS code satisfies both ZP and ZS constraints simultaneously.		
		Also, the binary code $\mathcal{C}$ = $\langle G_3\rangle\cup\langle G_2\rangle$ is a ($4,8$) ZP code, where $\langle G\rangle$ is the row space of the binary matrix $G$.
	\end{example}
	\begin{lemma}
		For any ZPZS code $\mathcal{C}$, the shifted code $T(\mathcal{C})$ is the ZP code.
		\label{ZPZE to Zp lemma}
	\end{lemma}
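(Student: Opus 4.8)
The plan is to argue elementwise: I will pick an arbitrary codeword of $T(\mathcal{C})$ and check directly that it obeys the ZP constraint of Definition \ref{ZP_definition}. By the definition of the shifted code, a sequence $\mathbf{c} = c_1c_2\ldots c_n$ belongs to $T(\mathcal{C})$ precisely when its cyclic predecessor $\mathbf{d} = c_nc_1c_2\ldots c_{n-1}$ belongs to $\mathcal{C}$. Writing $\mathbf{d} = d_1d_2\ldots d_n$, this amounts to the relations $c_i = d_{i+1}$ for $i = 1,\ldots,n-1$ together with $c_n = d_1$; equivalently $\mathbf{c} = T(\mathbf{d})$. So every element of $T(\mathcal{C})$ is the shift of a ZPZS codeword, and it suffices to verify the ZP property for one such $\mathbf{c}$.

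First I would record the two facts the ZPZS hypothesis supplies about $\mathbf{d}$: the ZS constraint gives $d_1 = 0$, and the ZP constraint gives the implication $d_j = 1 \Rightarrow d_{j+1} = 0$ for each $j = 1,\ldots,n-1$. The target is the single implication $c_i = 1 \Rightarrow c_{i+1} = 0$ for all $i = 1,\ldots,n-1$, and I would split the verification into the interior indices and the lone wrap-around index. For the interior indices $i = 1,\ldots,n-2$, the substitution $c_i = d_{i+1}$, $c_{i+1} = d_{i+2}$ turns the required implication into the ZP implication for $\mathbf{d}$ at position $i+1$, which is a valid index since $i+1 \le n-1$; hence $c_i = 1$ forces $c_{i+1} = 0$. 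These cases are routine bookkeeping.

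The delicate step --- and the one I expect to be the main obstacle --- is the wrap-around pair $(c_{n-1}, c_n) = (d_n, d_1)$ produced by moving the leading symbol of $\mathbf{d}$ to the tail. The ZP property of $\mathbf{d}$ is silent here, since $d_n$ and $d_1$ are not adjacent within $\mathbf{d}$, so a naive argument could wrongly conclude that shifting might create a fresh pair of consecutive bit-$1$s. This is exactly where the ZS hypothesis earns its keep: because $c_n = d_1 = 0$, the implication $c_{n-1} = 1 \Rightarrow c_n = 0$ holds automatically, whatever the value of $c_{n-1}$.

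Combining the interior cases with this boundary case shows $\mathbf{c}$ satisfies the ZP constraint, and since $\mathbf{c}$ was an arbitrary element of $T(\mathcal{C})$, Definition \ref{Def ZS and ZE code} lets me conclude that $T(\mathcal{C})$ is a ZP code. I would finish by noting that Proposition \ref{Prop Shifted code Parameters} already guarantees $T(\mathcal{C})$ keeps the parameters $(n,\mathcal{S})$, so the shift genuinely converts a ZPZS code into a ZP code of the same size.
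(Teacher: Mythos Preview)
Your proof is correct and follows essentially the same approach as the paper: both argue that every element of $T(\mathcal{C})$ is the shift $T(\mathbf{d})$ of a ZPZS codeword $\mathbf{d}$, with the ZP property of $\mathbf{d}$ handling the interior adjacencies and the ZS fact $d_1=0$ disposing of the final pair. The paper's version is simply terser, omitting the explicit index-by-index case split that you spell out.
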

	\begin{proof}
		For any $\b$ = $b_1b_2\ldots b_n$ in $\mathcal{C}$, there exists a sequence $T(\b)$ = $b_2b_3\ldots b_n b_1$ in $T(\mathcal{C})$. 
		From the definition of the ZPZS sequence, $b_1=0$ and $b_2$ is not necessarily bit-0. Then, the sequence $T(\b)$ is a ZP sequence.
		Hence, the code $T(\mathcal{C})$ is a ZP code.
	\end{proof}
	If $\c(r)$ = $c_1(r)c_2(r)\ldots c_n(r)$ is the $r$-th codeword of length $n$ in an ($n,\mathcal{S}$) binary code $\mathcal{C}$, then the average density of bit-1 in the $t$-th position for the code is $\Delta_t(\mathcal{C}) = \frac{1}{\mathcal{S}}\sum_{r = 1}^{\mathcal{S}}c_t(r)$. Therefore, the average density of bit-1 of the ($n,\mathcal{S}$) code $\mathcal{C}$ is
	\begin{align}\label{average_bit1_density}
		\Delta (\mathcal{C}) = \frac{1}{n}\sum_{t = 1}^n\Delta_t(\mathcal{C}).
	\end{align}
	Note that, for any binary linear code $\mathcal{C}$, $\Delta_t(\mathcal{C}) = 0.5$, if the $t$-th bit in any of the codeword is a bit-1. 
	
	In the following definition, we propose LOZP sequence that depends on the distribution of ones in the codeword \cite{9840783}, which relaxes both the ZP and ZS constraints, respectively.
	\begin{definition}\label{lozp_seq}
		For any positive integers $n$ and $\tau$ ($<n$), consider any binary sequence $\c = c_1c_2\ldots c_{\tau}c_{\tau+1}\ldots c_{n}$ of length $n$. The sequence $\c$ is defined as a LOZP sequence if the following conditions hold:
		\begin{itemize} 
			\item for $i$ = $\tau,\tau+1,\ldots,n-1$, if $c_i = 1$ then $c_{i+1} = 0$, and
			\item the sub-sequence $\c(1,\tau)\in\mathbb{Z}_2^{\tau}$.
		\end{itemize}
	\end{definition}
	For example, $\a$ $=1110101$ is a LOZP sequence of length 7, where $\tau = 3$.
	\begin{definition}
		If each codeword of an ($n,\mathcal{S}$) binary code is a LOZP sequence then the code is called LOZP code. If the binary code is linear then it is defined as linear LOZP code.
	\end{definition}
	
	For example, the code $\mathcal{C} = \{1110101,1100101,0110001,$ $1010101\}$ is $(7,4)$ LOZP code with $\tau = 3$.
	
	\section{System Model}\label{Sec 2}
	This work considers a 3-dimensional diffusion-based molecular communication system in an unbounded environment with a point transmitter and a fully absorbing spherical receiver from \cite{6807659}. In this considered model, the point transmitter releases the molecules impulsively at the beginning of the symbol duration. Finally, a fraction of these molecules, propagated through a simple Brownian motion, gets absorbed by the receiver. Thus, the capture probability of the molecule until time $t$ is $F_{\mathrm{cap}}(t) = \frac {r_{0}}{d_{\mathrm{tr}}} \text {erfc}\!\left ({\!\frac {d_{\mathrm{tr}} - r_{0}}{\sqrt {4D^{\mathrm{ch}}t}}\!}\right)$ for $d_{\mathrm{tr}}>r_0$, where $D^{\mathrm{ch}}$ is the diffusion coefficient of the molecule in the channel, $d_{\mathrm{tr}}$ is the distance between the Tx to the centre of the Rx and $r_0$ is the radius of the Rx. 
	Thus, the channel coefficients are $p^{\mathrm{ch}}_{i} = 	F_{\mathrm{cap}}(it_{s})-F_{\mathrm{cap}}((i-1)t_{s})$ for $i =1,2,\ldots,L,\label{p_coef2}$ where $L$ denotes the channel memory with $t_s$ being the symbol time interval for the $i$-th bit. 
	The term $p^{\mathrm{ch}}_{i}$ is also defined as the capture probability of a molecule in the $i$-th symbol slot.
	
	Consider that a binary sequence $\c$ = $c_1c_2\ldots c_n$ of length $n$ is transmitted from the point Tx over the channel.
	The received bit at the $i$-th interval depends on the $i$-th transmitted bit $c_i$ as well as on the past $L$ bits, which is defined as the channel memory of length $L$.
	In this MCvD system, an On-Off keying modulation scheme is considered, where binary bit-1 is represented by transmitting a constant $M$ number of molecules and no molecules for binary bit-0 \cite{8412141}. 
	It is also assumed that the probability of transmitting binary bit-1 is equal to the probability of transmitting binary bit-0 in each symbol interval.
	Now, due to the probabilistic movement of the molecules, the molecules transmitted at the $k$-th slot can be received by the Rx at the $i$-th slot ($i\geq k$). 
	We denote this received number of molecules at the $i$-th slot by $M_{i,k}^{\text{Rx}}$, which follows a Binomial distribution \cite{8648429}.
	Consequently, the number of received molecules ($M_i^{\text{\scriptsize Rx}}$) at the end of $i$-th symbol from all the previous time slots is $M_i^{ \text{\scriptsize Rx}} = \sum_{k=1}^iM_{i,k}^{\text{Rx}}$. Therefore, for a large value of $M$, the received number of molecules at the $i$-th interval can be estimated by a Gaussian distribution \cite{8972472}
	\begin{align}\label{rx_molecule_dist}
		M^{\text{\scriptsize Rx}}_i\sim &\!\mathcal {N} \Big( \sum_{k=1}^{i}{M c_kp^{\mathrm{ch}}_{i-k+1}},\notag\\
		&\hspace{1.5cm}\sum_{k=1}^{i}{Mc_k p^{\mathrm{ch}}_{i-k+1} \left({1-p^{\mathrm{ch}}_{i-k+1}}\right)} \!+\! \sigma ^{2}_{n} \Big),
	\end{align}
	where the additive noise at the Rx follows a
	Gaussian distribution with mean $0$ and variance $\sigma^{2}_{n}$. 
	This Rx noise can be either environmental or counting noise and the authors in \cite{farsad2014channel, zhai2018anti} have shown that the Gaussian noise can fairly approximate the non-linearity observed in experimental setups for an MCvD system.
	This paper considers the following two scenarios: 
	\begin{enumerate}
		\item Without channel refresh: The receiver cannot erase the previously transmitted bits, resulting in a channel memory of length $L$ before the transmitted codeword. For example, if the length of both the first transmitted sequence $\c^{(1)} = c^{(1)}_1c^{(1)}_2\ldots c^{(1)}_n$ and the second transmitted sequence $\c^{(2)} = c^{(2)}_1c^{(2)}_2\ldots c^{(2)}_n$ are $n$, then the $n$-th bit of the second received sequence $\c^{(2)}$ will experience interference from the concatenated sequence $\c^{(1)}(2n-L,n)\c^{(2)}(1,n-1) = c^{(1)}_{2n-L}c^{(1)}_{2n-L+1}\ldots c^{(1)}_{n}c^{(2)}_1c^{(2)}_2\ldots c^{(2)}_{n-1}$ for $n<L<2n$.
		\item With channel refresh: In this scenario, after successfully receiving a message, the remaining molecules from the channel are removed. This leads to a channel memory length of $L = i-1$ for the $i$-th received bit at the Rx.
		This case considers the release of enzymes by the receiver that react with the persisting molecules in the environment \cite{noel2014improving}.
		For instance, acetylcholinesterase enzymes at neuromuscular junctions rapidly degrade acetylcholine molecules after their reception to avoid prolonged interference with subsequent signals, and eventually increases the channel capacity. 
		Note that this model is effective to analyze the ISI performance with bit-1 locations in the codeword for a given length and weight of the sequence.
	\end{enumerate}
To further analyze the ISI, consider a channel where the received $i$-th bit of the codeword $\c$ = $c_1c_2\ldots c_n \in \mathcal{C}$ is affected by the $L$ earlier bits denoted by $\hat{\mathbf{\c}} = \hat{c}_1\hat{c}_2\ldots \hat{c}_L$. For different memory lengths $L$, one can obtain the sequence $\mathbf{\hat{c}}$ as follows:
		\begin{itemize}
			\item Case 1 ($1\leq L\leq i-1$): In this case, $\hat{\mathbf{\c}} = \c(i-L,i-1)$, and thus, $\hat{{c}}_j = c_{i-L+j-1}$ for $j=1,2,\ldots, L$.
			This case also aligns with the channel with refresh, where $L = i-1.$
			\item Case 2 ($i-1<L\leq n+i-1$): In this case, $\hat{\mathbf{\c}} = \c^*(n-L+i,n)\c(1,i-1)$, and thus,        
			\begin{align}
				\hat{c}_j = 
				\begin{cases}
					c^*_{n-L+i+j-1} & \mbox{if }1\leq j\leq L-i+1,\\
					c_{j-L+i-1} & \mbox{if }L-i+1<j\leq L,
				\end{cases}
			\end{align}
			where $\c^*$ is any valid codeword in the code $\mathcal{C}$.
			\item Case 3 ($n+i-1<L$): In this case, $\hat{\mathbf{c}}$ = $\c^*(n
			\left\lfloor\frac{L}{n}\right\rfloor-L+i,n)\c^{*\left\lfloor\frac{L}{n}\right\rfloor-1}\c(1,i-1)$, where $\c^*\in\mathcal{C}$.    
			Therefore, 
			\begin{itemize}
				\item for $1\leq j\leq L-n\left\lfloor\frac{L-i+1}{n}\right\rfloor-i+1$, \\ $\hat{c}_j$ = $c^*_{n-L+n\left\lfloor\frac{L-i+1}{n}\right\rfloor+i+j-1}$,
				\item for $L-n\left\lfloor\frac{L-i+1}{n}\right\rfloor-i+1<j\leq L-i$, $\hat{c}_j$ = $c^*_{j*-n\left\lfloor\frac{j*}{n}\right\rfloor}$, where $j*=j-L+n\left\lfloor\frac{L-i+1}{n}\right\rfloor+i-1$,
				\item for $L-i+1\leq j\leq L$, $\hat{c}_j$ = $c_{j-L+1}$.
			\end{itemize}
			Also, ${\c^*}^{\lfloor\frac{L}{n}\rfloor-1}$ represents $\lfloor\frac{L}{n}\rfloor-1$ times concatenation of the sequence $\c^*$ and for a specific case if a sequence is transmitted repeatedly then $\mathbf{c}^* = \mathbf{c}$.
		\end{itemize}
		Therefore, the mean ISI on the $i$-th bit in the codeword $\c$ = $c_1c_2\ldots c_n$ of length $n$, generated from earlier sequence $\hat{\mathbf{\c}} = \hat{c}_1\hat{c}_2\ldots \hat{c}_L$, is a function of past symbols and probability co-efficient $p^{\mathrm{ch}}$ describing how many molecules are expected to reach the
		receiver, and subsequently this can be expressed as
		\begin{equation}\label{isi_sequence}
			\mathrm{ISI}^{L}_{i}=\sum\limits_{k=2}^{L+1}\hat{c}_{L-k+2}p^{\mathrm{ch}}_{k}, \mbox{ for }i = 1,2,\ldots,n.
		\end{equation}
	    
		Therefore, from \eqref{isi_sequence} and \cite{8972472}, the expected ISI, generated from the earlier sequence $\hat{\mathbf{c}}$ can be computed by considering all possible concatenated sequences $\c^*\c$. Hence, the expected ISI on the $i$-th bit in the codeword $\c$ for $i = 1, 2, \ldots, n$ is
		\begin{align}\label{expected_ISI_i}
			\mathbb{E}\left[\mathrm{ISI}^L_{i}\right] = 
			\begin{cases}
				\frac{1}{2}\sum\limits_{k=2}^{L+1}p^{\mathrm{ch}}_{k},&\mbox{for un-coded case}\\
				\frac{1}{\mathcal{S^*}}\sum\limits_{\hat{\mathbf{c}}\in\mathcal{C}^*}\sum\limits_{k=2}^{L+1}\hat{c}_{L-k+2}p^{\mathrm{ch}}_{k},&\mbox{for ($n,\mathcal{S}$) code}~ \mathcal{C},
			\end{cases}
		\end{align} 
		where $\mathcal{C}^*$ denotes the set $\{\c^*\c:\c\in\mathcal{C}\}$ of size $\mathcal{S}^*$.
	Please note that the expected ISI for the uncoded case is independent of the index $i$.
	Also, for a large number of all possible transmitted sequences, $\mathcal{S}^*$ approaches to the code size of the code $\mathcal{C}$.
	Now, from \eqref{expected_ISI_i}, we define the average ISI of any $(n,\mathcal{S})$ code $\mathcal{C}$ with the channel memory $L~(>n)$ as 
	\begin{align}\label{equation_avg_isi_code}
		\mathrm{ISI}_{\mathrm{avg}}(\mathcal{C}) = \frac{1}{n}\sum_{i = 1}^{n}\mathbb{E}\left[\mathrm{ISI}^L_{i}\right].
	\end{align}
	Note that with channel refresh and for $L = n$, the average ISI of the code can be denoted as $ \mathrm{ISI}_{\mathrm{avg}}(\mathcal{C}) = \frac{1}{n}\sum_{i = 2}^{n+1}\mathbb{E}\left[\mathrm{ISI}^n_{i}\right]$.
	We define the total ISI on the bit-0s in a codeword $\c=c_1c_2\ldots c_n$ as $\mathrm{ISI}^{L,0}(\c)$, i.e., 
	\begin{equation}
		\mathrm{ISI}^{L,0}(\c)=\sum_{\substack{i; \\ c_i=0 \mbox{ and } \\ i\in\{1,2,\ldots,n\}}}\mathrm{ISI}^L_i.
	\end{equation} 
	While the maximum ISI experienced by a bit-0 in the codeword $\c$ is 
	\begin{equation}
		\mathrm{ISI}^{L,0}_{\mathrm{max}}(\c)=\max\{\mathrm{ISI}^L_i:c_i=0\mbox{ and }i=1,2,\ldots,n\}.
	\end{equation}
	Therefore, if $c_i = 0$, then we define the $i$-th bit expected ISI in the code $\mathcal{C}$ as $\mathbb{E}[\mathrm{ISI}_i^{L,0}]$. 
	
	For a $(5,2)$ binary code $\mathcal{C} = \{00100,10100\}$, we first compute the expected ISI on the $i$-th bit of the code for a channel memory $L = 4$ in Table \ref{table_expected_isi_example} (without channel refresh). 
	Considering $k_1, k_2 = 1,2$, the first bit of the message sequence $\c^{(k_2)}$ experience the effect of ISI from the past $L = 4$ bits, $i.e.,$ the subsequence $\c^{(k_1)}(2,5)$ of length $L = 4$.
		While, the sequence $\c^{(k_1)}(1+i,5)\c^{(k_2)}(1,i-1)$ of length $L = 4$ affect the $i$-th bit of the message sequence $\c^{(k_2)} = c^{(k_2)}_1c^{(k_2)}_2\ldots c^{(k_2)}_5$ for $i =  2, 3, 4$. In this example, we consider codewords $\c^{(1)} = 00100$ and $\c^{(2)} = 10100$.
	
	\begin{table}[ht]
		\centering
		\caption{$\mathrm{ISI}_i^L$ for code $\mathcal{C} = \{00100,10100\}$ with $L = 4.$}
		\begin{tabular}{|p{0.02cm}|p{0.8cm}|p{0.8cm}|p{1.1cm}|p{1.1cm}|p{1.1cm}|p{1cm}|}
			\hline
			& \multicolumn{4}{c|}{$\mathrm{ISI}_i^L$} & & \\ \cline{2-6} 
			$i$ & $\c^{(1)}\c^{(1)}$ & $\c^{(2)}\c^{(1)}$ & $\c^{(1)}\c^{(2)}$ & $\c^{(2)}\c^{(2)}$ & $\mathbb{E}[\mathrm{ISI}_i^4(\mathcal{C})]$ & $\mathrm{ISI}_{\mathrm{avg}}(\mathcal{C})$ \\	\cline{1-7}
			$1$ & $p^{\mathrm{ch}}_4$ & $p^{\mathrm{ch}}_4$ & $p^{\mathrm{ch}}_4$ & $p^{\mathrm{ch}}_4$ & $p^{\mathrm{ch}}_4$ & $0.3p^{\mathrm{ch}}_2 + $\\	\cline{1-6}
			$2$ & $p^{\mathrm{ch}}_5$ & $p^{\mathrm{ch}}_5$ & $p^{\mathrm{ch}}_2+p^{\mathrm{ch}}_5$ & $p^{\mathrm{ch}}_2+p^{\mathrm{ch}}_5$ & $\frac{p^{\mathrm{ch}}_2+2p^{\mathrm{ch}}_5}{2}$ & $0.3p^{\mathrm{ch}}_3 +$ \\	\cline{1-6} 
			$3$ & $0$ & $0$ & $p^{\mathrm{ch}}_3$ & $p^{\mathrm{ch}}_3$ & $\frac{p^{\mathrm{ch}}_3}{2}$ & $0.3p^{\mathrm{ch}}_4 + $\\	\cline{1-6}
			$4$ & $p^{\mathrm{ch}}_2$ & $p^{\mathrm{ch}}_2$ & $p^{\mathrm{ch}}_2 + p^{\mathrm{ch}}_4$ & $p^{\mathrm{ch}}_2 + p^{\mathrm{ch}}_4$ & $\frac{2p^{\mathrm{ch}}_2 + p^{\mathrm{ch}}_4}{2}$ & $0.3p^{\mathrm{ch}}_5$\\	\cline{1-6}
			$5$ & $p^{\mathrm{ch}}_3$ & $p^{\mathrm{ch}}_3$ & $p^{\mathrm{ch}}_3 + p^{\mathrm{ch}}_5$ & $p^{\mathrm{ch}}_3 + p^{\mathrm{ch}}_5$ & $\frac{2p^{\mathrm{ch}}_3 + p^{\mathrm{ch}}_5}{2}$ & \\	\hline
		\end{tabular}
		\label{table_expected_isi_example}
	\end{table}
	
	We also compute the following parameters for a channel memory length of $L = 4$.
	\begin{itemize}
		\item Without channel refresh: it is assumed that before the transmission of the first message, the channel does not contain any existing molecules, and consequently, $\mathbf{0}_{1,4}$ is the channel memory for the sequence $\c^{(1)}$. Therefore
		\begin{itemize}
			\item $\mathrm{ISI}^{4,0}(\c^{(1)}) = p^{\mathrm{ch}}_2+p^{\mathrm{ch}}_3$. Then, as shown in Fig. \ref{fig_isi_effect}(b), we have $\mathrm{ISI}^{4,0}(\c^{(2)}) = 2p^{\mathrm{ch}}_2 + p^{\mathrm{ch}}_3 + p^{\mathrm{ch}}_4 + 2p^{\mathrm{ch}}_5$, where the previous bits are from $\c^{(1)}$, and hence the resultant sequence becomes $010010100$.
			\item $\mathrm{ISI}^{4,0}_{\mathrm{max}}(\c^{(1)}) = p^{\mathrm{ch}}_2$, and $\mathrm{ISI}^{4,0}_{\mathrm{max}}(\c^{(2)}) = p^{\mathrm{ch}}_2+p^{\mathrm{ch}}_4$ correspond to $c_4 = 0$ in both the sequences. Thus, $\mathbb{E}[\mathrm{ISI}_4^{4,0}(\mathcal{C})] = \frac{2p^{\mathrm{ch}}_2+p^{\mathrm{ch}}_4}{2}$.
		\end{itemize}
		\item With channel refresh: there is no \textit{a-priori} message before the current transmitted message. Therefore
		\begin{itemize}
			\item $\mathrm{ISI}^{4,0}(\c^{(1)}) = p^{\mathrm{ch}}_2+p^{\mathrm{ch}}_3$. Then, as shown in Fig. \ref{fig_isi_effect}(a), we observe that $\mathrm{ISI}^{4,0}(\c^{(2)}) = 2p^{\mathrm{ch}}_2 + p^{\mathrm{ch}}_3 + p^{\mathrm{ch}}_4 + p^{\mathrm{ch}}_5$.
			\item $\mathrm{ISI}^{4,0}_{\mathrm{max}}(\c^{(1)}) = p^{\mathrm{ch}}_2$ and $\mathrm{ISI}^{4,0}_{\mathrm{max}}(\c^{(2)}) = p^{\mathrm{ch}}_2+p^{\mathrm{ch}}_4$.
			\item $\mathrm{ISI}_{\mathrm{avg}}(\mathcal{C}) = 0.3p^{\mathrm{ch}}_2 + 0.3p^{\mathrm{ch}}_3 + 0.1p^{\mathrm{ch}}_4 +$ $ 0.1p^{\mathrm{ch}}_5$.
		\end{itemize}
	\end{itemize}
	Now, one can revisit the expression for expected ISI as given in  \eqref{expected_ISI_i} in Theorem \ref{ISI_density_theorem}.
	\begin{theorem}
		For any $(n,\mathcal{S})$ binary code $\mathcal{C}$ with bit-$1$ density $\Delta_i(\mathcal{C})$ ($i=1,2,\ldots,n$), 
		\begin{equation}\label{eq_isi_with_density}
			\mathbb{E}\left[\mathrm{ISI}^L_{i}(\mathcal{C})\right]=\sum_{k=2}^{L+1}\Delta_{i-k+1}(\mathcal{C})p^{\mathrm{ch}}_{k},
		\end{equation}
		where $\Delta_0$ = $\Delta_n$ and $\Delta_{-j}$ = $\Delta_{n-j}$ for any positive integer $j$. 
		\label{ISI_density_theorem}
	\end{theorem}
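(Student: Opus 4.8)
The plan is to start directly from the coded branch of the expected-ISI expression in \eqref{expected_ISI_i} and to convert the average over concatenated sequences into a position-by-position average, which is exactly what introduces the densities $\Delta_t(\mathcal{C})$. First I would interchange the two summations, so that the deterministic channel coefficient is pulled outside:
\begin{equation}
\mathbb{E}\left[\mathrm{ISI}^L_{i}(\mathcal{C})\right] = \sum_{k=2}^{L+1} p^{\mathrm{ch}}_{k}\left(\frac{1}{\mathcal{S}^*}\sum_{\hat{\mathbf{c}}\in\mathcal{C}^*}\hat{c}_{L-k+2}\right).
\end{equation}
Since $p^{\mathrm{ch}}_k$ does not depend on the earlier sequence $\hat{\mathbf{c}}$, the inner quantity is simply the average value of the single bit $\hat{c}_{L-k+2}$ taken over all admissible earlier sequences in $\mathcal{C}^*$.

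The central step is to identify which codeword position that bit occupies. Using the index bookkeeping already established in the Case~1 relation $\hat{c}_j = c_{i-L+j-1}$ and substituting $j = L-k+2$ gives $\hat{c}_{L-k+2} = c_{i-k+1}$, i.e. the bit lying $k-1$ slots before the $i$-th transmitted bit. Consequently, when $i-k+1\geq 1$ the inner average is precisely the average value of the bit in position $i-k+1$ over all codewords, which by the definition $\Delta_t(\mathcal{C}) = \frac{1}{\mathcal{S}}\sum_{r=1}^{\mathcal{S}} c_t(r)$ equals $\Delta_{i-k+1}(\mathcal{C})$. Substituting this back into the display above reproduces the claimed identity term by term.

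The main obstacle is the wrap-around regime $i-k+1\leq 0$, where the relevant bit no longer lies inside the current codeword $\c$ but inside a preceding block $\c^*$ of the concatenated transmission $\ldots\c^*\c^*\c$ (Cases~2 and 3). Here I would argue that, because the preceding blocks range over all codewords of $\mathcal{C}$ in the averaging set $\mathcal{C}^*$, the bit sitting $k-1$ positions before position $i$ occupies cyclic position $n+(i-k+1)$ of a full codeword, so its average over $\mathcal{C}^*$ is $\Delta_{n+(i-k+1)}(\mathcal{C})$. The periodic convention $\Delta_0 = \Delta_n$ and $\Delta_{-j} = \Delta_{n-j}$ is exactly what makes the single formula $\sum_{k=2}^{L+1}\Delta_{i-k+1}(\mathcal{C})p^{\mathrm{ch}}_{k}$ hold uniformly across all three cases, absorbing the block structure (including the repeated-concatenation terms of Case~3) into the cyclic indexing of the density. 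The one point that genuinely needs care is verifying that averaging over $\mathcal{C}^*$ decouples position-by-position, so that each wrapped position contributes its own density independently of $k$ and of the other positions; once this decoupling is justified, the proof reduces to the elementary index identity $\hat{c}_{L-k+2}\leftrightarrow c_{i-k+1}$ read modulo $n$.
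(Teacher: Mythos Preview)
Your approach is essentially the same as the paper's: interchange the two sums in \eqref{expected_ISI_i} so that $p^{\mathrm{ch}}_k$ factors out, and then identify the inner average $\frac{1}{\mathcal{S}^*}\sum_{\hat{\mathbf{c}}}\hat{c}_{L-k+2}$ as the bit-$1$ density $\Delta_{i-k+1}(\mathcal{C})$. The only difference is in how the wrap-around is handled: you do it explicitly by tracing the index $\hat{c}_{L-k+2}\leftrightarrow c_{i-k+1}$ through Cases~1--3 and invoking the cyclic convention $\Delta_{-j}=\Delta_{n-j}$, whereas the paper compresses this into the single remark that $\hat{\mathbf{c}}$ is a sub-sequence of an element of $\mathcal{C}^{\lceil L/n\rceil}$ and that the position-wise bit-$1$ densities of $\mathcal{C}$ and $\mathcal{C}^{\lceil L/n\rceil}$ coincide. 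Your version is more transparent about where the periodic extension of $\Delta$ actually enters, and you correctly flag the decoupling of the average across positions as the one step requiring justification; the paper simply asserts it.
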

	\begin{proof}
		For $(n,\mathcal{S})$ binary code $\mathcal{C}$, $\mathbb{E}\left[\mathrm{ISI}^L_{i}\right]$ is calculated in (\ref{expected_ISI_i}) using finite double summation. Then, by interchanging the summation, we have 
		\begin{align*}
			\mathbb{E}\left[\mathrm{ISI}^L_{i}(\mathcal{C})\right] = \sum\limits_{k=2}^{L+1}\left(p^{\mathrm{ch}}_{k}\left(\frac{1}{\mathcal{S}}\sum\limits_{\Hat{\mathbf{c}}\in\mathcal{C}}\hat{c}_{L-k+2}\right)\right).
		\end{align*}
		Note that the sequence $\hat{\c}$ is the sub-sequence of a sequence in  $\mathcal{C}^{\lceil\frac{L}{n}\rceil}$. Also, recall that the average bit-$1$ density for $\mathcal{C}$ and $\mathcal{C}^{\lceil\frac{L}{n}\rceil}$ are equal. Thus, $\Delta_{i-k+1}(\mathcal{C})=\frac{1}{\mathcal{S}}\sum_{\Hat{\mathbf{c}}\in\mathcal{C}}\hat{c}_{L-k+2}$. This completes the proof.
	\end{proof}
	Using Theorem \ref{ISI_density_theorem} and \eqref{equation_avg_isi_code}, one can observe that the average ISI of the code is a function of the code weight density. 
	Also the channel memory length $L$ is a function of the symbol duration ($t_s$) and the time required to reach some negligible hitting
		probability $\alpha$, $i.e.,$ $t_{\alpha}$, and therefore can be given as $L = \lceil{t_{\alpha}}/{t_s} \rceil$, where $t_{\alpha}$ can be determined from \cite[Eq. (5)]{10361887} 
		\begin{align}\label{eq_prob_t_alpha}
			\frac{r_0}{d_{\mathrm{tr}}}\left(\text{erfc}\left(\frac{d_{\mathrm{tr}}-r_0}{ \sqrt{4D^{\mathrm{ch}}\left(t_{\alpha}+t_{s}\right)}}\right)-\text{erfc}\left(\frac{d_{\mathrm{tr}}-r_0}{\sqrt{4D^{\mathrm{ch}} t_{\alpha}}}\right)\right)=\alpha
		\end{align}	
		for a given probability $\alpha$ and symbol duration $t_s$.
	\begin{figure}
		\centering
		\includegraphics[width = 1\linewidth]{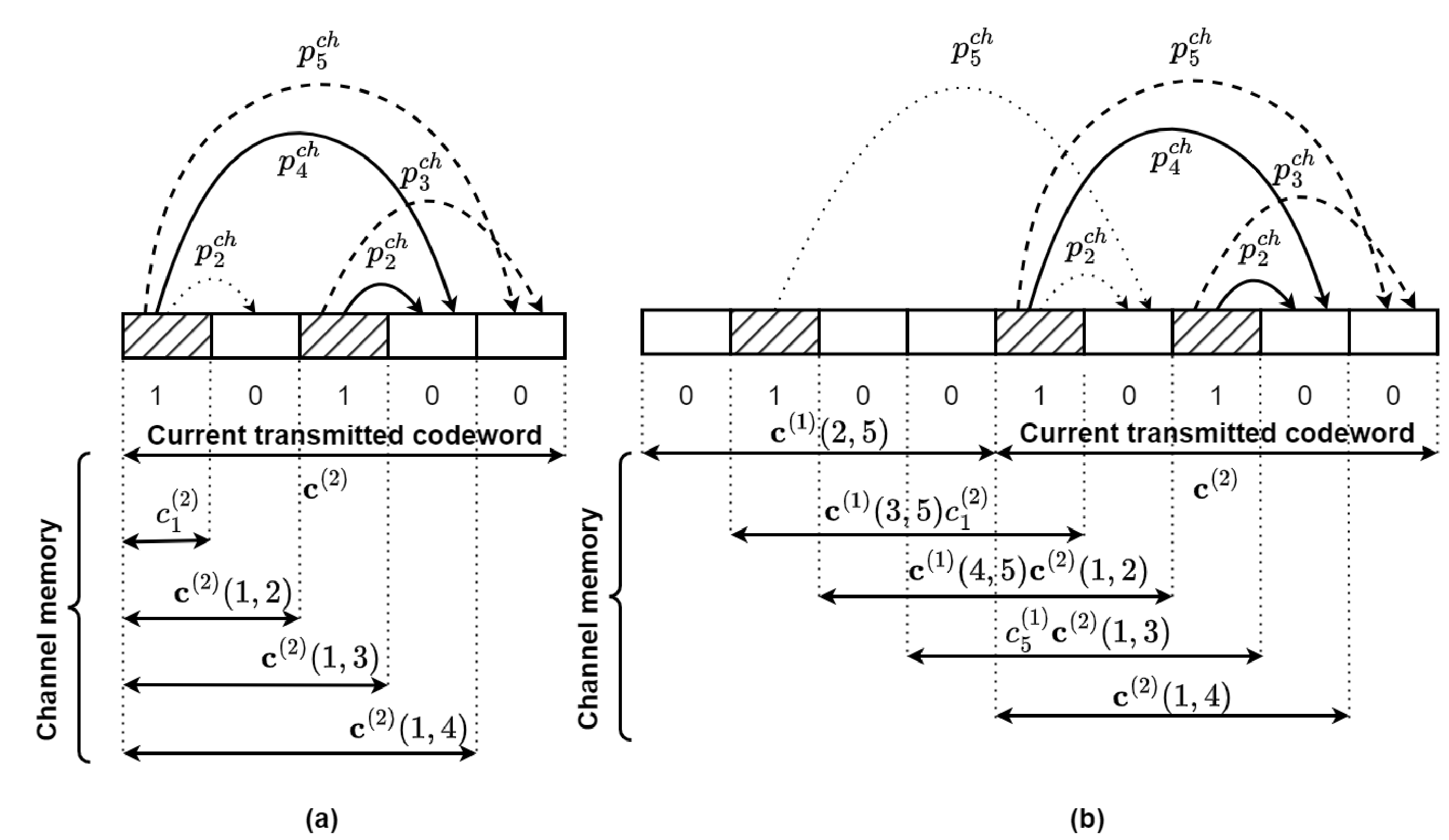}
		\caption{ISI effect on bit-0 (a) with channel refresh and (b) without channel refresh for channel memory $L = 4$.}
		\label{fig_isi_effect}
	\end{figure}
	
	In the demodulation technique, we use the threshold detector for the nano-networks to compare the number of received molecules with some pre-determined threshold at the receiver taken by the weighted sum detector \cite{6868273}. 
	And the threshold $\zeta$ is computed by sending the pilot symbols over the MCvD channel.
	Thus, in the molecular channel, for an un-coded case with channel memory $L$, the average probability of error can be given by \cite{7273857} 
	\begin{align}\label{prob_error}
		P_{\mathrm{e}} = \frac{1}{2^{L+1}}\left[\sum\limits_{j=1}^{2^L}\left(Q\left(\frac{\zeta-\mu_{0,L}[j]}{\sigma_{0,L}[j]}\right) + Q\left(\frac{\mu_{1,L}[j]-\zeta}{\sigma_{1,L}[j]}\right)\right)\right],
	\end{align}
	where $\mu_{k, L}[j]$ and $\sigma_{k, L}^2[j]$ are the mean and the variance of the variable $M_L^{\mathrm{Rx}}[j]$, respectively, corresponding to the $j$-th codeword.
	Here, $k = 0$ represents that the $L$-th transmitted symbol is a bit-0 and $k = 1$ if the $L$-th transmitted symbol is a bit-1. It is considered that the transmission probability of both the symbols bit-0 and bit-1 are equally likely. 
	However, the existing literature has no closed-form expression for the one-to-one relation between the average BER and expected ISI.
	From \eqref{prob_error}, we can observe that the average BER, at the optimum threshold, depends on the channel memory $L$, codewords and symbol duration $t_s$.
	Since ISI is also a function of the weight distribution in the code $\mathcal{C}$ \cite{nath2023novel} and 
		channel memory $L$ (or symbol duration $t_s$) \cite{10361887}, it follows that the average BER of the system will
		increase as the expected ISI increases.
	
	\section{Zero Pad Codes} \label{Sec 3}
	In this section, we construct a family of binary linear ZPZS codes and a family of binary linear ZP codes for the given code parameters. Further, we have improved code rate by considering the ZP code obtained by taking union of the linear ZP code and the linear ZPZS code. In addition, we also have constructed linear LOZP code by weakening the zero-padding constraint on some initial positions in certain codewords, further enhancing the code rate. The code constructions and their properties are discussed below.
	
	\subsection{Construction of Zero Pad Codes $\mathcal{C}_{d,q}$}
	In this section, we first construct a family of ZP codes, where for any codewords in the code, the minimum number of zeros between two consecutive bit-1s is always either 1 or multiple of any positive integer $d-1\ (>1)$. In Lemma \ref{constr_1} and Lemma \ref{ZPZE code parameters}, the construction of a generator matrix followed by a linear ZPZS code is proposed for a given positive integer $d$.
	
	For any prime integer $d$ ($\geq 2$) and positive integer $q$, consider a matrix $G_{d, q}$ with the initial condition $G_{d,0}$ = $[0\ 1]$, where
	\begin{align}
		G_{d, q} = 
		\begin{bmatrix}
			G_{d,q-1} & \mathbf{0}_{q,d}\\
			\mathbf{0}_{1,dq+1} & 1
		\end{bmatrix}.
		\label{G_n_1,gen}
	\end{align}
	To provide a comprehensive approach, we associated parameters $d$ and $q$ with the matrix notation $G_{d, q}$. 
	But, note that the matrix $G_{d, q}$ is constructed recursively over $q$ only.
	\begin{lemma}[Construction 1]\label{constr_1}
		For any prime integer $d$ ($\geq 2$) and positive integer $q$, any binary linear code $\mathcal{C}_{d,q}$ with the generator matrix $G_{d,q}$ (as given in (\ref{G_n_1,gen})) is an [$n,k$] linear code, where $n$ = $qd + 2$ and $k$ = $q+1$.
	\end{lemma}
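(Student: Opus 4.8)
The plan is to prove both claims---that $G_{d,q}$ has exactly $n = qd+2$ columns and full row rank $k = q+1$---by induction on $q$, since the matrix is defined recursively in $q$ alone with $d$ held fixed. Once full row rank is established, the code $\mathcal{C}_{d,q}$ generated by $G_{d,q}$ is by definition a binary linear $[n,k]$ code with the stated parameters and size $\mathcal{S} = 2^k$, so no separate size argument is needed. I note that the hypothesis that $d$ is prime plays no role in this particular lemma; it is a standing assumption reserved for the ZP/ZPZS properties of later constructions, and here only $d \geq 2$ being an integer matters for the dimension count.

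For the base case $q = 0$, the matrix $G_{d,0} = [0\ 1]$ has $2 = 0\cdot d + 2$ columns and a single nonzero row, hence rank $1 = 0+1$, matching $n = qd+2$ and $k = q+1$. For the inductive step, suppose $G_{d,q-1}$ is a $q \times ((q-1)d+2)$ matrix of rank $q$. The block form in \eqref{G_n_1,gen} places $G_{d,q-1}$ beside the zero block $\mathbf{0}_{q,d}$, so the top block has $((q-1)d+2)+d = qd+2$ columns and $q$ rows, while the bottom block $[\mathbf{0}_{1,dq+1}\ 1]$ has $(dq+1)+1 = dq+2$ columns and one row. Hence $G_{d,q}$ is a $(q+1)\times(qd+2)$ matrix, giving $k = q+1$ and $n = qd+2$ directly.

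It remains to verify that the $q+1$ rows are linearly independent over $\mathbb{Z}_2$, which is the heart of the argument. The key observation is that the top-right block is all zero, so each of the top $q$ rows is supported entirely within the first $(q-1)d+2$ columns and in particular vanishes in the last column $n = qd+2$. The bottom row, by contrast, carries its unique $1$ precisely in that last column. Consequently the bottom row cannot be written as any $\mathbb{Z}_2$-combination of the top $q$ rows; since those top rows are independent by the inductive hypothesis $\mathrm{rank}(G_{d,q-1}) = q$, all $q+1$ rows together are independent. Thus $G_{d,q}$ has full row rank $q+1$, completing the induction.

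The step I expect to require the most care is the column bookkeeping in the recursion: one must confirm that the top block $[G_{d,q-1}\ \mathbf{0}_{q,d}]$ and the bottom block $[\mathbf{0}_{1,dq+1}\ 1]$ share the common width $qd+2$, so that \eqref{G_n_1,gen} is a well-formed matrix, which reduces to the identity $((q-1)d+2)+d = (dq+1)+1$. The rank argument is then immediate from the pivot in the final column, with no interaction with the primality of $d$ or with the zero-padding structure at this stage.
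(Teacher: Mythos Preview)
Your proof is correct and follows essentially the same inductive approach on $q$ that the paper uses. If anything, you are more careful than the paper's own proof: the paper simply reads off the row and column counts from the recursion \eqref{G_n_1,gen} and declares $k = q+1$, whereas you explicitly verify full row rank via the pivot in the last column, which is formally required by the paper's definition of a linear code as the row span of a full-rank matrix.
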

	\begin{proof}
		For any [$n,k$] binary linear code $\mathcal{C}_{d,q}$, the result on the code parameters $n$ and $k$ can be proved using mathematical induction on $q$ ($\geq1$). 
		
		For $q$ = $0$, consider the code $\mathcal{C}_{d,0}$, where the code parameters are $n$ = $2$ and $k$ = $1$, and therefore, the base case is true.
		Now, for $q$ = $t$, consider the code $\mathcal{C}_{d,t}$.
		Assume that, for $\mathcal{C}_{d,t}$, the code parameters are $n$ = $td + 2$ and $k$ = $t+1$.
		For $q$ = $t+1$, from generator matrix $G_{d,t+1}$ (as given in (\ref{G_n_1,gen})), code length $n$ = $(td + 2)+d$ = $(t+1)d+2$, and message length $k$ = $(t+1)+1$. 
		Hence, the identity is also true for $q$ = $t+1$.
		
		Thus, from mathematical induction, the code parameters are $n$ = $qd + 2$ and $k$ = $q+1$.
	\end{proof}
	\begin{lemma}
		For any prime integer $d$ ($\geq 2$) and non-negative integer $q$, any binary linear code $\mathcal{C}_{d,q}$ with the generator matrix $G_{d,q}$ is an [$n,k$] ZPZS linear code.
		\label{ZPZE code parameters}
	\end{lemma}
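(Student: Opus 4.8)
The plan is to pin down the exact shape of the generator matrix $G_{d,q}$ and then read off both constraints from the supports of its rows. First I would prove by induction on $q$ that the $i$-th row $\mathbf{g}_i$ of $G_{d,q}$ is the length-$n$ unit vector $\e_{(i-1)d+2}$ (a single $1$ in column $(i-1)d+2$, zeros elsewhere) for $i=1,2,\ldots,q+1$. The base case $q=0$ is immediate, since $G_{d,0}=[0\ 1]=\e_2$. For the inductive step I would use the block form (\ref{G_n_1,gen}): the top $q$ rows are $G_{d,q-1}$ with $d$ zero columns appended on the right, and the bottom row is $[\mathbf{0}_{1,dq+1}\ 1]$. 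Appending $d$ zero columns leaves every existing $1$ fixed, because the $1$ in old row $i$ sits in column $(i-1)d+2\le d(q-1)+2$, which is within the $G_{d,q-1}$ block of width $d(q-1)+2$; and the new bottom row has its unique $1$ in column $dq+2=((q+1)-1)d+2$, matching the claimed formula. This establishes the arithmetic-progression structure of the $1$-positions, spaced $d$ apart.

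Once the row structure is in hand, the remainder is a short support argument. Since the rows have pairwise disjoint supports, an arbitrary codeword $\c=\sum_{i\in S}\mathbf{g}_i$ for some $S\subseteq\{1,\ldots,q+1\}$ has a $1$ exactly in the positions $\{(i-1)d+2:i\in S\}$ and $0$ everywhere else. For the ZS constraint, I would note that the smallest attainable $1$-position is $2$ (at $i=1$), so $c_1=0$ for every codeword, giving ZS. For the ZP constraint, I would observe that any two distinct $1$-positions differ by $|i-j|\,d\ge d\ge 2$, so no two $1$s occupy consecutive coordinates, giving ZP. Combining these with the linearity already established in Lemma \ref{constr_1} shows that $\mathcal{C}_{d,q}$ is an $[n,k]$ ZPZS linear code with the stated parameters.

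The only delicate point is the index bookkeeping in the inductive step, namely verifying that the right-padding by $d$ zeros leaves the old $1$-positions unchanged and that the new row's $1$ lands precisely at column $dq+2$; this is routine once the width of each block is tracked, so I anticipate no genuine obstacle. It is worth remarking that the argument uses only $d\ge 2$: the primality of $d$ inherited from the ambient construction plays no role in the ZPZS property itself and becomes relevant only for the spacing statements about consecutive bit-$1$s elsewhere.
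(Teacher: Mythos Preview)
Your proof is correct and follows essentially the same inductive strategy as the paper: both argue by induction on $q$ using the recursive block structure of $G_{d,q}$ in (\ref{G_n_1,gen}). Your version is somewhat more explicit than the paper's---you pin down each row as the unit vector $\e_{(i-1)d+2}$ and then give a clean disjoint-support argument showing that arbitrary linear combinations (not just the rows) satisfy ZP and ZS, whereas the paper states only that the rows inherit ZP and ZS and leaves the passage to general codewords implicit.
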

	\begin{proof}
		For an [$n,k$] binary linear code $\mathcal{C}_{d,q}$, properties of satisfying ZP and ZS constraints are proved using mathematical induction on $q$ ($\geq0$). 
		
		For $q$ = $0$, consider the linear code $\mathcal{C}_{d,0}$ = $\{00,01\}$, and it can be observed that ZP and ZS constraints are followed by the code $\mathcal{C}_{d,0}$.
		Thus, the base case is true.
		Now, for $q$ = $t$, assume that the linear code $\mathcal{C}_{d,t}$ (and therefore, each row of the generator matrix $G_{d,q}$) satisfies ZP and ZS constraints.
		Also, recall that $\textbf{0}_{a,b}$ is an all-zero block of $a$ rows and $b$ columns.
		Thus, for $q$ = $t+1$, each row of generator matrix $G_{d,t+1}$ (from (\ref{G_n_1,gen})) follows ZP and ZS constraints.
		Hence, from mathematical induction, the linear code satisfies ZP and ZS constraints, and thus, the code $\mathcal{C}_{d,q}$ is a ZPZS linear code.
	\end{proof}
	\begin{example}		\label{C_2,2 example}
		For $q = 2$ and $d$ = $2$, the code $\mathcal{C}_{2,2}$ =  $\left\{000000,000001,010000,000100,010100,010001,000101,\right.$ $\left.010101\right\}$ is a [$6,3$] ZPZS linear code with generator matrix 
		\begin{align*}
			G_{2, 2} = 
			\begin{bmatrix}
				0 & 1 & 0 & 0 & 0 & 0 \\
				0 & 0 & 0 & 1 & 0 & 0 \\
				0 & 0 & 0 & 0 & 0 & 1 \\			
			\end{bmatrix}.
		\end{align*}
	\end{example}
	\begin{example}
		For $q = 2$ and $d = 3$, the code $\mathcal{C}_{3,2}$ = $\{00000000$,	
		$01000000,00001000,00000001,01001000,01000001,00001$- 		
		$001,01001001\}$ is a [$8,3$] ZPZS linear code with the generator matrix 
		\\
		$\mbox{\hspace{1.5cm}}G_{3, 2} = 
		\begin{bmatrix}
			0 & 1 & 0 & 0 & 0 & 0 & 0 & 0\\
			0 & 0 & 0 & 0 & 1 & 0 & 0 & 0\\			
			0 & 0 & 0 & 0 & 0 & 0 & 0 & 1\\
		\end{bmatrix}$.
		\label{C_3,2 example}
	\end{example}
	
	In Lemma \ref{zp_constr}, we construct a family of linear ZP codes for a given positive integer $d$.
	\begin{lemma}\label{zp_constr}
		For any prime integer $d$ ($\geq 2$) and non-negative integer $q$, the binary code $T(\mathcal{C}_{d,q})$ is a [$qd + 2,q+1$] ZP linear code.
		\label{ZP code parameter}
	\end{lemma}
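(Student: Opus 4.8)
The plan is to assemble the claim directly from results already in hand, exploiting the fact that the shift operator $T$ is merely a cyclic coordinate permutation and therefore a linear isomorphism of $\mathbb{Z}_2^n$ that preserves both the dimension and the structural constraints I care about. There is essentially no new computation to perform; the work lies in checking that the three earlier ingredients apply to one and the same object and that their conclusions compose correctly.

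First I would fix the starting object and its parameters. By Lemma \ref{ZPZE code parameters}, for any prime $d\geq 2$ and non-negative integer $q$, the code $\mathcal{C}_{d,q}$ generated by $G_{d,q}$ is an $[n,k]$ ZPZS linear code, and by Lemma \ref{constr_1} its parameters are $n=qd+2$ and $k=q+1$. So $\mathcal{C}_{d,q}$ is simultaneously (i) linear of dimension $q+1$ and length $qd+2$, and (ii) ZPZS. These are precisely the two hypotheses needed to feed into the next two steps.

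Next I would control the parameters of the shifted code via Proposition \ref{Prop Linear Shifted code Parameters}. Applied to the $[qd+2,\,q+1]$ linear code $\mathcal{C}_{d,q}$ with generator matrix $G_{d,q}$, the proposition gives that $T(\mathcal{C}_{d,q})$ is again a $[qd+2,\,q+1]$ linear code, now with generator matrix $T(G_{d,q})$. The dimension $q+1$ is unchanged because $T$ is a bijection on $\mathbb{Z}_2^{n}$, hence it carries the row span of the full-rank matrix $G_{d,q}$ onto the row span of the still full-rank matrix $T(G_{d,q})$. Then I would supply the remaining structural property through Lemma \ref{ZPZE to Zp lemma}, which states that the shift of any ZPZS code is a ZP code; since $\mathcal{C}_{d,q}$ is ZPZS by the first step, $T(\mathcal{C}_{d,q})$ satisfies the ZP constraint. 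Combining the parameter count from Proposition \ref{Prop Linear Shifted code Parameters} with the ZP property from Lemma \ref{ZPZE to Zp lemma} yields that $T(\mathcal{C}_{d,q})$ is a $[qd+2,\,q+1]$ ZP linear code, as claimed.

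I do not expect a genuine obstacle, since the lemma is a repackaging of earlier results; the only point worth a moment's care is legitimacy of the composition, namely that Proposition \ref{Prop Linear Shifted code Parameters} (linearity and parameters) and Lemma \ref{ZPZE to Zp lemma} (the ZP constraint) are both being applied to the \emph{same} shifted code $T(\mathcal{C}_{d,q})$. This is valid because each earlier statement is phrased for an arbitrary code of the relevant type, and $\mathcal{C}_{d,q}$ qualifies for both simultaneously. If one wanted a fully self-contained argument avoiding Lemma \ref{ZPZE to Zp lemma}, the ZP property could instead be re-derived pointwise: for $\b=b_1b_2\ldots b_n\in\mathcal{C}_{d,q}$ one has $b_1=0$ by the ZS constraint, so the shifted word $T(\b)=b_2b_3\ldots b_nb_1$ inherits freedom from consecutive $1$s on positions $1$ through $n-1$ from the ZP constraint on $\b$, while the wrapped final coordinate is $b_1=0$ and cannot create a new adjacent pair—but invoking Lemma \ref{ZPZE to Zp lemma} is cleaner and is the route I would take.
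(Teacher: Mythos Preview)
Your proof is correct and follows essentially the same route as the paper: establish that $\mathcal{C}_{d,q}$ is a $[qd+2,q+1]$ ZPZS linear code, transfer the parameters to $T(\mathcal{C}_{d,q})$ via the shift-preserves-linearity proposition, and obtain the ZP property from Lemma~\ref{ZPZE to Zp lemma}. The only cosmetic difference is that the paper cites Proposition~\ref{Prop Shifted code Parameters} together with Definition~\ref{Def shifted}, whereas you invoke the linear version Proposition~\ref{Prop Linear Shifted code Parameters} directly; both amount to the same thing.
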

	\begin{proof}
		For any given prime integer $d$ ($\geq 2$) and given non-negative integer $q$, consider the binary code $\mathcal{C}_{d,q}$.
		From Proposition \ref{Prop Shifted code Parameters} and Definition \ref{Def shifted}, the parameters and generator matrix for the code $T(\mathcal{C}_{d,q})$ are [$qd + 2,q+1$] and $T(G_{d,q})$, respectively. 
		Now, from Lemma \ref{ZPZE to Zp lemma}, for the ZPZS code $\mathcal{C}_{d,q}$, the code $T(\mathcal{C}_{d,q})$ is a ZP code. 
		Hence, the code $T(\mathcal{C}_{d,q})$ is a [$qd + 2,q+1$] ZP linear code.
	\end{proof}
	\begin{example}
		For $q = 2$ and $d$ = $2$, code $T(\mathcal{C}_{2,2})$ =  $\{000000,100000,001000,000010,101000,100010,001010,$
		$ 101010\}$ is a [$6,3$] ZP linear code  with generator matrix 
		$\mbox{\hspace{2cm}}T(G_{2, 2}) = 
		\begin{bmatrix}
			1 & 0 & 0 & 0 & 0 & 0\\
			0 & 0 & 1 & 0 & 0 & 0\\
			0 & 0 & 0 & 0 & 1 & 0\\			
		\end{bmatrix}$.
		\label{T(C_2,2) example}
	\end{example}
	\begin{example}
		For $q = 2$, $d = 3$, $T(\mathcal{C}_{3,2})$ = $\{00000000,$ 
		$10000000,00010000,00000010, 10010000, 10000010, 00010$- $010,10010010\}$ is a [$8,3$] ZP linear code with matrix    
		\begin{align*}
			T(G_{3, 2}) = 
			\begin{bmatrix}
				1 & 0 & 0 & 0 & 0 & 0 & 0 & 0\\
				0 & 0 & 0 & 1 & 0 & 0 & 0 & 0\\			
				0 & 0 & 0 & 0 & 0 & 0 & 1 & 0\\
			\end{bmatrix}.
		\end{align*}
		\label{T(C_3,2) example}
	\end{example}
	In the following Lemma \ref{zp_non_linear_constr}, we construct a ZP code, which is the union of both the linear ZPZS code and the linear ZP code for the given code parameters.
	\begin{lemma}\label{zp_non_linear_constr}
		For any prime integer $d$ ($\geq 2$) and non-negative integer $q$, the binary code $\mathcal{C}_{d,q}\cup T(\mathcal{C}_{d,q})$ is a ($qd + 2,2^{q+2}-1$) ZP code.
	\end{lemma}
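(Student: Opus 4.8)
The plan is to verify three things about $\mathcal{C}_{d,q}\cup T(\mathcal{C}_{d,q})$: that it is a ZP code, that its length is $qd+2$, and that its size is $2^{q+2}-1$. The first two are essentially immediate. By Lemma \ref{ZPZE code parameters} the code $\mathcal{C}_{d,q}$ is ZPZS (in particular ZP), and by Lemma \ref{zp_constr} the code $T(\mathcal{C}_{d,q})$ is ZP; since every codeword of the union is a codeword of one of these two codes, every codeword avoids consecutive bit-$1$s, so the union is a ZP code. Both codes have length $qd+2$, so the union does as well. The only real content is the size count.

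For the size I would use inclusion--exclusion. Since $\mathcal{C}_{d,q}$ and $T(\mathcal{C}_{d,q})$ are both $[qd+2,\,q+1]$ linear codes (Lemma \ref{constr_1} and Lemma \ref{zp_constr}), each has exactly $2^{q+1}$ codewords, giving
\[
|\mathcal{C}_{d,q}\cup T(\mathcal{C}_{d,q})| = 2^{q+1}+2^{q+1}-|\mathcal{C}_{d,q}\cap T(\mathcal{C}_{d,q})| = 2^{q+2}-|\mathcal{C}_{d,q}\cap T(\mathcal{C}_{d,q})|.
\]
Thus the claim reduces to showing that the intersection has size exactly $1$; since both codes are linear they share the all-zero word $\mathbf{0}$, so it suffices to prove the intersection contains nothing else.

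The key step, and the main obstacle, is a support (positions of the bit-$1$s) argument. Unwinding the recursion in (\ref{G_n_1,gen}) from the base $G_{d,0}=[0\ 1]$, each step appends one new row whose single $1$ sits in the last coordinate while leaving the earlier rows' $1$s fixed; by induction on $q$ the rows of $G_{d,q}$ are exactly the unit vectors supported at positions $\{2+jd : j=0,1,\ldots,q\}$. Consequently every codeword of $\mathcal{C}_{d,q}$ has its bit-$1$s confined to the set $A=\{2+jd : 0\le j\le q\}$. Applying the shift $T$ moves a $1$ at coordinate $i\ge 2$ to coordinate $i-1$ (and no codeword of $\mathcal{C}_{d,q}$ has a $1$ in coordinate $1$, by the ZS property), so every codeword of $T(\mathcal{C}_{d,q})$ has its bit-$1$s confined to $B=\{1+jd : 0\le j\le q\}$.

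Finally I would observe that $A$ and $B$ are disjoint: an equality $2+jd=1+j'd$ would force $(j'-j)d=1$, which is impossible for $d\ge 2$. Hence any word lying in both $\mathcal{C}_{d,q}$ and $T(\mathcal{C}_{d,q})$ must have support contained in $A\cap B=\emptyset$, i.e. it is $\mathbf{0}$. Therefore $|\mathcal{C}_{d,q}\cap T(\mathcal{C}_{d,q})|=1$, and substituting back yields a union of size $2^{q+2}-1$, which completes the argument. I note in passing that only $d\ge 2$ is used in the disjointness step, not the primality of $d$.
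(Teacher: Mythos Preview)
Your proof is correct and follows the same approach as the paper: invoke Lemmas \ref{ZPZE code parameters} and \ref{ZP code parameter} for the ZP property and the length, then use inclusion--exclusion for the size. The paper's proof is a one-line appeal to inclusion--exclusion without justifying that $\mathcal{C}_{d,q}\cap T(\mathcal{C}_{d,q})=\{\mathbf{0}\}$; your support argument (positions $\{2+jd\}$ versus $\{1+jd\}$, disjoint since $d\ge 2$) fills in exactly the detail the paper omits, and your observation that primality is not needed is also correct.
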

	\begin{proof}
		The parameters of the code $\mathcal{C}_{d,q}\cup T(\mathcal{C}_{d,q})$ are followed from Lemma \ref{ZPZE code parameters}, Lemma \ref{ZP code parameter} and the property of inclusion-exclusion between the code $\mathcal{C}_{d,q}$ and the code $T(\mathcal{C}_{d,q})$.
	\end{proof}
	\begin{example}
		The code $\mathcal{C}_{3,2}\cup T(\mathcal{C}_{3,2})$ is the ($8,15$) ZP code, and the code $\mathcal{C}_{2,2}\cup T(\mathcal{C}_{2,2})$ is the ($6,15$) ZP code, where $\mathcal{C}_{2,2}$, $\mathcal{C}_{3,2}$, $T(\mathcal{C}_{2,2})$, and $T(\mathcal{C}_{3,2})$ are given in Example \ref{C_2,2 example}, Example \ref{C_3,2 example}, Example \ref{T(C_2,2) example} and Example \ref{T(C_3,2) example}, respectively.
	\end{example}
	\subsection{Construction of Zero Pad Codes $\mathcal{C}_q$}
	In this section, we propose a generalized construction of the ZP codes, where for any codewords in the code, the minimum number of zeros between two consecutive bit-1s can be any positive integer $d_i$. In the following Lemma \ref{constr2}, the generator matrix for the generalized linear ZPZS code is first given.
	\begin{lemma}[Construction 2]\label{constr2}
		For any given positive $q$ integers $d_1,d_2,\ldots,d_q$ ($\geq 2$), the binary linear code $\mathcal{C}_q$ with the generator matrix  
		
		\begin{align}
			G_q = 
			\begin{bmatrix}
				0 & 1 & \textbf{0}_{1,d_1-1} & 0     & \textbf{0}_{1,d_2-1} & \ldots & 0      & \textbf{0}_{1,d_q-1} & 0 \\
				0 & 0 & \textbf{0}_{1,d_1-1} & 1      & \textbf{0}_{1,d_2-1} & \ldots & 0      & \textbf{0}_{1,d_q-1} & 0 \\
				\vdots & \vdots         & \vdots & \vdots     & \vdots            & \ddots & \vdots         & \vdots & \vdots     \\
				0 & 0 & \textbf{0}_{1,d_1-1} & 0      & \textbf{0}_{1,d_2-1} & \ldots & 1      & \textbf{0}_{1,d_q-1} & 0 \\
				0 & 0 & \textbf{0}_{1,d_1-1} & 0      & \textbf{0}_{1,d_2-1} & \ldots & 0      & \textbf{0}_{1,d_q-1} & 1 
			\end{bmatrix}
			\label{G_n_d,gen}
		\end{align}
		is a $[2+\sum_{i=1}^qd_i,q+1]$ ZPZS code. 
		\label{Gen ZPZE code parameter}
	\end{lemma}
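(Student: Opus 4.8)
The plan is to verify the two code parameters and the ZPZS property for the code $\mathcal{C}_q$ generated by $G_q$ in~\eqref{G_n_d,gen}. First I would read off the dimensions of $G_q$ directly from its display: the matrix has $q+1$ rows, which immediately gives the message length $k = q+1$ provided the rows are linearly independent. To count the columns, observe that the $i$-th row consists of a leading pair of entries, then for each block $j=1,2,\ldots,q$ a zero-block $\mathbf{0}_{1,d_j-1}$ followed by a single bit, and finally a trailing entry. Summing the widths gives $2 + \sum_{j=1}^{q}\big((d_j-1)+1\big) = 2 + \sum_{j=1}^{q} d_j$, so $n = 2 + \sum_{i=1}^{q} d_i$ as claimed. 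This parallels the dimension argument in Lemma~\ref{constr_1}, except that here the block widths $d_j$ are allowed to vary, so the recursion over $q$ is replaced by a single direct column count.

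Next I would establish full row rank. Each row $\textbf{g}_i$ has its distinguished bit-$1$ in a distinct column: reading the construction, the single $1$ of row $i$ sits strictly to the right of the single $1$ of row $i-1$ (the rows are the successive ``staircase'' positions), and all other entries in those pivot columns are $0$. Hence the $q+1$ pivot columns are distinct unit columns, the rows are linearly independent over $\mathbb{Z}_2$, and $k = q+1$. This confirms $\mathcal{C}_q$ is an $[\,2+\sum_{i=1}^q d_i,\ q+1\,]$ linear code.

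For the ZPZS property I would argue structurally on the generator rows and then lift to all codewords by linearity, exactly as in the proof of Lemma~\ref{ZPZE code parameters}. Each row of $G_q$ begins with a $0$ (the leading column is all-zero), so every row satisfies the ZS constraint; and each row's unique $1$ is both preceded and followed by at least one $0$, because every pivot bit is separated from its neighbours by a nonempty zero-block $\mathbf{0}_{1,d_j-1}$ with $d_j\ge 2$ (so $d_j-1\ge 1$) and by the fixed zero entries shown between consecutive blocks. Thus each row is ZP. The key remaining step—and the one that deserves the most care—is showing these constraints are preserved under $\mathbb{Z}_2$-addition of rows: since the pivot $1$s of distinct rows occupy distinct columns and, crucially, no two pivots are adjacent (consecutive pivots are separated by a full block of width $d_j\ge 2$), any sum of a subset of rows still has all of its $1$s isolated, so no two consecutive positions can both become $1$. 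This separation-by-$d_j\ge2$ property is exactly what prevents carries-free addition from creating adjacent $1$s, and is the crux of the argument; the leading $0$ column guarantees ZS is inherited automatically. Combining linear independence, the column count, and this closure argument yields that $\mathcal{C}_q$ is a $[\,2+\sum_{i=1}^q d_i,\ q+1\,]$ ZPZS code, completing the proof.
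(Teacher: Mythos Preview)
Your proposal is correct and follows essentially the same approach as the paper's own proof, which simply observes that $G_q$ has $q+1$ rows, $2+\sum_{i=1}^q d_i$ columns, and an all-zero first column. You are in fact more thorough than the paper: you explicitly verify linear independence via the distinct pivot columns and you carefully argue that the ZP constraint survives arbitrary $\mathbb{Z}_2$-sums of rows because consecutive pivots are separated by $d_j\ge 2$ positions---points the paper leaves entirely implicit. One small slip: your verbal description of the row structure mentions ``a trailing entry'' after the $q$ blocks, but there is none (the final bit is already the single bit following $\mathbf{0}_{1,d_q-1}$); your arithmetic $2+\sum_j d_j$ is nonetheless correct, so this does not affect the argument.
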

	\begin{proof}
		From \eqref{G_n_d,gen}, the proof follows the fact that the number of rows and the number of columns of the matrix $G_q$ are $q+1$ and $2+\sum_{i=1}^qd_i$, respectively, and the first column of the matrix $G_q$ is an all-zero column.
	\end{proof}
	Lemma \ref{lemma_constr2_zp} gives the construction of the generalized linear ZP codes, which can be defined from Lemma \ref{constr2}.
	\begin{lemma}\label{lemma_constr2_zp}
		For any given positive $q$ integers $d_1,d_2,\ldots,d_q ~(\geq 2)$, the binary code $T(\mathcal{C}_q)$ is a $[2+\sum_{i=1}^qd_i,q+1]$ ZP linear code.
		\label{Gen ZP code parameter}
	\end{lemma}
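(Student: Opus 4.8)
The plan is to follow exactly the template already used for Lemma \ref{ZP code parameter}, since the generalized code $\mathcal{C}_q$ enjoys the same two structural features that the earlier argument consumed: it is linear, and it is ZPZS. The shift operator $T(\cdot)$ behaves uniformly on all linear codes, so nothing new is really needed beyond composing three results proved earlier. First I would invoke Lemma \ref{constr2}, which already certifies that $\mathcal{C}_q$ is a $[\,2+\sum_{i=1}^q d_i,\, q+1\,]$ ZPZS linear code with generator matrix $G_q$. This single citation supplies both the linearity and the ZPZS property that the remaining steps rely on.

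Next I would apply Proposition \ref{Prop Linear Shifted code Parameters} to $\mathcal{C}_q$. Writing $n = 2+\sum_{i=1}^q d_i$ and $k = q+1$, the proposition states that the shifted code $T(\mathcal{C}_q)$ of an $[n,k]$ linear code is again an $[n,k]$ linear code, now with generator matrix $T(G_q)$. This settles the length and dimension claims at once: the cyclic shift is a permutation of coordinate positions, so it preserves the block length, and it preserves the rank of the generator matrix, hence the dimension $k = q+1$.

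Finally, to promote \emph{linear code} to \emph{ZP linear code}, I would appeal to Lemma \ref{ZPZE to Zp lemma}: the image under $T$ of any ZPZS code is a ZP code. Concretely, for any codeword $\mathbf{b} = b_1 b_2 \cdots b_n \in \mathcal{C}_q$ we have $b_1 = 0$ (the first column of $G_q$ is all-zero), so $T(\mathbf{b}) = b_2 b_3 \cdots b_n b_1 = b_2 b_3 \cdots b_n 0$; deleting the guaranteed leading zero and appending a trailing zero to a sequence already free of consecutive bit-$1$s cannot create an adjacent $11$ pair, so $T(\mathbf{b})$ is ZP. Combining the three ingredients yields that $T(\mathcal{C}_q)$ is a $[\,2+\sum_{i=1}^q d_i,\, q+1\,]$ ZP linear code, which is the assertion.

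I do not expect any genuine obstacle, as the statement is essentially a corollary of Lemma \ref{constr2}, Proposition \ref{Prop Linear Shifted code Parameters}, and Lemma \ref{ZPZE to Zp lemma}; the only thing to check is that these three facts compose correctly, i.e. that the parameter-preserving shift and the ZPZS-to-ZP shift refer to the same operator $T$ applied to the same code $\mathcal{C}_q$. If a self-contained argument were preferred over citing Lemma \ref{ZPZE to Zp lemma}, the mildly technical point would be to verify the ZP property directly at the level of the generator matrix rather than codewords, using the all-zero first column of $G_q$ to control the wrap-around coordinate introduced by $T$; but since the codeword-level statement is already available, the cleaner route is simply to invoke it.
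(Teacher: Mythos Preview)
Your proposal is correct and follows essentially the same approach as the paper: invoke the ZPZS property of $\mathcal{C}_q$ from Lemma \ref{constr2}, use the shift proposition to carry over the parameters $[2+\sum_{i=1}^q d_i,\,q+1]$ and the generator matrix $T(G_q)$, and then apply Lemma \ref{ZPZE to Zp lemma} to upgrade ZPZS to ZP under the shift. The only cosmetic difference is that you cite Proposition \ref{Prop Linear Shifted code Parameters} (the linear version) whereas the paper cites Proposition \ref{Prop Shifted code Parameters} together with Definition \ref{Def shifted}; your choice is arguably the more apt one since linearity of $T(\mathcal{C}_q)$ is part of the claim.
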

	\begin{proof}
		From Proposition \ref{Prop Shifted code Parameters} and Definition \ref{Def shifted}, the parameters and generator matrix for the code $T(\mathcal{C}_q)$ are $[2+\sum_{i=1}^qd_i,q+1]$ and $T(G_q)$, respectively. 
		Now, from Lemma \ref{ZPZE to Zp lemma}, for the ZPZS code $\mathcal{C}_q$, the code $T(\mathcal{C}_q)$ is a ZP code. 
		So, the code $T(\mathcal{C}_q)$ is a $[2+\sum_{i=1}^qd_i,q+1]$ ZP linear.
	\end{proof}
	In Lemma \ref{lemma_constrt2_non_linear_zp}, we give the generalized construction of the ZP codes for the given code parameters. 
	\begin{lemma}\label{lemma_constrt2_non_linear_zp}
		For any given positive $q$ integers $d_1,d_2,\ldots,d_q ~(\geq 2)$, the binary code $\mathcal{C}_q\cup T(\mathcal{C}_q)$ is a $(2+\sum_{i=1}^qd_i,2^{q+2}-1)$ ZP code.
	\end{lemma}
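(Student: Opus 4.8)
The plan is to establish the three claimed features---length, size, and the ZP constraint---separately, relying on the parameters already supplied by Lemma \ref{constr2} and Lemma \ref{lemma_constr2_zp}. By those lemmas, both $\mathcal{C}_q$ and $T(\mathcal{C}_q)$ are $[2+\sum_{i=1}^q d_i,\,q+1]$ codes, so each has length $n=2+\sum_{i=1}^q d_i$ and size $2^{q+1}$. The length of the union is therefore $n$, and the ZP property is equally immediate: the ZP constraint is a condition on each codeword individually, so since every codeword of $\mathcal{C}_q$ (ZPZS, hence ZP) and every codeword of $T(\mathcal{C}_q)$ (ZP) avoids consecutive bit-$1$s, so does every codeword of the union.

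The substance of the lemma is the size $2^{q+2}-1$. By inclusion--exclusion, $|\mathcal{C}_q\cup T(\mathcal{C}_q)|=2^{q+1}+2^{q+1}-|\mathcal{C}_q\cap T(\mathcal{C}_q)|=2^{q+2}-|\mathcal{C}_q\cap T(\mathcal{C}_q)|$, so it suffices to show that the intersection is trivial, i.e. $|\mathcal{C}_q\cap T(\mathcal{C}_q)|=1$. Both codes are linear and hence share the all-zero codeword, so the real task is to rule out any nonzero codeword common to both.

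To this end I would describe the supports explicitly. Each row of $G_q$ in \eqref{G_n_d,gen} carries a single bit-$1$, located at the pivot position $p_j=2+\sum_{i=1}^{j-1}d_i$ for $j=1,2,\ldots,q+1$; since the rows have pairwise disjoint supports, every codeword of $\mathcal{C}_q$ has its bit-$1$s confined to $P=\{p_1,p_2,\ldots,p_{q+1}\}$. Because the first coordinate of every codeword of $\mathcal{C}_q$ is $0$ (the ZS property), applying the cyclic shift $T$ merely translates each pivot one position to the left, so every codeword of $T(\mathcal{C}_q)$ has its bit-$1$s confined to $P'=\{p_1-1,p_2-1,\ldots,p_{q+1}-1\}$. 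A codeword lying in both codes must then have support inside $P\cap P'$.

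The key step, and the point at which the hypothesis $d_i\geq 2$ is essential, is to check that $P\cap P'=\emptyset$. Suppose $p_j=p_k-1$ for some indices $j,k$; this forces $\sum_{i=1}^{k-1}d_i-\sum_{i=1}^{j-1}d_i=1$. For $k>j$ the left-hand side equals $\sum_{i=j}^{k-1}d_i\geq 2$, for $k=j$ it is $0$, and for $k<j$ it is negative, so it can never equal $1$. Hence $P\cap P'=\emptyset$, the only common codeword has empty support, and $|\mathcal{C}_q\cap T(\mathcal{C}_q)|=1$, which yields the size $2^{q+2}-1$. I expect this disjointness-of-pivots computation to be the crux; the length and ZP claims are bookkeeping, whereas it is here that the gap structure $d_i\geq 2$ of Construction~2 does the real work.
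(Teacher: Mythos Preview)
Your argument is correct and follows the same route as the paper: invoke Lemma~\ref{constr2} and Lemma~\ref{lemma_constr2_zp} for the length and the ZP property, then use inclusion--exclusion for the size. The paper's own proof is a one-line sketch that simply cites those two lemmas and ``the property of inclusion-exclusion,'' without verifying that $\mathcal{C}_q\cap T(\mathcal{C}_q)=\{\mathbf{0}_{1,n}\}$; that fact is only stated later (in Remark~\ref{isi_zp_codes}) without proof. Your explicit disjointness-of-pivots computation, showing that $P\cap P'=\emptyset$ because $d_i\ge 2$ forces $\sum_{i=j}^{k-1}d_i\neq 1$, is exactly the detail the paper omits, so your write-up is more complete but not methodologically different.
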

	\begin{proof}
		The parameters of the code $\mathcal{C}_q\cup T(\mathcal{C}_q)$ are followed from Lemma \ref{Gen ZPZE code parameter}, Lemma \ref{Gen ZP code parameter} and the property of inclusion-exclusion between the code $\mathcal{C}_q$ and the code $T(\mathcal{C}_q)$.
	\end{proof}
	\begin{example}\label{example_various_d}
		For $q$ = $2$, $d_1$ = $5$ and $d_2$ = $2$, the [$9,3$] ZPZS linear code with the generating matrix $G_2$ is $\mathcal{C}_2$ = $\{000000000,010000000, 000000100, 000000001, 010000100, $ 
		$010000001,000000101,010000101\}$, and the [$9,3$] ZP linear code with the generator matrix $T(G_2)$ is $T(\mathcal{C}_2)$ = $\{000000000,100000000,000001000, 000000010,$ $100001000, 100000010, 000001010, 100001010\}$, where 
		\begin{align*}
			& G_2 = 
			\begin{bmatrix}
				0 & 1 & 0 & 0 & 0 & 0 & 0 & 0 & 0\\
				0 & 0 & 0 & 0 & 0 & 0 & 1 & 0 & 0\\	
				0 & 0 & 0 & 0 & 0 & 0 & 0 & 0 & 1\\			
			\end{bmatrix},
			\mbox{ and }\\ \notag
			&T(G_2) = 
			\begin{bmatrix}
				1 & 0 & 0 & 0 & 0 & 0 & 0 & 0 & 0\\
				0 & 0 & 0 & 0 & 0 & 1 & 0 & 0 & 0\\	
				0 & 0 & 0 & 0 & 0 & 0 & 0 & 1 & 0\\			
			\end{bmatrix}.
		\end{align*}
		Also, the code $\mathcal{C}_2\cup T(\mathcal{C}_2)$ is a $(9,15)$ ZP code.
	\end{example}
	
	In addition to the codes based on zero padding constraints, we also propose a linear LOZP code in the following subsection. Note that a linear ZPZS code can achieve a maximum code rate of $0.5$ with $d = 2$, as constructed in Lemma \ref{constr_1}. Therefore, we need to properly choose the bit-1 locations in the generator matrix to achieve a higher code rate compared to linear ZP codes, and consequently, the placement of bit-1s within the codeword becomes an important metric for code design.
	
	\subsection{Construction of Leading One Zero Pad Codes}
	In this sub-section, we propose a family of linear LOZP codes for the given code parameters by relaxing both the ZP and ZS constraints.
	We first construct a family of linear LOZP codes, where for any codewords in the code, the minimum number of zeros between two consecutive bit-1s is always multiple of any positive integer $d~(\geq 2)$ after the $\tau$-th position. Observe that as the value of $\tau$ increases, we can allow more consecutive bit-1s at the initial positions of the codeword, thereby increasing the code rate at the expense of some ISI. In Lemma \ref{constr_3}, we discuss the code parameters for the proposed linear LOZP codes.
	
	\begin{lemma}[Construction 3]\label{constr_3}
		For any prime integer $d$ ($\geq 2$) and non-negative integer $q$ and $\tau$, any linear code $\mathcal{C}^{\tau}_{d,q}$ with the generator matrix $G^{\tau}_{d,q}$ satisfying the following recursive relation
		\begin{align}
			G^{\tau}_{d, q+1} = 
			\begin{bmatrix}
				G^{\tau}_{d,q} & \mathbf{0}_{\tau+q,d}\\
				\mathbf{0}_{1,d(q+1)+\tau-1} & 1
			\end{bmatrix},
			\label{G_n_1_lozp}
		\end{align}		
		is a $[qd + \tau, q+\tau]$ linear LOZP code with  the initial condition $G^{\tau}_{d,0}$ = $
		\mathbf{I}_{\tau}$, where $\mathbf{I}_{\tau}$ denotes an identity matrix of order $\tau$.
	\end{lemma}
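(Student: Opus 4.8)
The plan is to mirror the inductive arguments used for Lemma~\ref{constr_1} and Lemma~\ref{ZPZE code parameters}, establishing the two assertions—the parameter count $[qd+\tau,\,q+\tau]$ and the LOZP property—separately by induction on $q\ (\geq 0)$, with base case $G^{\tau}_{d,0}=\mathbf{I}_{\tau}$.

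For the parameters the base case is immediate, since $\mathbf{I}_{\tau}$ is a $\tau\times\tau$ full-rank matrix giving $n=\tau=0\cdot d+\tau$ and $k=\tau=0+\tau$. For the inductive step I would read off from \eqref{G_n_1_lozp} that passing from $G^{\tau}_{d,q}$ to $G^{\tau}_{d,q+1}$ appends $d$ columns and one new bottom row, so the column count grows by $d$ and the row count by $1$, yielding length $(q+1)d+\tau$ and $q+1+\tau$ rows. Full rank is preserved because the new bottom row is the only row with a nonzero entry in the last column—the block $\mathbf{0}_{\tau+q,d}$ forces zeros there in every older row—so it is linearly independent of the rows of $G^{\tau}_{d,q}$, which are independent by hypothesis; hence $k=q+1+\tau$.

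For the LOZP property the key structural fact I would record first is the location of the single $1$ in each row: unwinding the recursion shows that rows $1,\dots,\tau$ carry their $1$ at positions $1,\dots,\tau$ (the identity block), while the $j$-th appended row carries its single $1$ at position $jd+\tau$ for $j=1,\dots,q$. These positions are pairwise distinct, so every codeword has its $1$s precisely at a chosen subset of $\{1,\dots,\tau\}\cup\{d+\tau,2d+\tau,\dots,qd+\tau\}$. The condition $\c(1,\tau)\in\mathbb{Z}_2^{\tau}$ of Definition~\ref{lozp_seq} is automatic, so only the run-length constraint for indices $i\geq\tau$ remains. Here I would observe that, because $d\geq 2$, every gap between consecutive admissible $1$-positions that are $\geq\tau$—namely from $\tau$ to $d+\tau$ and from $jd+\tau$ to $(j+1)d+\tau$—equals $d\geq 2$; in particular positions $\tau+1,\dots,d+\tau-1$ are never $1$. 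Hence no two positions $\geq\tau$ are simultaneously $1$, which is exactly the required constraint. The same conclusion follows inductively: a codeword of $\mathcal{C}^{\tau}_{d,q+1}$ is either a codeword of $\mathcal{C}^{\tau}_{d,q}$ padded by $\mathbf{0}_{1,d}$, or that padded word with its last bit flipped to $1$, and in the latter case the new $1$ at position $(q+1)d+\tau$ is shielded by the $d-1\geq1$ preceding padded zeros.

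The points demanding care are the seam created at each recursion step and the degenerate base case. At the seam one must confirm that the newly placed $1$ abuts neither the last bit of the shorter codeword nor any earlier $1$; this is precisely what the $d-1\geq1$ trailing zeros guarantee, so the hypothesis $d\geq2$ is essential (for $d=1$ the construction would permit consecutive $1$s). For $q=0$ the length equals $\tau$, so the index range $i=\tau,\dots,n-1$ in Definition~\ref{lozp_seq} is empty and the LOZP condition holds vacuously; the first genuinely constrained case is $q=1$, where $n=d+\tau>\tau$, and the argument above applies verbatim. Note that the primality of $d$ plays no role in either assertion.
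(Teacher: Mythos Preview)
Your proposal is correct and follows essentially the same approach as the paper, which simply states that the proof follows similar arguments from the proof of Lemma~\ref{constr_1}; you have spelled out the induction on $q$ and the LOZP verification that the paper leaves implicit. One small wording slip: where you write ``no two positions $\geq\tau$ are simultaneously $1$'' you mean no two \emph{adjacent} such positions, but this is clear from the surrounding argument.
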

	\begin{proof}
		The proof follows similar arguments from the proof of Lemma \ref{constr_1}.
	\end{proof}
	\begin{example}
		For $\tau = 2$, $q = 1$ and $d = 3$, the code $\mathcal{C}^{2}_{3,1}$ = $\{00000,10000,01000,00001,11000,10001,01001,11001\}$ is a [$6,3$] LOZP linear code with the generator matrix
		\begin{align*}
			&G^{2}_{3, 1} = 
			\begin{bmatrix}
				1 & 0 & 0 & 0 & 0\\
				0 & 1 & 0 & 0 & 0\\
				0 & 0 & 0 & 0 & 1\\			
			\end{bmatrix}.
		\end{align*}
		\label{lozp_d2_example}
	\end{example}
	
	We propose a generalized construction of the LOZP codes, where for any codewords in the code, the minimum number of zeros between two consecutive bit-1s after the $\tau$-th position can be any positive integer $d_i ~(\geq 2)$.
	\begin{lemma}[Construction 4]\label{constr4}
		For any given positive $q$ integers $d_1\geq d_2\geq \ldots \geq d_q$ ($\geq 2$), the binary linear code $\mathcal{C}^{\tau}_q$ with the generator matrix  
		\begin{align}
			G^{\tau}_q = 
			\begin{bmatrix}
				\begin{array}{cc}
					\mathbf{I}_{\tau} &  \mathbf{0}_{\tau,\sum_{i=1}^qd_i} \\
					\begin{array}{ccc}
						\mathbf{0}_{1,\tau-1+\sum_{i=1}^1d_i} & 1 &  
					\end{array} & \mathbf{0}_{1,\sum_{i=2}^qd_i} \\
					\begin{array}{ccc}
						\mathbf{0}_{1,\tau-1+\sum_{i=1}^2d_i} & 1 &  
					\end{array}
					& \mathbf{0}_{1,\sum_{i=3}^qd_i} \\ 
					\vdots & \vdots \\
					\mathbf{0}_{1,\tau-1+\sum_{i=1}^qd_i} & 1
				\end{array}
			\end{bmatrix}
			\label{G_n_d_gen_lozp}
		\end{align}
		is a $[\tau+\sum_{i=1}^qd_i,q+\tau]$ LOZP code. 
		\label{Gen_lozp_construction}
	\end{lemma}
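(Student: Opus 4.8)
The plan is to follow the direct, structural argument used for Lemma \ref{constr2} (Construction 2), since Construction 4 is the LOZP analogue of that generalized ZPZS construction, just as Construction 3 was the LOZP analogue of Construction 1. I would split the argument into two parts: first establishing the claimed parameters $[\tau+\sum_{i=1}^q d_i,\,q+\tau]$ by direct inspection of $G^{\tau}_q$, and then verifying that every codeword is a LOZP sequence in the sense of Definition \ref{lozp_seq}.

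For the parameters, I would simply count the block structure in \eqref{G_n_d_gen_lozp}: the matrix $G^{\tau}_q$ has $\tau+q$ rows (the $\tau$ rows of $\mathbf{I}_{\tau}$ together with the $q$ single-$1$ tail rows) and $\tau+\sum_{i=1}^q d_i$ columns. Full rank is immediate because each row carries its unique $1$ in a column occupied by no other row: the identity block covers columns $1,\ldots,\tau$, while the $j$-th tail row places its single $1$ at column $p_j = \tau+\sum_{i=1}^j d_i$. Hence $k=q+\tau$ and $n=\tau+\sum_{i=1}^q d_i$, and the positions $p_1<p_2<\cdots<p_q$ are strictly increasing.

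For the LOZP property, the key observation is that, beyond the identity block, the only columns containing a $1$ in any row are exactly $p_1,\ldots,p_q$, with $p_1-\tau=d_1\geq 2$ and $p_{j+1}-p_j=d_{j+1}\geq 2$. Any codeword $\c$ is a $\mathbb{Z}_2$-linear combination of the rows, so $c_i=0$ for every $i>\tau$ with $i\notin\{p_1,\ldots,p_q\}$. Since consecutive eligible positions differ by at least $2$, whenever $c_{p_j}=1$ the following position $p_j+1$ is ineligible and therefore $0$; the same argument at the boundary shows that $c_{\tau}=1$ cannot be followed by a $1$ because $p_1\geq\tau+2$. This verifies the run-length condition of Definition \ref{lozp_seq} for all $i=\tau,\tau+1,\ldots,n-1$ (note that $p_q=n$ need not be checked), while the identity block leaves $\c(1,\tau)$ entirely free so that $\c(1,\tau)\in\mathbb{Z}_2^{\tau}$ holds trivially.

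The only point requiring care, which I would flag as the main though mild obstacle, is the transition at position $\tau$ between the unconstrained prefix and the zero-padded tail: one must confirm both that no identity-block row contributes a $1$ beyond column $\tau$ and that $p_1\geq\tau+2$, so that a $1$ in position $\tau$ never abuts a $1$ in position $\tau+1$. Observe that only the bound $d_i\geq 2$ is used here; the monotonicity $d_1\geq\cdots\geq d_q$ plays no role in the LOZP property itself and is needed only for the subsequent weight and ISI characterization of the code.
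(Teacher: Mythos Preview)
Your proposal is correct and follows essentially the same approach as the paper: the paper's proof simply refers back to the arguments of Lemma \ref{constr2} and Lemma \ref{lemma_constr2_zp}, i.e., direct inspection of the block structure of the generator matrix to read off the parameters and verify the defining constraint. Your write-up is in fact more detailed than the paper's one-line referral, and your explicit treatment of the boundary at position $\tau$ and the remark that the ordering $d_1\geq\cdots\geq d_q$ is irrelevant to the LOZP property itself are useful observations the paper leaves implicit.
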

	\begin{proof}
		The proof follows similar arguments from the proofs of Lemma \ref{constr2} and Lemma \ref{lemma_constr2_zp}.
	\end{proof}
	
	\begin{example}\label{lozp_generalized_example}
		For $\tau = 2$, $q = 2$, $d_1$ = $3$ and $d_2$ = $2$, the code $\mathcal{C}^2_2$ = $\{0000000,1000000,0100000,0000100,0000001,1100000, $ $1000100,1000001,0100100,0100001,0000101,1100100, 110$ $0001,1000101,0100101,1100101\}$, is a [$7,4$] LOZP linear code with the generating matrix 
		\begin{align*}
			G^2_2 = 
			\begin{bmatrix}
				1 & 0 & 0 & 0 & 0 & 0 & 0\\
				0 & 1 & 0 & 0 & 0 & 0 & 0\\
				0 & 0 & 0 & 0 & 1 & 0 & 0\\	
				0 & 0 & 0 & 0 & 0 & 0 & 1\\			
			\end{bmatrix}.
		\end{align*}
	\end{example}
	
	Observe that, for any given integer $d$ ($\geq2$), the maximum weight codeword of the code $\mathcal{C}_{d,q}$ have multiple of $d-1$ number of bit-$0$s in between any two bit-$1$s. 
	Then the property of a uniform number of zeros between two bit-$1$s is weakened and obtained the code $\mathcal{C}_q$ for given $d_i$ ($\geq 2$) $i=1,2,\ldots,q$. In this case, one can observe the following.
	\begin{itemize}
		\item If $d_1\geq d_2\geq\ldots\geq d_q\geq2$, then the bit-1s in the maximum weight sequence of the code $\mathcal{C}_q$ are pushed towards the $n$-th position. Such sequences are defined as Ones-at-end-position (OEP) sequences \cite{9840783}. Therefore, the code containing these sequences is defined as OEP code.
		Note that there can be consecutive bit-1s towards the end positions of the codewords in an OEP code.
		
		\item If $d_1\geq d_2\geq\ldots\geq d_j\geq2$ and $2\leq d_j\leq d_{j+1}\leq\ldots\leq d_q$, then the bit-1s are pushed towards the $j$-th position for some $j$ in between 1 and $n$ in the maximum weight sequence of the code $\mathcal{C}_q$. Such sequences are defined as Ones-at-middle-position (OMP) sequences in \cite{9840783}, and therefore, the code containing these sequences can be named  as OMP code.
		Note that there can be consecutive bit-1s towards the middle positions of the codewords in an OMP code.
	\end{itemize}
	Now, we calculate the expected ISI for the ZP code $\mathcal{C}_q$ and LOZP code $\mathcal{C}^{\tau}_q$ in Section \ref{Sec 5_1} as follows.
	
	\section{Expected ISI Computation}\label{Sec 5_1}
	In this section, we deduce the analytical expected ISI expression on the last bit of any codes $\mathcal{C}_q$ and $\mathcal{C}^{\tau}_q$. However, messages can be transmitted in any order in any communication system. Therefore, we compute the expected ISI over a large number of transmitted messages and show that the simulated result exactly matches the analytical ISI expression for a given channel memory. We also derive the analytical expressions of expected ISI for the proposed ZP codes. Subsequently, we compute this expression for a LOZP code and compare the ISI performance with linear OMP code and linear OEP code.
	
	\subsection{Expected ISI for ZP codes}
	In this section, we compute the analytical ISI expressions for the family of ZP codes with given parameters. To evaluate the ISI impact on the current bit considering the channel memory, we first need to determine the average density of bit-1s for the family of ZP codes.
	For any code if $\c$ is the maximum weight sequence of length $n$ then the sequence $\r$ = $r_1r_2\ldots r_L$ = $\c^{\lceil L/n\rceil}(n\lceil L/n\rceil-L,n\lceil L/n\rceil)$ and $D$ is the set of all indices of bit-$1$s in the maximum weighted sequence $\r$. 
	Since we are interested in all the possible positions of bit-$1$s in the sequence $\r$, and thus, we have considered the maximum weight sequence. 
	In Example \ref{example_various_d}, for the code $\mathcal{C}_2$, the maximum weight sequence is $\c$ = $010000101$. Now, consider the following three cases. 
	\begin{enumerate}
		\item For $L=27$ ($L\geq n$ and $L$ is multiple of $n$), $\r$ = $\c^3$ and $D$ = $\{2,7,9,11,16,18,20,25,27\}$.
		\item For $L=22$ ($L>n$ and $L$ is not multiple of $n$), $\r$ = $0101010000101010000101$ and $D$ = $\left\{2,4,6,11,13,15,\right.$ $\left.20,22\right\}$.
		\item For $L=6$ ($L<n$), $\r$ = $000101$ and $D$ = $\{4,6\}$. 
	\end{enumerate} 
	The set $D$ depends on the code, and thus the set of all indices of bit-$1$s in the maximum weighted sequence $\r$ is denoted by $D_1$ for ZPZS code $\mathcal{C}_q$, $D_2$ for ZP code $T(\mathcal{C}_q)$ and $D_3$ for code $\mathcal{C}_q\cup T(\mathcal{C}_q)$.  
	These sets $D_1$, $D_2$ and $D_3$ are calculated in Lemma \ref{ZPZS_sequence_weight}, Remark \ref{ZP_sequence_weight} and Remark \ref{ZP_and_ZPZS_sequence_weight} as follows. 
	\begin{lemma}
		For any given positive $q$ integers $d_1,d_2,\ldots,d_q$ ($\geq 2$), and channel memory $L$, consider the $[n,q+1]$ ZPZS code $\mathcal{C}_q$ with $n=2+\sum_{i=1}^qd_i$. 
		For a given channel memory $L$, if the transmitted $i$-th bit is bit-1, i.e. $c_i=1$, then $i\in D_1$ where $D_1$ is as follows.
		\begin{itemize}
			\item For $L<n$, 
			\begin{align}
				D_1=
				\begin{cases}
					\{1,1+\sum\limits_{m=1}^{\tau}d_i:\tau=1,2,\ldots,q\} & \mbox{ if } n-L=1 \\
					\left\{\right. \alpha+\sum\limits_{m=1}^{\tau}d_i:\alpha+\sum\limits_{m=1}^{\tau}d_i>0 & \\ 
					\left.\mbox{\hspace{1.5cm} and }\tau=1,2,\ldots,q\right\} & \mbox{ if } n-L>1
				\end{cases}
			\end{align}
			and $\alpha=2+L-n$.
			\item For $L\geq n$ and $L$ is a multiple of $n$,
			\begin{align}
				D_1=&
				\left\{\right.\tau^{*},\tau^{*}+\sum_{m=1}^\tau d_m:\tau^{*}=2+n\ell, \notag\\ 
				&\hspace{0.3cm}\ell=0,1,\ldots,(L/n)-1\mbox{ and }\tau=1,2,\ldots,q\left.\right\}.
			\end{align}
			\item For $L> n$ and $L$ is not a multiple of $n$,  
			\begin{align*}
				D_1 = & 
				B_1^*\cup\left\{\right.\tau^{*},\tau^{*}+\sum_{m=1}^\tau d_m:\tau^{*}=2+n\ell+L-n\left\lfloor\frac{L}{n}\right\rfloor, \notag\\ 
				&\ell=0,1,\ldots,\lfloor L/n\rfloor-1\mbox{ and }\tau=1,2,\ldots,q\left.\right\},\mbox{ and }
			\end{align*}
			\begin{align}
				B_1^*=
				\begin{cases}
					\{1,1+\sum\limits_{m=1}^{\tau}d_i:\tau=1,2,\ldots,q\} \\ \hspace{3cm} \mbox{ if } n-L+n\lfloor L/n\rfloor=1 \\
					\left\{\right. \alpha^*+\sum\limits_{m=1}^{\tau}d_i:\alpha^*+\sum\limits_{m=1}^{\tau}d_i>0   \\ 
					\left.\mbox{\hspace{2cm} and }\tau=1,2,\ldots,q\right\} \\ \hspace{3cm} \mbox{ if } n-L+n\lfloor L/n\rfloor>1
				\end{cases}
			\end{align}
			and $\alpha^*=2+L-n\lfloor L/n\rfloor-n$.
		\end{itemize}
		\label{ZPZS_sequence_weight}
	\end{lemma}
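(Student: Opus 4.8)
The plan is to (i) pin down the support of the maximum-weight codeword of $\mathcal{C}_q$, (ii) push that support through the periodic extension that defines $\r$, and (iii) read off $D_1$ separately in the three regimes $L<n$, $L\ge n$ with $n\mid L$, and $L>n$ with $n\nmid L$. First I would identify the maximum-weight codeword $\c$ of $\mathcal{C}_q$ as the modulo-$2$ sum of all $q+1$ rows of the generator matrix $G_q$ of Lemma \ref{constr2}. Reading off \eqref{G_n_d,gen}, the $i$-th row ($i=1,\ldots,q+1$) carries its unique bit-$1$ in column $2+\sum_{m=1}^{i-1}d_m$ (empty sum $=0$), and these columns are pairwise distinct because each $d_m\ge 2$. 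Hence the supports are disjoint singletons, $\mathcal{C}_q$ is a direct sum, the all-rows combination is the unique maximum-weight codeword of weight $q+1$, and its support is $P=\{\,2+\sum_{m=1}^{\tau}d_m:\tau=0,1,\ldots,q\,\}$. In particular the first bit-$1$ of $\c$ sits at position $2$ and the last at position $2+\sum_{m=1}^{q}d_m=n$.

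Next I would translate $P$ into positions of $\r$. By definition $\r$ is the length-$L$ suffix of $\c^{\lceil L/n\rceil}$; writing $N=n\lceil L/n\rceil$ and using $(\c^{\lceil L/n\rceil})_p=c_{((p-1)\bmod n)+1}$, I get $r_j=c_{((N-L+j-1)\bmod n)+1}$, so that $j\in D_1$ exactly when the residue $((N-L+j-1)\bmod n)+1$ lies in $P$. The three cases then differ only in how many full periods of $\c$ fit into the suffix and how the leading partial period is truncated.

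For $L<n$ we have $\lceil L/n\rceil=1$ and $r_j=c_{\,n-L+j}$, so $j\in D_1$ iff $n-L+j\in P$, i.e. $j=(2+L-n)+\sum_{m=1}^{\tau}d_m=\alpha+\sum_{m=1}^{\tau}d_m$ with $\alpha=2+L-n$, subject to $j>0$ (the bound $j\le L$ is automatic since every element of $P$ is $\le n$). The only subtlety is whether the smallest position $\tau=0$ survives truncation: $\alpha>0$ iff $n-L=1$, which yields the two displayed branches (the leading $1$ in the $n-L=1$ branch being precisely this $\tau=0$ term). For $L\ge n$ with $n\mid L$ we have $N=L$ and $\r=\c^{L/n}$, so $D_1$ is the disjoint union over the $L/n$ full periods indexed by $\ell=0,\ldots,(L/n)-1$ of the shifted supports $n\ell+P$; setting $\tau^{*}=2+n\ell$ (the $\tau=0$ element of the $\ell$-th period) rewrites this in the stated form. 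For $L>n$ with $n\nmid L$ I would split $\r$ into $\lfloor L/n\rfloor$ trailing full periods preceded by one leading partial period of length $\rho=L-n\lfloor L/n\rfloor$: the full periods contribute $\tau^{*},\tau^{*}+\sum_{m=1}^{\tau}d_m$ with $\tau^{*}=2+n\ell+\rho$, while the partial period is handled exactly as in the $L<n$ analysis with $L$ replaced by $\rho$, giving $B_1^{*}$ with $\alpha^{*}=2+\rho-n$ and the same $n-\rho=1$ versus $n-\rho>1$ dichotomy.

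I expect the only real work to be the index bookkeeping, and the one genuine decision point in each regime is the survival of the first bit-$1$ (the $\tau=0$ term) of the truncated leading period: the condition ``value $>0$'' encodes precisely that a position $p\in P$ of $\c$ appears in a suffix of length $\rho$ (resp.\ $L$) iff $p>n-\rho$ (resp.\ $p>n-L$), and checking when this first fails at $\tau=0$ gives exactly the $n-L=1$ (resp.\ $n-\rho=1$) branch. The remaining identities are routine reindexing, which I would verify once against the worked instance $n=9$, $d_1=5$, $d_2=2$ of Example \ref{example_various_d}.
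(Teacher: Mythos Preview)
Your proposal is correct and follows essentially the same route as the paper: both arguments identify the support of the maximum-weight codeword as the column support $\{2,2+\sum_{m=1}^{\tau}d_m:\tau=1,\ldots,q\}$ of $G_q$, then split into the same three cases on $L$ versus $n$ and perform the same index-shifting bookkeeping (the paper writes the sub-sequence explicitly as $\c(n-L+1,n)$, $\c^{L/n}$, or $\c(n-L+n\lfloor L/n\rfloor+1,n)\c^{\lfloor L/n\rfloor}$ rather than your modular formula, but the content is identical). Your observation that the two branches in each partial-period case hinge precisely on whether the $\tau=0$ position survives truncation is exactly the distinction the paper draws in its Sub-Case~1/Sub-Case~2 split.
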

	\begin{proof}
		Consider a ZPZS code $\mathcal{C}_q$ with the generator matrix $G_q$. 
		For the binary matrix $G_q$, if column support of the matrix is the set $S_q$ of column indices of the non-zero column of the matrix. 
		Note that the set of indices of bit-$1$s in the maximum weight sequence is the column support of the generator matrix $G_q$. 
		Now, consider a binary sequence $\c=c_1c_2\ldots c_n$ with the maximum weight such that $c_i=1$ if and only if $i\in S_q$. 
		Note that the sequence $\c$ is a codeword with a maximum weight of $q+1$.
		Also, observe that, for any codeword $\Bar{\c}=\Bar{c}_1\Bar{c}_2\ldots \Bar{c}_n$ of the ZPZS code $\mathcal{C}_q$, if $\Bar{c}_i=1$ then $i\in S_q$.
		From enumeration, one can get $S_q=\{2,2+\sum_{m=1}^\tau d_m:\tau=1,2,\ldots,q\}$ of size $q+1$.
		Consider the transmitted binary sequence $\r=r_1r_2\ldots r_L$ with the channel memory of length $L$. 
		Now, there are three cases possible.
		\begin{itemize}
			\item \textbf{Case 1} ($L<n$): If the memory-length $L$ is less than the code-length $n$ then the sequence $\r$ is obtained from the sub-sequence $\c(n-L+1,n)$, i.e., $\r=r_1r_2\ldots r_L=\c(n-L+1,n)$. 
			Then, we have the following two sub-cases for each $\tau=1,2,\ldots,q$.
			\begin{itemize}
				\item \textbf{Sub-Case 1} ($n-L=1$): If $n-L=1$ then the weight of both sequences $\r$ and $\c$ are the same but the bit-$1$ indices in $\r$ is reduced by one of the bit-$1$ indices in $\c$, i.e., $D_1$ = $\{1,1+\sum_{m=1}^{\tau}d_i:\tau=1,2,\ldots,q\}$. 
				\item \textbf{Sub-Case 2} ($n-L>1$): If $n-L>1$ then the bit-$1$ indices in $\r$ is reduced by $n-L$ of the bit-$1$ indices in $\c$, i.e., $D_1$ = $\{2+L-n+\sum_{m=1}^{\tau}d_i:\tau=1,2,\ldots,q\mbox{ and }2+L-n+\sum_{m=1}^{\tau}d_i>0\}$.
				Note that the condition $2+L-n+\sum_{m=1}^{\tau}d_i>0$ appears for the fact that all the indices of any sequence are positive integers. 
			\end{itemize}
			\item \textbf{Case 2} ($L\geq n$ and $L$ is a multiple of $n$): If the memory length $L$ is a multiple of code length $n$ then the sequence $\r$ is obtained from the concatenation of $L/n$ numbers of sequences $\c$, $i.e.$, $\r$ = $\c^{L/n}$. 
			For each $\ell$ = $0,1,\ldots,(L/n)-1$, the $\ell$-th block $\c$ contributes the index $\tau^{*}$ for the first bit-$1$ and indices $\tau^{*}+\sum_{m=1}^\tau d_m$ for remaining bit-$1$s in the block $\c$. 
			Therefore, $D_1=\{\tau^{*},\tau^{*}+\sum_{m=1}^\tau d_m:\tau^{*}=2+n\ell,\ell=0,1,\ldots,(L/n)-1\mbox{ and }\tau=1,2,\ldots,q\}$.
			\item \textbf{Case 3} ($L>n$ and $L$ is not a multiple of $n$): If the memory-length $L$ is not a multiple of code length $n$ and $L>n$ then the sequence $\r$ is obtained from the concatenation of the sub-sequence $\c(n-L+n\lfloor L/n\rfloor+1,n)$ and $\lfloor L/n\rfloor$ numbers of sequences $\c$, i.e., $\r=r_1r_2\ldots r_L=\c(n-L+n\lfloor L/n\rfloor+1,n)\c^{\lfloor L/n\rfloor}$. 
			Now, for the set $D_1$, the set of indices of bit-$1$s contributed by the block $\c(n-L+n\lfloor L/n\rfloor+1,n)$ is $B_1^*$ and by the block $\c^{\lfloor L/n\rfloor}$ is $B_1$.
			In this case, $B_1^*$ is equal to $D_1$ as calculated in Case 1 by replacing $L$ with $L-n\lfloor L/n\rfloor$ and $B_1$ is equal to $D_1$ as calculated in Case 2 by shifting indices with $L-n\lfloor L/n\rfloor$. 
		\end{itemize}
		This completes the proof.
	\end{proof}
	\begin{remark}
		For any given positive $q$ integers $d_1,d_2,\ldots,d_q$ ($\geq 2$), consider the $[n,q+1]$ ZP code $T(\mathcal{C}_q)$ with $n=2+\sum_{i=1}^qd_i$. 
		Recall that the bit-$1$ positions are shifted with $1$ and indices are always positive integers. 
		Therefore, for a given channel memory $L$, if the transmitted $i$-th bit is bit-1, i.e. $c_i=1$, then $i\in D_2$, where $D_2$ = $\{j-1:j\in D_1\}\backslash\{0\}$.
		\label{ZP_sequence_weight}
	\end{remark}
	\begin{remark}
		For any positive $q$ integers $d_1,d_2,\ldots,d_q$ ($\geq 2$), consider the $(2+\sum_{i=1}^qd_i,2^{q+2}-1)$ ZP code $\mathcal{C}_q\cup T(\mathcal{C}_q)$. 
		From Definition \ref{Def ZS and ZE code}, Lemma \ref{ZPZS_sequence_weight} and Remark \ref{ZP_sequence_weight}, we have $D_1\cap D_2$ = $\emptyset$. 
		Therefore, for a given channel memory $L$, if the transmitted $i$-th bit is bit-1, i.e. $c_i=1$, then $i\in D_3$ = $D_1\cup D_2$. 
		\label{ZP_and_ZPZS_sequence_weight}
	\end{remark}
	Therefore, using Lemma \ref{ZPZS_sequence_weight}, Remark \ref{ZP_sequence_weight} and Remark \ref{ZP_and_ZPZS_sequence_weight} with Theorem \ref{ISI_density_theorem}, we can compute the analytical expression for the last bit ISI and is given in Remark \ref{isi_zp_codes}.
	\begin{remark}\label{isi_zp_codes}
		For any given channel memory $L$, the expected ISI on the last bit in the code $\mathcal{C}$, $i.e.,$ $\mathbb{E}[\mathrm{ISI}^L_n(\mathcal{C})]$,  is approaching to $\sum\limits_{i\in D}\Delta_i(\mathcal{C})p^{\mathrm{ch}}_{i}$, where 
		\begin{itemize}
			\item for $\mathcal{C}=\mathcal{C}_q$, $D$ = $D_1$ and
			\begin{align}
				\Delta_i(\mathcal{C}_q) = 
				\begin{cases}
					0.5 & \mbox{ if }i\in D_1 \\
					0 & \mbox{ otherwise}
				\end{cases}
			\end{align} 
			\item for $\mathcal{C}=T(\mathcal{C}_q)$, $D$ = $D_2$ and
			\begin{align}
				\Delta_i(T(\mathcal{C}_q)) = 
				\begin{cases}
					0.5 & \mbox{ if }i\in D_2 \\
					0 & \mbox{ otherwise}
				\end{cases}
			\end{align} 
			\item for $\mathcal{C}=\mathcal{C}_q\cup T(\mathcal{C}_q)$, $D$ = $D_3$ and      
			\begin{align}
				\Delta_i(\mathcal{C}_q\cup T(\mathcal{C}_q)) = 
				\begin{cases}
					\frac{2^q}{2^{q+2}-1} & \mbox{ if }i\in D_3 \\
					0 & \mbox{ otherwise}
				\end{cases}
			\end{align} 
		\end{itemize}
		and $D_1$, $D_2$ and $D_3$ are given in Lemma \ref{ZPZS_sequence_weight}, Remark \ref{ZP_sequence_weight} and Remark \ref{ZP_and_ZPZS_sequence_weight}.
		Note that the average density of bit-1 in the $i$-th position for any linear binary code $\mathcal{C}$ is $\Delta_i(\mathcal{C}) = 0.5$ and both the codes $\mathcal{C}_q$ and $T(\mathcal{C}_q)$ are linear codes. 
		Also, from the fact $\mathcal{C}_q\cap T(\mathcal{C}_q)=\{\mathbf{0}_{1,n}\}$, and the fact that $\Delta_i(\mathcal{C}_q)=0$ if and only if $\Delta_{i-1}(T(\mathcal{C}_q))=0$ ($i=2,3,\ldots n$). Now, for any given $i$, there are either $2^q$ codewords or zero codewords that are bit-1 in $\mathcal{C}_q$ and $T(\mathcal{C}_q)$.  
		Thus, $\Delta_i(\mathcal{C}_q\cup T(\mathcal{C}_q))$ = $\frac{2^q}{2^{q+2}-1}$.
	\end{remark}
	From Remark \ref{ZP_and_ZPZS_sequence_weight} and Remark \ref{isi_zp_codes}, we can deduce Remark \ref{lemma_isi_comparison_with_ds} to illustrate the relation between the expected ISI of a code and the minimum number of zero padding between two consecutive bit-1s.
	
	\begin{remark}\label{lemma_isi_comparison_with_ds}
		For any given positive integers $q$, $d_1$ ($\geq 2$) and $d_2$ ($\geq 2$), consider two ZP codes $\mathcal{C}_{d_1,q}\cup T(\mathcal{C}_{d_1,q})$ and $\mathcal{C}_{d_2,q}\cup T(\mathcal{C}_{d_2,q})$.
		For given length $n$ and channel memory $L$, if $d_2>d_1$, then the last bit expected ISI $\mathbb{E}[\mathrm{ISI}^L_n\left(\mathcal{C}_{d_1,q}\cup T(\mathcal{C}_{d_1,q})\right)]$ of the ZP code $\mathcal{C}_{d_1,q}\cup T(\mathcal{C}_{d_1,q})$ is larger than the last bit expected ISI $\mathbb{E}[\mathrm{ISI}^L_n\left(\mathcal{C}_{d_2,q}\cup T(\mathcal{C}_{d_2,q})\right)]$ of the ZP code $\mathcal{C}_{d_2,q}\cup T(\mathcal{C}_{d_2,q})$. Therefore, the difference of the last bit expected ISI ($\delta_{\mathrm{ISI}}$) between the two codes is 
		\begin{align}
			\delta_{\mathrm{ISI}} & = \mathbb{E}[\mathrm{ISI}^L_n\left(\mathcal{C}_{d_1,q}\cup T(\mathcal{C}_{d_1,q})\right) - \mathrm{ISI}^L_n\left(\mathcal{C}_{d_2,q}\cup T(\mathcal{C}_{d_2,q})\right)]\notag \\
			&
			= \frac{2^q}{2^{q+2}-1}\Bigg[\sum\limits_{m=1}^q \Big(\left(p^{\mathrm{ch}}_{2+md_1}-p^{\mathrm{ch}}_{2+md_2}\right)+\notag \\
			&\mbox{\hspace{4.5cm}} \left(p^{\mathrm{ch}}_{1+md_1}-p^{\mathrm{ch}}_{1+md_2}\right)\Big)\Bigg] \notag\\
			& > \frac{2^q}{2^{q+2}-1}\left[p^{\mathrm{ch}}_{1+d_1}-p^{\mathrm{ch}}_{1+d_2}\right],
		\end{align}
		where the channel coefficient is 
		$p^{\mathrm{ch}}_{i}$ and is a decreasing function of $i$. Therefore, the differences $p^{\mathrm{ch}}_{1+d_j}-p^{\mathrm{ch}}_{1+d_k}$ and $p^{\mathrm{ch}}_{2+d_j}-p^{\mathrm{ch}}_{2+d_k}$ are always positive for all $j$ and $k$ ($ d_j<d_k$).
	\end{remark}
	From Remark \ref{lemma_isi_comparison_with_ds}, we can also observe that as $(d_2-d_1)$ increases, the difference in expected ISI also increases. This results in improved system performance with the ZP code $\mathcal{C}_{d_2,q}\cup T(\mathcal{C}_{d_2,q})$, despite a compromised code rate compared to the ZP code $\mathcal{C}_{d_1,q}\cup T(\mathcal{C}_{d_1,q})$ where $d_2>d_1$.
	
	\subsection{Expected ISI for LOZP codes}
	This section computes the analytical ISI expressions for the family of linear LOZP, OMP, and OEP codes. It is essential to ensure uniformity in length and a maximum weight of a codeword across all the codes for a fair comparison of ISI performance across different codeword patterns. Therefore, we consider the MCvD channel with a refresh to compute and compare the expected ISI performance for a given code.
	In Example \ref{lozp_generalized_example}, for the code $\mathcal{C}^2_2$, the maximum weighted sequence is $\c$ = $1100101$ = $\textbf{1}_{1,\tau}\mathbf{0}_{1,d_1-1}1\mathbf{0}_{1,d_2-1}1$ with the parameters $\tau = 2$, $d_1 = 3$, $d_2 = 2$ and $q=2$.
	Now, there are one consecutive bit-1 block $\textbf{1}_{1,\tau}$ and two ($q = 2)$ zero padding blocks $\mathbf{0}_{1,d_{\phi}-1} 1$ in the sequence $\c$ for $\phi = 1,2$.
	Therefore, the ISI effect on the $i$-th bit of the consecutive bit-1 block arises from the previous $(i-1)$ bit-1s. The ISI effect on the first zero padding block ($\phi= 1$) arises due to the first two ($\tau = 2$) consecutive bit-1s. 
	In comparison, the ISI effect on the second zero padding block ($\phi= 2$) arises due to (i) the first two consecutive bit-1s, and (ii) the previous one ($\phi-1 = 1$) bit-1, that is after the consecutive bit-1 block.
	Consequently, from Theorem \ref{ISI_density_theorem}, we derive the analytical expression of expected ISI for LOZP codes in the following lemma, which is a function of the bit-1 locations in the codeword.
	
	\begin{lemma}\label{isi_expected_lozp_code}
		For any positive integers $d_i~ (\geq 2)$, $q$, and $\tau$ $(\geq 2)$,  consider a $[\tau+\sum_{i=1}^qd_i,q+\tau]$ LOZP code $\mathcal{C}^{\tau}_{q}$ with channel memory $L=n-1$ (with channel refresh). The expected ISI on the $i$-th $(\leq n)$ bit is
		\begin{align}\label{avg_isi_lozp_code}
			\mathbb{E}[\mathrm{ISI}_i^L] = 
			\begin{cases}
				\frac{1}{2}\sum\limits_{k=2}^i p^{\mathrm{ch}}_k, & 1\leq i\leq \tau\\
				\frac{1}{2}\sum\limits_{k = i-\tau+1}^{i}p^{\mathrm{ch}}_k, & \tau<i \leq \tau+d_1\\
				\frac{1}{2}\left(\sum\limits_{k = i-\tau+1}^{i}p^{\mathrm{ch}}_k + \sum\limits_{j = 1}^{\phi-1} p^{\mathrm{ch}}_{\theta_i^j}\right), & \tau+d_1< i\leq n,
			\end{cases}
		\end{align}
		where $\phi$ can be determined by the inequality $\tau+\sum_{m = 1}^{\phi}d_m > i \geq \tau+\sum_{m = 1}^{\phi-1} d_m$, and $\theta_i^j = (i+1)-(\tau+\sum_{m = 1}^j d_m)$.
	\end{lemma}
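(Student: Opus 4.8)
The plan is to invoke Theorem \ref{ISI_density_theorem} and reduce everything to locating the bit-$1$ positions of the generator matrix $G^{\tau}_q$. Since $\mathcal{C}^{\tau}_q$ is linear, $\Delta_j(\mathcal{C}^{\tau}_q)=\tfrac12$ whenever column $j$ of $G^{\tau}_q$ is nonzero and $\Delta_j=0$ otherwise. From the block structure in \eqref{G_n_d_gen_lozp}, the identity block $\mathbf{I}_{\tau}$ contributes bit-$1$s at positions $1,2,\ldots,\tau$, while the $(\tau+\ell)$-th row places its single $1$ at position $\tau+\sum_{m=1}^{\ell}d_m$ for $\ell=1,2,\ldots,q$. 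Hence the support of the code is $S=\{1,\ldots,\tau\}\cup\{\tau+\sum_{m=1}^{\ell}d_m:\ell=1,\ldots,q\}$, and these are exactly the indices $j$ with $\Delta_j=\tfrac12$.

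Next I would specialize Theorem \ref{ISI_density_theorem} to the channel-with-refresh setting. For the $i$-th received bit the effective memory is $L=i-1$ (Case~1 of the channel model), so the only earlier symbols that can interfere lie at positions $1,\ldots,i-1$ of the current codeword. Substituting into \eqref{eq_isi_with_density} and changing the summation index via $j=i-k+1$ gives
\begin{align*}
\mathbb{E}[\mathrm{ISI}^{L}_i]=\sum_{k=2}^{i}\Delta_{i-k+1}(\mathcal{C}^{\tau}_q)\,p^{\mathrm{ch}}_k=\tfrac12\!\!\sum_{\substack{j\in S\\ 1\le j\le i-1}}\!\!p^{\mathrm{ch}}_{\,i-j+1},
\end{align*}
so the problem reduces to counting which support indices fall strictly below $i$ and reading off the corresponding lag $k=i-j+1$.

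Finally I would split into the three stated regimes. For $1\le i\le\tau$ every index $1,\ldots,i-1$ lies in $S$, and $j=i-k+1$ sweeps $k$ from $2$ to $i$, giving $\tfrac12\sum_{k=2}^{i}p^{\mathrm{ch}}_k$. For $\tau<i\le\tau+d_1$ the only support indices below $i$ are $1,\ldots,\tau$ (the first ``padded'' index $\tau+d_1$ has not yet been reached), and these map to lags $k=i-\tau+1,\ldots,i$, yielding $\tfrac12\sum_{k=i-\tau+1}^{i}p^{\mathrm{ch}}_k$. For $\tau+d_1<i\le n$ the support below $i$ consists of $\{1,\ldots,\tau\}$ together with the padded indices $\tau+\sum_{m=1}^{\ell}d_m$ for $\ell=1,\ldots,\phi-1$, where $\phi$ is fixed by the bracketing $\tau+\sum_{m=1}^{\phi-1}d_m\le i<\tau+\sum_{m=1}^{\phi}d_m$; the first group again contributes $\tfrac12\sum_{k=i-\tau+1}^{i}p^{\mathrm{ch}}_k$, while the padded index indexed by $j$ contributes the lag $\theta_i^{j}=(i+1)-(\tau+\sum_{m=1}^{j}d_m)$, producing the second sum $\tfrac12\sum_{j=1}^{\phi-1}p^{\mathrm{ch}}_{\theta_i^{j}}$. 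The main obstacle I expect is this third regime: correctly determining $\phi$ from the partial sums $\tau+\sum_{m=1}^{\ell}d_m$ and handling the boundary values of $i$ that coincide with a padded support index, where one must verify that no spurious self-term $p^{\mathrm{ch}}_1$ is counted (the summation begins at $k=2$) and that no genuine interfering index is dropped.
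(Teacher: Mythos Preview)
Your proposal is correct and follows essentially the same route as the paper: both identify the support set of $G^{\tau}_q$ (the paper calls it $D_4$), use linearity to get $\Delta_j=\tfrac12$ on the support, and then split into the three positional regimes to read off the contributing lags. Your presentation via the explicit change of index $j=i-k+1$ is slightly crisper than the paper's narrative through the ``maximum weighted codeword,'' but the argument is the same.
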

	
	\begin{proof}
		From Lemma \ref{constr4}, consider the maximum weighted codeword $\c = c_1c_2\ldots c_n$ in a $[\tau+\sum_{i=1}^qd_i,q+\tau]$ LOZP code $\mathcal{C}^{\tau}_{q}$. If the $i$-th transmitted bit in the codeword $\c$ is a bit-1, i.e., $c_i = 1$ then $i\in D_4$, where $D_4$ is the set containing all possible bit-1 locations in the sequence $\c$, $i.e.,$
		$D_4 = \left\{1,2,\ldots,\tau, \tau+d_1,\tau+d_1+d_2,\ldots,\tau+\sum_{m=1}^{q}d_m\right\}$.
		Now, to derive the $i$-th bit ISI expression, we split the codeword into three regions: (i) the first consecutive bit-1 block, (ii) the first zero padding block ($\phi = 1$), and (iii) the second zero padding block onwards ($\phi \geq 2$). The details are given as follows:
		\begin{itemize}
			\item \textbf{Case 1} {($1\leq i \leq \tau$)}: From the code $\mathcal{C}^{\tau}_{q}$, the first $i ~(\leq \tau)$ positions of the maximum weighted codeword are always bit-1. Hence, the ISI on the initial $i~ (\leq \tau)$ positions in the codeword $\c$ is $\mathrm{ISI}_i =\sum_{k=2}^i p^{\mathrm{ch}}_k$. 
			\item \textbf{Case 2}{ ($\tau< i \leq \tau+d_1$)}:  The first zero padding block ($\phi = 1$) only experiences ISI from the first $\tau$ consecutive bit-1s. Therefore, the ISI on the $i~ (> \tau)$-th bit is $\sum_{k = i-\tau+1}^{i}p^{\mathrm{ch}}_k$.
			\item \textbf{Case 3} {($\tau+d_1< i \leq n$)}:  We can split the analysis into the following two sub-cases:
			\begin{itemize}
				\item \textbf{Sub-case 1:} {(Contribution from the first $\tau$ bit-1s):} ISI on the $i\ (>\tau)$-th bit from the first $\tau$ number of consecutive bit-1s is $\sum_{k = i-\tau+1}^{i}p^{\mathrm{ch}}_k$.
				\item \textbf{Sub-case 2} {(Contribution from the remaining bit-1s after the $\tau$-th position):} We first compute the number of bit-1s ($\phi-1$) before the $\phi ~(\geq 2)$-th zero padding block and after the $\tau$-th position. The value of $\phi$ can be determined by the inequality $\tau+\sum_{m = 1}^{\phi}d_m \geq i > \tau+\sum_{m = 1}^{\phi-1} d_m$ for $\phi = 2,3,\ldots,q$. Therefore, the ISI on the $i$-th ($\tau+d_1< i\leq n$) bit of the $\phi$-th zero padding block from the previous $(\phi-1)$ bit-1s is $\sum_{j = 1}^{\phi-1} p^{\mathrm{ch}}_{\theta_i^j}$, where $\theta_i^j = (i+1)-(\tau+\sum_{m = 1}^j d_m)$.
			\end{itemize}   
			Therefore, combining Sub-case 1 and Sub-case 2, we obtain the desired expected ISI expression on the $i$-th bit for $\tau+1\leq i\leq n$, as outlined in \eqref{avg_isi_lozp_code}.
		\end{itemize}
		Also, note that the average density of bit-1 in the linear LOZP code $\mathcal{C}^{\tau}_{q}$ is always 0.5 for $i\in D_4$.
		This concludes the proof for all three regions over $1\leq i \leq n$.
	\end{proof}
	Note that, for a given codeword length, an increase in the value of $\tau$ corresponds to an increase in the codeword weight and therefore an increase in the ISI as well. Conversely, when the number of consecutive bit-0s increases, followed by a sequence of bit-1s, the impact of ISI diminishes on the subsequent bits. 
	Hence, a distinct trade-off exists between the expected ISI and the prevalence of leading bit-1s, followed by the number of bit-0s in the sequence. 
	Therefore, it is essential to consider the following metrics for the codeword :
	\begin{enumerate}
		\item the expected ISI experienced by all the bit-0s in the code,
		\item the cumulative ISI experienced by all the bit-0s in the codeword that experiences the maximum ISI, and
		\item the maximum ISI experienced by a bit-0 in the codeword corresponding to the maximum ISI.
	\end{enumerate}
	
	We first consider an $(n,\mathcal{S})$ ISI-mtg code \cite{8972472} to compute these parameters on the expected ISI, as this is the best-known channel code available in the literature in terms of ISI mitigation while satisfying the zero padding constraint.
	Now, we first need to identify the codeword where bit-0 experiences the maximum ISI in the ISI-mtg code. In that case, the most affected bit-0 is the first bit-0 followed by the maximum number of possible bit-1s.
	Similarly, for the proposed LOZP code, the most affected bit-0 can either be the first bit-0 followed by a consecutive bit-1 block or the first bit-0 of the last zero padding block with the highest number of preceding bit-1s. In Remark \ref{isi_mtg_bound}, we first compute the closed-form expressions for specified metrics on ISI for the ISI-mtg code \cite{8972472}. Subsequently, we compute the same ISI metrics on bit-0 on the proposed LOZP code in Remark \ref{isi_lozp_code_bound}, to compare the performance with the ISI-mtg code.
	
	\begin{remark}\label{isi_mtg_bound}
		For any positive integers $n$ and $\mathcal{S}$, consider an $(n,\mathcal{S})$ ISI-mtg code \cite{8972472}, where the codeword $\c$ of length $n$ and weight $w(n)$ experience the maximum ISI in a channel with refresh. Therefore, for a channel memory length of $L$, the ISI experienced by all the bit-0s and the maximum ISI experienced by a bit-0 corresponding to the codeword $\c$ are  
		\begin{equation}
			\mathrm{ISI}^{L,0}(\c) = \begin{cases}
				\sum\limits_{k = 1}^{w(n)}\frac{(n-2k+1)}{2}p^{\mathrm{ch}}_{2k}, & \mbox{for odd } n\\
				\sum\limits_{k=1}^{w(n)}\left(p^{\mathrm{ch}}_{2k+1}+\frac{(n-2k)}{2}p^{\mathrm{ch}}_{2k}\right), & \mbox{for even } n,  
			\end{cases}
			\label{isi_total_0_isi_mtg}
		\end{equation}
		\begin{equation}
			\mathrm{ISI}_{\mathrm{max}}^{L,0}(\c) = \sum\limits_{k=1}^{w(n)}p^{\mathrm{ch}}_{2k}, 
			\label{isi_max_0_isi_mtg}
		\end{equation}
		where
		\[
		w(n) = \begin{cases}
			{\frac{n-1}{2}}, & \mbox{ for odd } n\\
			{\frac{n-2}{2}}, & \mbox{ for even } n.
		\end{cases}
		\]
	\end{remark}
	
	\begin{remark}\label{isi_lozp_code_bound}
		For a given positive $q$ integers $d_1\geq d_2\geq \ldots \geq d_q$ ($\geq 2$), consider a $[\tau+\sum_{i=1}^qd_i,q+\tau]$ LOZP code $\mathcal{C}^{\tau}_q$, where the codeword $\c$ of length $n$ and weight $w(n)$ experience the maximum ISI in a channel with refresh. Therefore, for a given channel memory length of $L$ and the weight of the codeword $\c$, the ISI experienced by all the bit-0s and the maximum ISI experienced by a bit-0 corresponding to the codeword $\c$ are
		\begin{equation}
			\mathrm{ISI}^{L,0}(\c) = \sum_{j\in \mathscr{B}} \mathrm{ISI}^L_j(\c), \mbox{ and}
			\label{isi_total_0_lozp}
		\end{equation}
		\begin{equation}
			\mathrm{ISI}_{\mathrm{max}}^{L,0}(\c) = \max\left\{\sum\limits_{k=2}^{\tau+1}p^{\mathrm{ch}}_k,\sum\limits_{k = i-\tau+1}^{i} p^{\mathrm{ch}}_k+\sum\limits_{j = 1}^{q-1}p^{\mathrm{ch}}_{\theta_i^j}\right\},
			\label{isi_max_0_lozp}
		\end{equation}
		where $w(n) = \tau+q-1$.
		In the $\mathrm{ISI}^{L,0}(\c)$ expression, $\mathscr{B}$ denotes the bit-0 locations in the codeword $\c$, $i.e.,$ $\mathscr{B} = \{n\}\cup \mathscr{A}\backslash D_4$, where $\mathscr{A} = \{1,2,\ldots,n\}$, $D_4 = \{1,2,\ldots,\tau,\tau^*: \tau^* = \tau+\sum_{m = 1}^{\phi} d_m \mbox{ for } \phi = 1,2,\ldots,q\}$, and $\mathrm{ISI}^L_j(\c)$ can be computed from Lemma \ref{isi_expected_lozp_code}. 
		In \eqref{isi_max_0_lozp}, the first term $\sum_{k=2}^{\tau+1}p^{\mathrm{ch}}_k$ corresponds to $i = \tau+1$ and the second term $\sum_{k = i-\tau+1}^{i} p^{\mathrm{ch}}_k+\sum_{j = 1}^{q-1}p^{\mathrm{ch}}_{\theta_i^j}$ corresponds to $i = 1+\tau+\sum_{m=1}^{q-1}d_m$, where $\theta_i^j = 2+\sum_{m=j+1}^{q-1}d_m$.
	\end{remark}
	We consider the following example to compute the parameters given in Remark \ref{isi_mtg_bound} and Remark \ref{isi_lozp_code_bound} for the linear LOZP code and ISI-mtg-code \cite{8972472}, respectively.
	
	\begin{example}\label{example_isi_mtg_isi_lozp_compare}
		Consider a $[9, 4]$ LOZP code $\mathcal{C}^2_{2}$ with $q = 2$, $d_1 = 3$, $d_2 = 2$, channel memory $L = 8$ (channel refresh), sampling time ($t_s$ = 0.3s) and other parameters from Table \ref{tab1} to compute the coefficients $p^{\mathrm{ch}}$. Note that the codeword $\c = 110001000$ of weight three ($\tau + q-1 = 3$) experiences the maximum ISI among all the codewords. Consequently, from \eqref{isi_total_0_lozp}, the total ISI experienced by all the bit-0s in the codeword $\c$ is $\mathrm{ISI}^{8,0} = 2p^{\mathrm{ch}}_{2}+3p^{\mathrm{ch}}_{3}+3p^{\mathrm{ch}}_{4}+p^{\mathrm{ch}}_{5}+p^{\mathrm{ch}}_{6}+2p^{\mathrm{ch}}_{7}+2p^{\mathrm{ch}}_{8}+p^{\mathrm{ch}}_{9} = 0.3640$. We also observe that the first bit-0 $((\tau+1)$-th position$)$, followed by the consecutive bit-1 block, is the most affected bit-0 by ISI in the codeword $\c$. Therefore, From \eqref{isi_max_0_lozp}, we can determine $\mathrm{ISI}_{\mathrm{max}}^{8,0}(\c) = p^{\mathrm{ch}}_{2}+p^{\mathrm{ch}}_{3} = 0.1035$.
		
		Now, let us consider a $(9, 54)$ ISI-mtg code, where we observe that the codeword $\c = 010101010$ of weight $4$ experiences the maximum ISI among all the codewords. Consequently, from \eqref{isi_total_0_isi_mtg}, the total ISI experienced by all the bit-0s in the codeword is $\mathrm{ISI}^{8,0}(\c) = 4p^{\mathrm{ch}}_{2}+3p^{\mathrm{ch}}_{4}+2p^{\mathrm{ch}}_{6}+p^{\mathrm{ch}}_{8}$ = $0.3698> 2p^{\mathrm{ch}}_{2}+3p^{\mathrm{ch}}_{3}+3p^{\mathrm{ch}}_{4}+p^{\mathrm{ch}}_{5}+p^{\mathrm{ch}}_{6}+2p^{\mathrm{ch}}_{7}+2p^{\mathrm{ch}}_{8}+p^{\mathrm{ch}}_{9}$ = $0.3640$. Similarly, we can observe that the last bit-0 experiences the maximum ISI among all the bits, which can be computed from \eqref{isi_max_0_isi_mtg} and is  $\mathrm{ISI}_{\mathrm{max}}^{8,0}(\c) = p^{\mathrm{ch}}_{2}+p^{\mathrm{ch}}_{4}+p^{\mathrm{ch}}_{6}+p^{\mathrm{ch}}_{8}$ = $0.1081> p^{\mathrm{ch}}_{2}+p^{\mathrm{ch}}_{3}$ = $0.1035$.		
	\end{example}		
	
	As observed in Example \ref{example_isi_mtg_isi_lozp_compare}, the codewords in the LOZP code can have consecutive bit-1s up to a certain position while still achieving a lower ISI compared to the ISI-mtg code. Also, for a specific sampling interval $t_s$ and any positive integers $b_1$, $b_2\ (< b_1)$, $b_3$ and $i$, if there exists an $i$, $s.t.,$ $b_1p^{\mathrm{ch}}_i\geq b_2p^{\mathrm{ch}}_{i}+b_3p^{\mathrm{ch}}_{i+1}$ then the codewords in the code $\mathcal{C}^{\tau}_{q}$ can exhibit consecutive bit-1s while maintaining the same ISI level as that of an ISI-mtg code. Therefore, we can deduce Remark \ref{lemma_lozp_exist_for_isi_mtg} using Remark \ref{isi_mtg_bound} and Remark \ref{isi_lozp_code_bound}.
	\begin{remark}\label{lemma_lozp_exist_for_isi_mtg}
		For any positive integers $d_i ~(\geq 2)$, $q$, and $\tau$ $(\geq 2)$, consider a $[n,q+\tau]$ LOZP code $\mathcal{C}^{\tau}_{q}$ where $n = \tau+\sum_{i=1}^qd_i$. Now, for any feasible sampling time $t_s$, as given in \cite{7331300}, if 
		\begin{itemize}
			\item $\sum_{k=3}^{\tau+1}p^{\mathrm{ch}}_k < \sum_{k = 2}^{w(n)}p^{\mathrm{ch}}_{2k}$, and
			\item $\sum_{k = i-\tau+1}^{i} p^{\mathrm{ch}}_k+\sum_{j = 1}^{q-1}p^{\mathrm{ch}}_{\theta_i^j}<\sum_{k = 1}^{w(n)}p^{\mathrm{ch}}_{2k}$
		\end{itemize}
		then there exists a codeword $\c$ with maximum weight $\tau+q-1$ in the LOZP code $\mathcal{C}^{\tau}_{q}$ such that $\mathrm{ISI}^{L,0}(\c)$ and $\mathrm{ISI}^{L,0}_{\mathrm{max}}(\c)$ are less than that of the maximum weighed codeword in an $(n,\mathcal{S})$ ISI-mtg code.
	\end{remark}

	Now, we need to check whether there are any feasible solutions of $\tau$ and $q$ such that the conditions from Remark \ref{lemma_lozp_exist_for_isi_mtg} hold. For instance, if $\tau = 2$ then the first inequality reduces to $p^{\mathrm{ch}}_{4} + p^{\mathrm{ch}}_{6} + \ldots + p^{\mathrm{ch}}_{2w(n)}>p^{\mathrm{ch}}_3$, and this inequality holds for a large value of $w(n)$. 
	Also, for a given value of $\tau$ and $d_i = c$ (any positive integer $c ~(\geq 2)$), $q = \lfloor(n-\tau)/c\rfloor)$, which reduces the second inequality of Remark \ref{lemma_lozp_exist_for_isi_mtg} into $p^{\mathrm{ch}}_{2} + p^{\mathrm{ch}}_{4} + p^{\mathrm{ch}}_{6} + \ldots + p^{\mathrm{ch}}_{2w(n)}> p^{\mathrm{ch}}_{3+(q-1)c} + p^{\mathrm{ch}}_{1+(q-1)c} + p^{\mathrm{ch}}_{2+(q-3)c} + p^{\mathrm{ch}}_{2+(q-3)c} + \ldots + p^{\mathrm{ch}}_{2}$. Hence, this condition holds if the value of $q$ is sufficiently small for a given $n$ and $\tau$, which leads to a large $c$. For instance, if $q = 3$ then the inequality $p^{\mathrm{ch}}_{4} + p^{\mathrm{ch}}_{6} + \ldots + p^{\mathrm{ch}}_{2w(n)}> p^{\mathrm{ch}}_{1+2c} + p^{\mathrm{ch}}_{3+2c} + p^{\mathrm{ch}}_{2+c} + p^{\mathrm{ch}}_{2+2c}$ holds for a large $w(n)$ with any $c\geq 3$ and at a feasible value of $t_s$.       
	From Fig. \ref{fig_isi_compare_mtg_lozp}, note that for a sufficiently large-length codeword, there exists at least a solution of $\tau$ and $q$ such that the ISI experience by the $(\tau+1)$-th bit and the $(n-d_i+1)$-th bit in the LOZP code is less than the bit-0 experiencing the maximum ISI in ISI-mtg code. For example, if $n = 9$ and $n = 55$, then $\tau = 2$ and $2\leq \tau \leq 4$ with $d_i = 3$ satisfy Remark \ref{lemma_lozp_exist_for_isi_mtg} for $t_s = 0.2$s, respectively.
			\begin{figure}
		\centering
		\includegraphics[width = 1\linewidth]{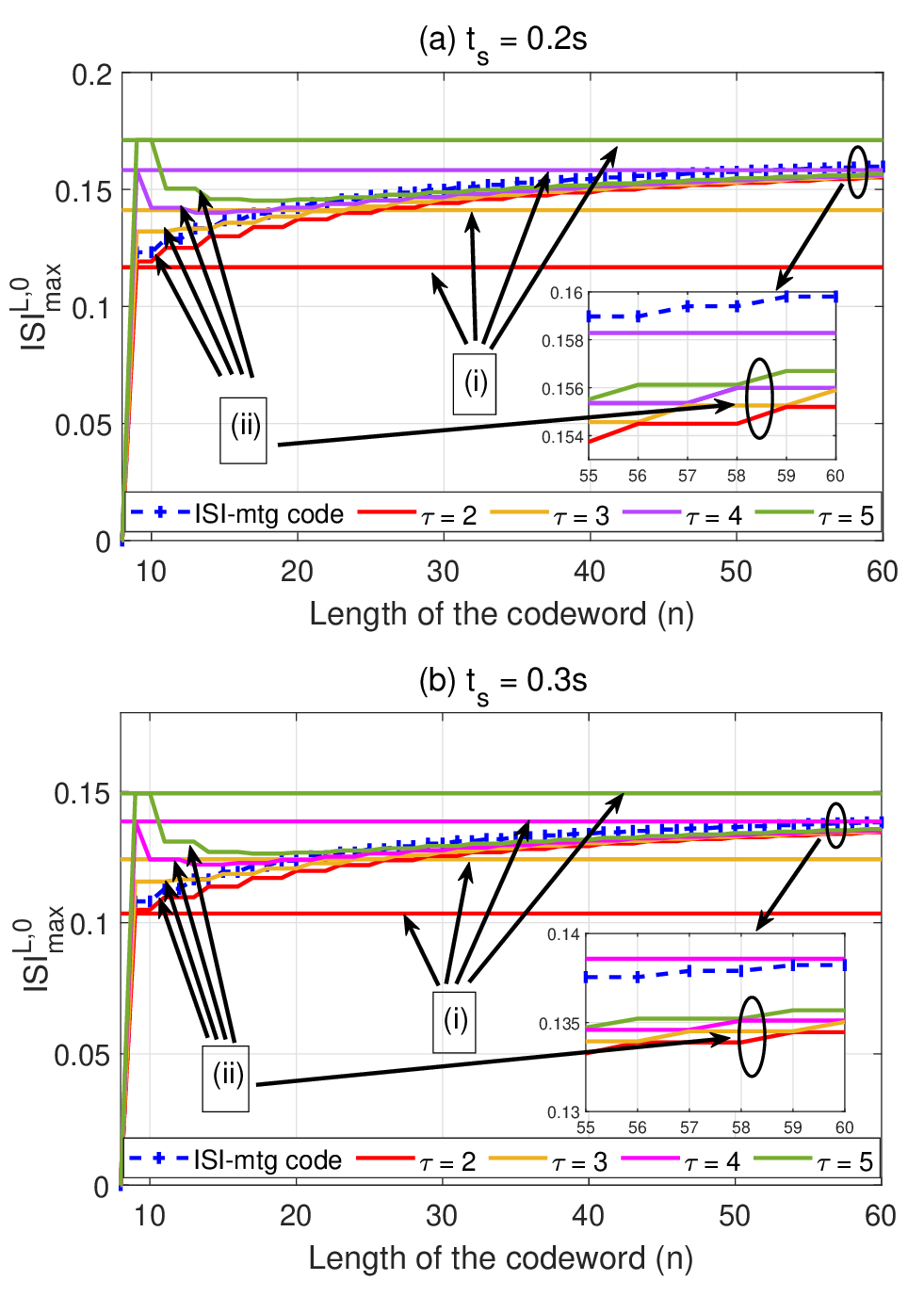}
		\caption{Comparison of the maximum ISI experienced by a bit-0 for ISI-mtg code $\textbf{CW}_n$ and LOZP code $\mathcal{C}^{\tau}_{q}$, where the term (i) represents $(\tau+1)$-th bit ISI and the term (ii) denotes $(n-d_q+1)$-th bit ISI in the LOZP code $\mathcal{C}^{\tau}_{q}$ with different sampling times $t_s$ (0.2s and 0.3s), consecutive bit-1 block size $\tau$ and zero padded block size $d_i=3$ $\forall i$ (with channel refresh and $L = n-1$).}
		\label{fig_isi_compare_mtg_lozp}
	\end{figure}
	\begin{figure}
		\centering
		\includegraphics[width = 0.86\linewidth]{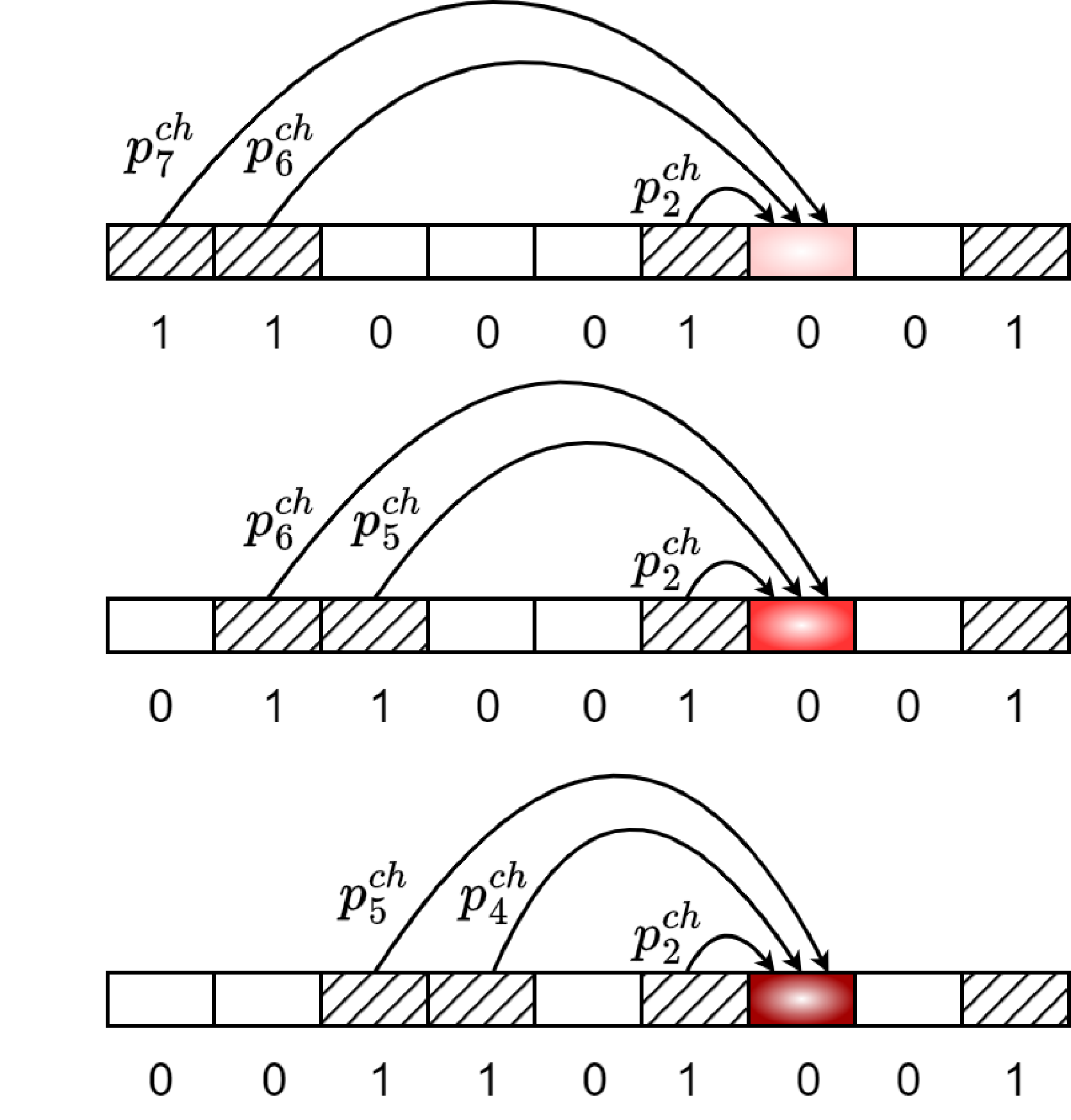}
		\caption{Comparison of ISI effect on the first bit-0 of the last zero padding block as the consecutive bit-1 block shifts rightwards with $L = 8$ (with channel refresh).}
		\label{fig_isi_effect_cons_bit1_block}
	\end{figure}
	From Fig. \ref{fig_isi_effect_cons_bit1_block}, we observe that as the consecutive bit-1 block $\mathbf{1}_{1,2}$ shifts right within any sequence $\mathbf{c} = 1 1 0 0 0 1 0 0 1$, the ISI effect on the $7$-th bit, i.e., the first bit-0 of the last zero padding block, increases. This holds for other bit-0 locations following the consecutive bit-1 block, as $p^{\mathrm{ch}}_k$ is a decreasing function of $k$ and the length of the zero padding block decreases.
	Therefore, from Remark \ref{isi_lozp_code_bound}, we can deduce the following lemma illustrating the advantage of the LOZP code over the OMP and OEP codes, considering a given length and maximum weight.
	
	\begin{figure*}[h!]
		\centering
		\includegraphics[width=1\linewidth]{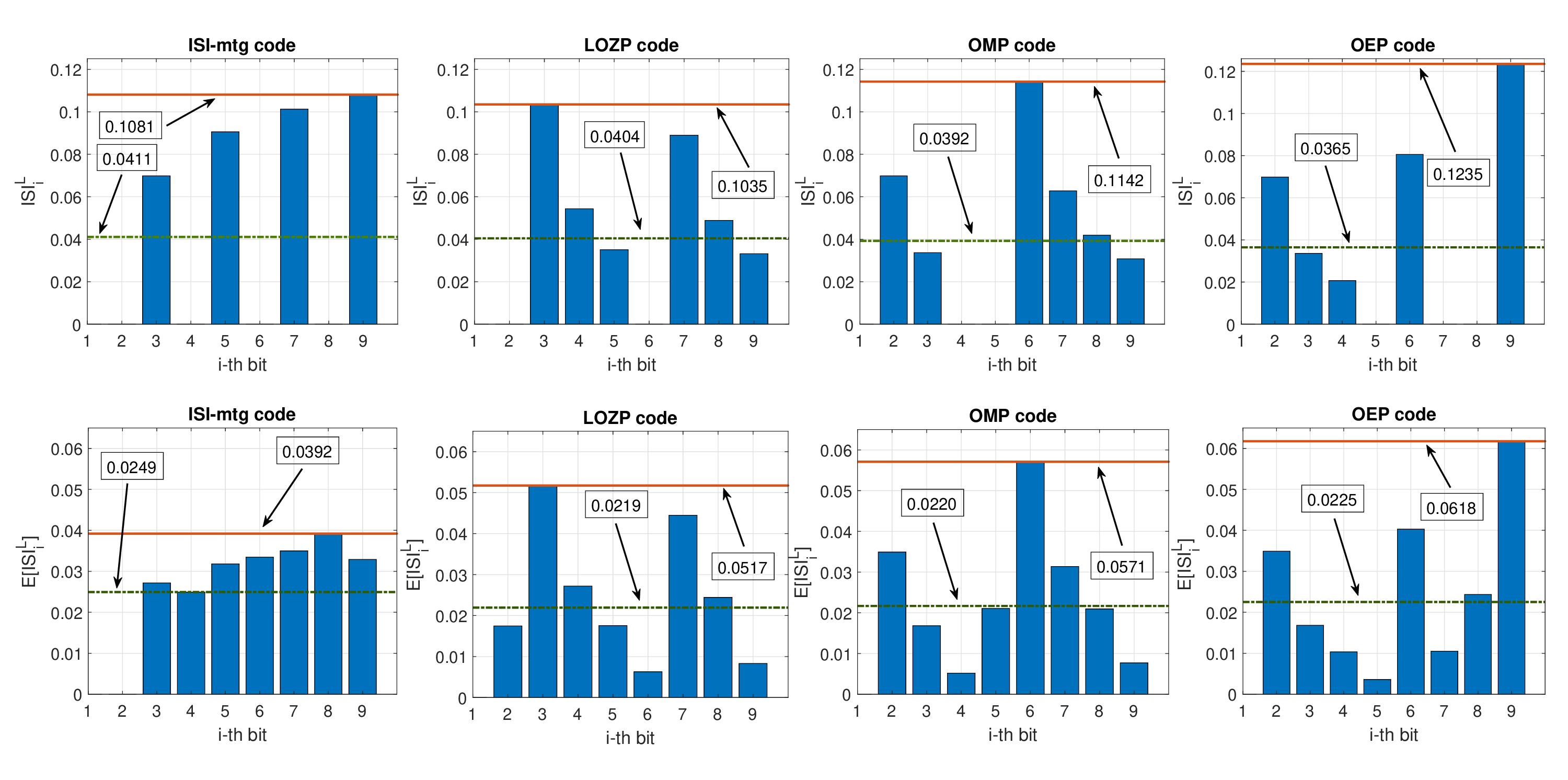}  
		\caption{Comparison of the $i$-th bit ISI with the codeword experiencing the maximum ISI and the $i$-th bit expected ISI of the respective binary codes with length $ (n) = 9$, maximum weight = 4  and channel memory $L = n-1$ at $t_s = 0.3$s.}
		\label{avg_isi_max_isi_bar_fig}
	\end{figure*}
	
	\begin{lemma}\label{bit_1_position_lemma}
		For the given positive integers $d_i\ (\geq 2)$, $q$ and $\tau\ (\geq 2)$, consider a codeword $\c$ of weight $\tau+q-1$ in a $[\tau+\sum_{i=1}^qd_i,q+\tau]$ LOZP code $\mathcal{C}^{\tau}_q$. Considering a channel with refresh, for any consecutive bit-1 block $\c(j,j+\tau-1) = \mathbf{1}_{1,\tau}$, $\mathrm{ISI}^{L,0}_{\mathrm{max}}(\c)$ increases with $j$ for $j = 1,2,\ldots,n-\tau-2q+1$. 
	\end{lemma}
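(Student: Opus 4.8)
The plan is to isolate a single, well-chosen bit-$0$ whose ISI dominates, and to show that pushing the consecutive bit-$1$ block rightward can only decrease the distances from that block to this bit-$0$, so that the strict monotone decay of $p^{\mathrm{ch}}$ drives the ISI upward. Concretely, for a weight-$(\tau+q-1)$ codeword of $\mathcal{C}^{\tau}_q$ I would place the block $\c(j,j+\tau-1)=\mathbf{1}_{1,\tau}$ and keep the remaining $q-1$ isolated bit-$1$s, each with its zero padding $d_i\geq 2$, pushed toward the tail at positions that do not depend on $j$; all positions $1,\ldots,j-1$ are then $0$. In particular the first bit-$0$ of the last zero-padding block occupies a fixed position, say $i^{*}$, independent of $j$. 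I would note that the admissible range $j=1,\ldots,n-\tau-2q+1$ (with $n=\tau+\sum_{i=1}^{q}d_i$ from Lemma~\ref{constr4}) is precisely the set of shifts for which the block stays strictly to the left of every trailing bit-$1$ together with its padding, so that $\c$ remains a valid LOZP codeword and $i^{*}$ stays well defined; this endpoint I would verify by a packing count of the positions used by the $\tau$ block entries, the trailing ones, and their minimal zero padding.

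Next I would write $\mathrm{ISI}^{L}_{i^{*}}(\c)$ explicitly. By Theorem~\ref{ISI_density_theorem} (equivalently the case analysis of Lemma~\ref{isi_expected_lozp_code}), a bit-$1$ at position $p<i^{*}$ contributes the single term $p^{\mathrm{ch}}_{i^{*}-p+1}$ to the ISI on bit $i^{*}$. Hence $\mathrm{ISI}^{L}_{i^{*}}(\c)=\sum_{m=0}^{\tau-1}p^{\mathrm{ch}}_{i^{*}-j-m+1}+\sum_{p\in\mathcal{T}}p^{\mathrm{ch}}_{i^{*}-p+1}$, where $\mathcal{T}$ collects the positions of the trailing isolated bit-$1$s lying before $i^{*}$. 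Replacing $j$ by $j+1$ leaves every term indexed by $\mathcal{T}$ untouched, since those bit-$1$s are fixed, while each of the $\tau$ block terms has its index lowered by one. Because $p^{\mathrm{ch}}_k$ is strictly decreasing in $k$, every block term strictly increases, and therefore $\mathrm{ISI}^{L}_{i^{*}}(\c)$ strictly increases with $j$.

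Finally I would upgrade this to the statement about $\mathrm{ISI}^{L,0}_{\mathrm{max}}(\c)$. Using \eqref{isi_max_0_lozp} from Remark~\ref{isi_lozp_code_bound}, the maximum over bit-$0$s is attained either at the bit-$0$ immediately after the block, whose value $\sum_{k=2}^{\tau+1}p^{\mathrm{ch}}_k$ is independent of $j$ because no bit-$1$ precedes the block, or at $i^{*}$; the monotone decay of $p^{\mathrm{ch}}$ together with the shrinking of the zero-padding block ahead of $i^{*}$ shows that $i^{*}$ realises the maximum on the stated range. Combined with the previous paragraph, this gives that $\mathrm{ISI}^{L,0}_{\mathrm{max}}(\c)$ increases with $j$.

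The main obstacle I anticipate is this last step: I must rule out the $j$-independent candidate overtaking the term at $i^{*}$ anywhere within the range, i.e.\ confirm that $i^{*}$ genuinely attains the maximum throughout $j=1,\ldots,n-\tau-2q+1$, and in tandem pin down that the range endpoint is exactly where the block would first collide with the trailing structure. Once the strict monotonicity of $p^{\mathrm{ch}}_k$ is invoked, the distance bookkeeping in the middle step is routine.
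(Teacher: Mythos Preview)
Your proposal is essentially the same approach as the paper's. Both arguments fix attention on the first bit-$0$ of the last zero-padding block (your $i^{*}$; the paper's ``first bit-0 of the $q$-th zero padding block''), observe that shifting the $\tau$-block rightward decreases the distances from the block to $i^{*}$, and conclude via the strict decay of $p^{\mathrm{ch}}_k$ that the ISI at $i^{*}$ increases. The paper phrases the shift as ``$d_q$ decreases for a constant $n$-length sequence'' and simply quotes the second term of \eqref{isi_max_0_lozp}; your version, which keeps the trailing isolated ones at $j$-independent positions and writes $\mathrm{ISI}^{L}_{i^{*}}(\c)$ as the sum of a $j$-dependent block contribution and a $j$-independent tail contribution, is the cleaner bookkeeping of the same computation.

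The obstacle you flag---checking that $i^{*}$ actually realises $\mathrm{ISI}^{L,0}_{\mathrm{max}}(\c)$ throughout the range, rather than the $j$-independent candidate $\sum_{k=2}^{\tau+1}p^{\mathrm{ch}}_k$---is not addressed in the paper's proof either; the paper simply asserts that the maximum is given by the second term of \eqref{isi_max_0_lozp} and proceeds. So your proposal is at least as complete as the paper's argument, and your explicit justification of the endpoint $n-\tau-2q+1$ via a packing count is additional care the paper does not provide.
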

	\begin{proof}
		For any positive integer $\tau\ (\geq 2)$, consider that the consecutive bit-1 block $\c(j,j+\tau-1) = \mathbf{1}_{1,\tau}$ shifts rightwards of the codeword. Therefore, from the $\mathrm{ISI}_{\mathrm{max}}^{L,0}(\c)$ expression of Remark \ref{isi_lozp_code_bound}, the maximum ISI experienced by the first bit-0 of the $q$-th zero padding block increases. This happens because of the ISI contributing term $\sum_{k = i-\tau+1}^{i} p^{\mathrm{ch}}_k+\sum_{j = 1}^{q-1}p^{\mathrm{ch}}_{\theta_i^j}$ increases as $d_q$ decreases for a constant $n$-length sequence, where $i = 1+\tau+\sum_{m=1}^{q-1}d_m$, $\theta_i^j = 2+\sum_{m=j+1}^{q-1}d_m$ and $n= \tau+\sum_{m=1}^q d_m$.
		
		Hence, for any consecutive bit-1 block $\c(j,j+\tau-1) = \mathbf{1}_{1,\tau}$ in the LOZP code $\mathcal{C}^{\tau}_q$, $\mathrm{ISI}^{L,0}_{\mathrm{max}}(\c)$ increases with $j$ for $j = 1,2,\ldots,n-\tau-2q+1$.
	\end{proof}
	
	From Lemma \ref{bit_1_position_lemma}, we can conclude that positioning $\tau$ number of bit-1s at the start of the sequence is essential to minimize the expected ISI on a bit-0 within the code $\mathcal{C}^{\tau}_q$. Fig. \ref{avg_isi_max_isi_bar_fig} presents a comparative analysis of the maximum ISI experienced by a codeword and the expected ISI between the ISI-mtg code, LOZP code, linear OMP code and linear OEP code for a given length ($n = 9$) and maximum weight (4). In this figure, we consider the generator matrices
	\begin{align*}
		&  G_1= \begin{bmatrix}
			\mathbf{I}_{2} & \mathbf{0}_{2,4} & \mathbf{0}_{2,3}\\
			\mathbf{0}_{1,5} & 1 & \mathbf{0}_{1,3}\\
			\mathbf{0}_{1,7} & 0 & 1\\
		\end{bmatrix},     
		G_2= \begin{bmatrix}
			1 & \mathbf{0}_{1,5} & \mathbf{0}_{1,3}\\
			\mathbf{0}_{2,4} & \mathbf{I}_{2} & \mathbf{0}_{2,3}\\
			\mathbf{0}_{1,7} & 0 & 1\\
		\end{bmatrix}, \mbox{ and }\\
		&\hskip 2cm G_3= \begin{bmatrix}
			1 & \mathbf{0}_{1,7} & 0\\
			\mathbf{0}_{1,4} & 1 & \mathbf{0}_{1,4}\\
			\mathbf{0}_{2,6} & \mathbf{I}_{2} &  \mathbf{0}_{2,1}\\
		\end{bmatrix}, 
	\end{align*}
	to construct the linear LOZP, OMP and OEP codes, respectively. Now, the codeword experiencing the maximum ISI for the ISI-mtg, LOZP, OMP and OEP codes are respectively $010101010,110001001,1000110000,100010110$. In the LOZP and OMP codes, the third and sixth bits in the codeword experience the highest ISI, respectively. Conversely, in the ISI-mtg and OEP code, it is the last bit that encounters the most ISI resulting in a probable bit-flip and subsequent error. From Fig. \ref{avg_isi_max_isi_bar_fig}, it is evident that among all the mentioned codes 
	\begin{itemize}
		\item the LOZP code exhibits the lowest expected ISI before transmitting a bit-0, and
		\item the codeword, subject to the most ISI in the LOZP code, experiences the least maximum ISI among the ISI-mtg, the linear LOZP, OMP and OEP codes, thereby leading to an enhanced BER performance.
	\end{itemize}
	
	\section{Encoding and Decoding of ZP and LOZP Codes}\label{Sec 4}
	This section discusses the encoding of linear ZPZS and non-linear ZP codes, followed by a simple location-based decoding mechanism of the proposed ZP codes.
	\subsection{Encoding Technique}
	For the linear ZP codes $\mathcal{C}_{d,q}$, $T(\mathcal{C}_{d,q})$, $\mathcal{C}_q$ and $T(\mathcal{C}_q)$, any message bit sequence can be encoded using the generator matrices $G_{d,q}$, $T(G_{d,q})$, $G_q$ and $T(G_q)$, respectively. Similarly, for the linear LOZP codes $\mathcal{C}^{\tau}_{d,q}$ and $\mathcal{C}^{\tau}_{q}$, any message bit sequence can be encoded using the generator matrices $G^{\tau}_{d,q}$ and $G^{\tau}_q$ respectively. 
	
	In general, for the encoding mechanism, consider an $(n,\mathcal{S})$ ZP binary code $\mathcal{C}\cup T(\mathcal{C})$, where the generator matrix of code $\mathcal{C}$ is $G$. 
	Also, consider that the message bit sequence $\textbf{m}$ = $m_1m_2\ldots m_{k+1}$ of length $k+1$ is encoded into the sequence $\mathbf{c}$ of length $n$ such that $\mathbf{c}$ = $[m_2\ m_3 \ldots\ m_{k+1}]\cdot G^*$, where $G^*$ = $G$ for $m_1=0$, and $G^*$ = $T(G)$ for $m_1=1$.
	Now, both the sequence $\textbf{0}_{1,k+1}$ and $1\textbf{0}_{1,k}$ of length $k+1$ is encoded to the sequence $\textbf{0}_{1,n}$. Therefore, we choose the code parameters $d_i$ and $q$ in such a way that we can avoid the all-zero message sequence ($\textbf{0}_{1,k+1}$) in the MCvD channel. 
	
	\subsection{Decoding Technique}
	For the linear ZP codes ($\mathcal{C}_{d,q}$, $T(\mathcal{C}_{d,q})$, $\mathcal{C}_q$, $T(\mathcal{C}_q)$) and linear LOZP codes ($\mathcal{C}^{\tau}_{d,q}$,  $\mathcal{C}^{\tau}_{q}$), the decoding can be done using the simple parity check matrix \cite{983680}. 
	However, we must systematically elaborate on the decoding techniques to decode the message using ZP codes.
	For decoding of the received sequence $\textbf{r}$ = $r_1r_2\ldots r_n$ of length $n$, consider $\textbf{r}$ is encoded by a ZP code $\mathcal{C}\cup T(\mathcal{C})$ where the generator matrix of the linear code $\mathcal{C}$ is denoted by $G$.
	Now, there are two cases as follows.
	\begin{enumerate}
		\item Case 1 ($\textbf{r}\in\langle G\rangle\cup\langle T(G)\rangle$): For any $\textbf{r}\neq\textbf{0}_{1,n}$ and $\textbf{r}\in\mathcal{C}_q$, one can find $[m_2\ m_3\ \ldots\ m_{k+1}]$ = $\textbf{r}\cdot (G^*)^t$, and the decoded sequence is $\mathbf{\hat{m}}$, where 
		\begin{align}
			\mathbf{\hat{m}}=
			\begin{cases}
				0\hat{m}_2\hat{m}_3\ldots\hat{m}_{k+1} & \mbox{if } r_{2+\chi(j)} = 1,  G^*=G, \\
				1\hat{m}_2\hat{m}_3\ldots \hat{m}_{k+1} & \mbox{if }r_{1+\chi(j)} = 1,  G^*=T(G),
			\end{cases}
		\end{align}
		
		where $\chi(j)=\sum_{i=1}^j d_i$ for $j = 0,1,\ldots,q$, the transpose of the matrix (or the array) $A$ is $A^t$, and $r_{qd+f}$ denotes the $(qd+f)$-th position in the received sequence $\textbf{r}$ for $f \in \{1,2\}$.
		\item Case 2 ($\textbf{r}\notin \langle G\rangle\cup\langle T(G)\rangle$): For any non-negative integers $q$ and $d$ $(\geq 2)$, and $j = 0,1,\ldots,q$, the number of bit-1s at $(2+\sum_{i=1}^jd_i)$-th position (denoted by $\mathcal{K}_2$), and at $(1+\sum_{i=1}^jd_i)$-th position (denoted by $\mathcal{K}_1$) are computed as follows.
		\begin{align}\label{loc_bit_1}
			&\mathcal{K}_1 = r_1+\sum\limits_{\ell=1}^q r_{1+\chi(\ell)}, \mbox{ and }  \mathcal{K}_2 = r_2 + \sum\limits_{\ell=1}^q r_{2+\chi(\ell)},
		\end{align}
		where $\chi(\ell)=\sum_{i=1}^\ell d_i$.
		There are two cases, depending on the conditions on $\mathcal{K}_1$ and $\mathcal{K}_2$.
		\begin{enumerate}
			\item Sub-Case 1 ($\mathcal{K}_1\neq\mathcal{K}_2$): For any non-zero sequence $\mathbf{r}$, one can find $[m_2\ m_3\ \ldots\ m_{k+1}]$ = $\textbf{r}\cdot(G^*)^t$, and the decoded sequence is $\mathbf{\hat{m}}$, where $\mathbf{\hat{m}}$ = $0\hat{m}_2\hat{m}_3\ldots \hat{m}_{k+1}$ for $\mathcal{K}_2>\mathcal{K}_1$ and $G^*=G$, and $\mathbf{\hat{m}}$ = $1\hat{m}_2\hat{m}_3\ldots \hat{m}_{k+1}$ for $\mathcal{K}_1>\mathcal{K}_2$ and $G^*=T(G)$.
			We define this decoding mechanism as Majority Location Rule (MLR) decoding in Algorithm \ref{Encoding Algo}. This algorithm checks whether the maximum number of bit-1s are coming from the generator matrix $G$ or $T(G)$ and then selects the appropriate generator matrix for decoding.
			
			\item Sub-Case 2  ($\mathcal{K}_1$ = $\mathcal{K}_2$): In this Sub-Case, we can not employ the MLR decoding operation directly. We first transform the received sequence $\mathbf{r}$ into $\Tilde{\mathbf{r}}$ using \eqref{map_11_to_10}, which is defined as a pre-decoding operation. Once, we obtain $\Tilde{\mathbf{r}}$, one can similarly find $[m_2\ m_3 \ldots\ m_{k+1}]$ = $\Tilde{\mathbf{r}}\cdot(G^*)^t$ from the MLR algorithm.	
		\end{enumerate}
	\end{enumerate}
	
	\begin{algorithm} 
		\caption{Majority Location Rule Decoding for $(2+\sum_{i=1}^{q} d_i,2^{(q+2)}-1)$ ZP Code $\mathcal{C}\cup T(\mathcal{C})$} 
		\begin{algorithmic} 
			\REQUIRE Received sequence $\textbf{r}$, Generator matrix $G$ and $\mathbf{d} = [d_1\ d_2\ \ldots\ d_q]$
			\ENSURE Message bit sequence $\mathbf{\hat{m}}$
			\STATE$1$: For $i\leq 2+\sum_{j=1}^{q}d_j$, if $r_i=1$ then $\mathrm{loc} = i$
			\STATE$2$: Compute $\mathcal{K}_1$ and $\mathcal{K}_2$ from \eqref{loc_bit_1}
			\STATE$3$: If $\mathcal{K}_2>\mathcal{K}_1$ then $\mathbf{\hat{m}} = [\hat{m}_1 ~\mathbf{\hat{m}}']$, where $\hat{m}_1 = 0$, $\mathbf{\hat{m}}'$ = $\mathbf{r}\cdot G^t$, and if $\mathcal{K}_1>\mathcal{K}_2$  then $\mathbf{\hat{m}} = [\hat{m}_1 ~\mathbf{\hat{m}}']$, where $\hat{m}_1$ $= 1$, $\mathbf{\hat{m}}'$ = $\mathbf{r}\cdot T(G)^t$
			\STATE$4$: If $\mathcal{K}_1=\mathcal{K}_2$ then 
			\begin{itemize}
				\item[4.1] Obtain the transformed sequence from \eqref{map_11_to_10} such that the new sequence does not contain any consecutive bit-1s
				\item[4.2] Go to step 2 and then follow step 3 until $\mathcal{K}_1\neq \mathcal{K}_2$ satisfies
			\end{itemize}
		\end{algorithmic}
		\label{Encoding Algo} 
	\end{algorithm}	
	
	\begin{figure*}
		\centering
		\includegraphics[width=0.95\linewidth]{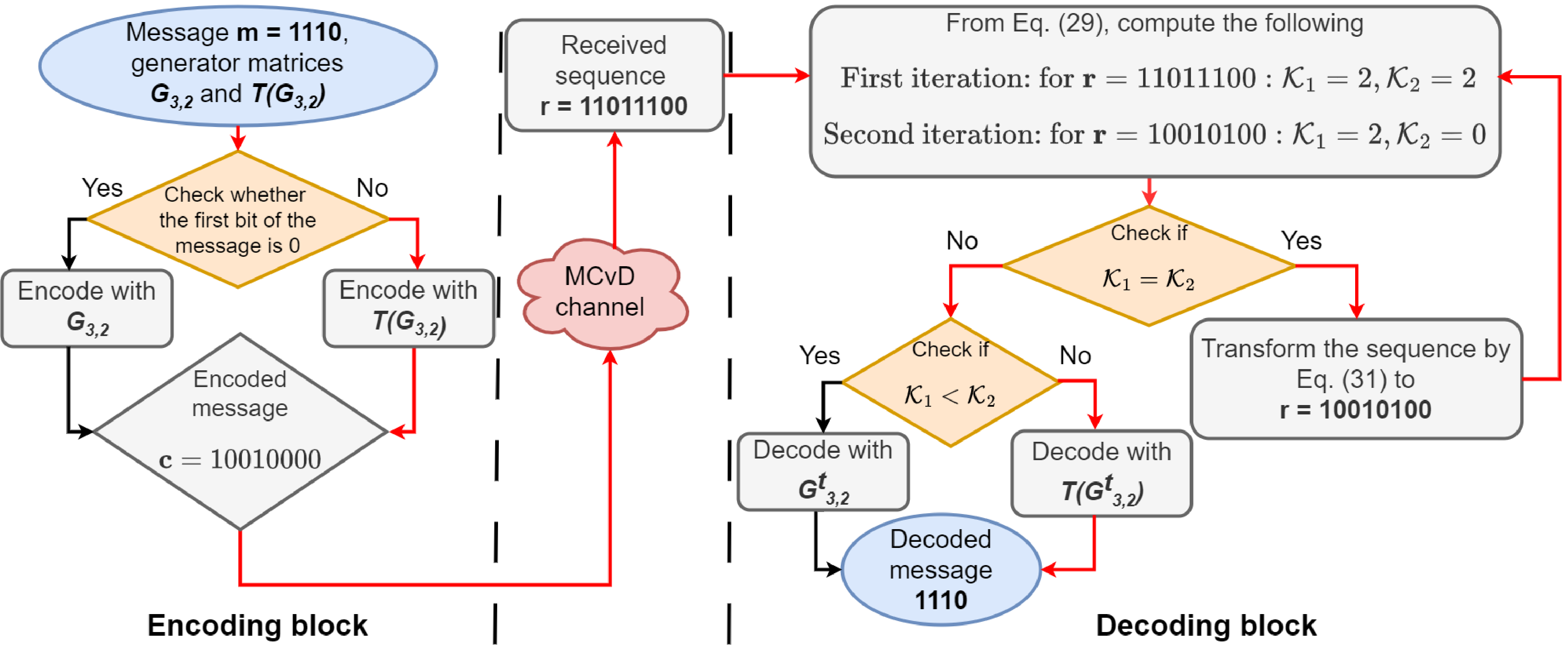}  
		\caption{An example of encoding and MLR decoding of $(8,15)$ ZP code $\mathcal{C}_{3,2}\cup T(\mathcal{C}_{3,2})$ (red line indicates the followed path for the considered example).}
		\label{flowchart_encoding_mlr_decoding}
	\end{figure*}
	
	Now, the effects from previous bits are essential in order to map the received sequence $\mathbf{r}$ into the most probable codeword $\mathbf{c}$. Now, as given in \cite[Eq. (15)]{8972472}, if the $i$-th bit is delayed by $j$ slot then the received number of molecules at the Rx due to these delayed bits is 
	\begin{align}\label{rx_mol_from_prev_time}
		N_{i,j} \!\sim \! \mathcal {N} \left({\! \sum _{k=1}^{i-1}{M  c_kp^{\mathrm{ch}}_{i,k,j} }, \sum _{k=1}^{i-1}{ M   c_kp^{\mathrm{ch}}_{i,k,j}  \left ({1\!-\!p^{\mathrm{ch}}_{i,k,j}  }\right) } \!}\right),
	\end{align}
	for $1\leq i\leq n$ and $0\leq j \leq n$, where $p^{\mathrm{ch}}_{i,k,j} = p^{\mathrm{ch}}_{i-k+j+1}$. Note that if $j = 0$ then the $i$-th transmitted bit arrives at the $i$-th slot itself and therefore, the expression \eqref{rx_mol_from_prev_time} aligns with the expression given in \eqref{rx_molecule_dist}.
	
	In the context of ZP codes, we consider a code $\mathcal{C}_{d,q}$ used for decoding a received bit sequence $\mathbf{r}$ of length $d+2$ with the condition $\mathcal{K}_1=\mathcal{K}_2$. 
	To illustrate the decoding process, consider the example: if the received sequence $\mathbf{r}$ is $11\mathbf{0}_{1,d-2}11$ then the most probable codeword for the received sequence $\mathbf{r}$ of length $n$ can be either $\mathbf{c}_1 = 1\mathbf{0}_{1,d-1}10$ or $\mathbf{c}_2 = 01\mathbf{0}_{1,d-1}1$. 
	The effect of ISI is understood to propagate from a leading bit-1 to the most recent bit-0. This characteristic, combined with the transmission probabilities described in \cite[Eq. (16a), (16b), (16c)]{8972472}, leads to the phenomenon: when two consecutive bit-1s are received, the last bit is transformed into a bit-0. Leveraging these properties leads to the following mapping for sequences of consecutive ones for $i = 1,2,\ldots,n-1$:

\begin{align}\label{map_11_to_10}
	\Tilde{r}_{i+1} =
	\begin{cases}
		0, & \text{if } {r}_{i} = 1 \text{ and } r_{i+1} = 1 \\
		r_{i+1}, & \text{otherwise},
	\end{cases} 
\end{align}
 where the transformed sequence is $\mathbf{\Tilde{r}} = 	\Tilde{r}_1	\Tilde{r}_2 \dots 	\Tilde{r}_n$ with the initial condition $\Tilde{r}_1 = r_1$. 
	Equation \eqref{map_11_to_10} leads to the condition $\mathcal{K}_1 \neq \mathcal{K}_2$. 
	Other received sequences with the condition $\mathcal{K}_1 = \mathcal{K}_2$  can be decoded by either $G^t$ or $T(G^t)$. 
	
	In the encoding algorithm, the dominant complex term involves the multiplication of the message of length $k$ with the generator matrix of dimension $k \times n$.
	Therefore, the complexity of this matrix multiplication is $\mathcal{O}(kn)$, where \( \mathcal{O} \) denotes Big-O notation. While, the decoding complexity is also similar to that of the encoding algorithm due to the matrix multiplication between the received sequence of length $n$ and the transpose of the generator matrix, $i.e.,$ $\mathcal{O}(kn)$.
	Note that for a given rate of the code $R~(=k/n)$, the encoding and decoding complexity depends only on the encoded message length of the sequence, $i.e.,$ $\mathcal{O}(n^2)$. 
	Also, the authors in \cite{10620232} have shown that matrix multiplication can be achieved through different chemical approaches such as controlled substance transport between compartments and precise volume ratios that determine matrix weights.
	Therefore, one can implement the ZP and LOZP codes through the generator and parity check matrix multiplications in an MCvD system.
	
	In Fig. \ref{flowchart_encoding_mlr_decoding}, we consider a $(8,15)$ ZP code $\mathcal{C}_{3,2}\cup T(\mathcal{C}_{3,2})$, where both the encoding and decoding flowchart is discussed with an example ($\mathbf{m} = 1110$).
	For this considered example, it is observed that we can successfully retrieve the original message error-free by our proposed MLR decoding algorithm.
	
	\section{Bounds}\label{Sec 5}
	In this section, we first derive an upper bound on the number of possible codewords for both the proposed families of ZPZS and ZP codes, respectively. We also determine the asymptotic code rate for the ZP codes for different constructions. 
	
	\begin{remark}
		In any ZPZS sequence, each bit-1 is padded with at last one bit-0, and therefore, if the number of ones in a ZPZS sequence is $w$ then $w\leq\lfloor n/2\rfloor$, for any integer $n$ $(>0)$. 
		\label{bound on weight ZPZE}
	\end{remark}
	\begin{lemma}
		For any positive integer $n$, the total number of ZPZS sequences is	$U_n = \sum\limits_{w=0}^{\lfloor n/2\rfloor}\binom{n-w}{w}$.
	\end{lemma}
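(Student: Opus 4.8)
The plan is to count ZPZS sequences by weight $w$ (the number of bit-$1$s) and then sum the per-weight counts over all admissible $w$. First I would fix $w$ and exploit the ZS constraint to strip the leading bit: since every ZPZS sequence $\c = c_1c_2\ldots c_n$ has $c_1 = 0$, the pair $(c_1,c_2)$ can never violate the ZP constraint, so $\c$ is ZP if and only if the suffix $\c(2,n)$ of length $n-1$ is itself free of consecutive bit-$1$s. Deleting the leading $0$ therefore gives a weight-preserving bijection between ZPZS sequences of length $n$ and ZP sequences (no two consecutive bit-$1$s, no start restriction) of length $n-1$.

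The key step is then a purely combinatorial count: the number of binary strings of length $m$ with exactly $w$ bit-$1$s, no two of them adjacent, is $\binom{m-w+1}{w}$. I would prove this by the standard shift bijection. If the bit-$1$ positions are $p_1 < p_2 < \cdots < p_w$ with $p_{j+1} \geq p_j + 2$, set $q_j = p_j - (j-1)$; then $q_1 < q_2 < \cdots < q_w$ is strictly increasing and drawn from $\{1,2,\ldots,m-w+1\}$, and the map $(p_j)\mapsto(q_j)$ is a bijection onto all such $w$-subsets, of which there are exactly $\binom{m-w+1}{w}$. Applying this with $m = n-1$ shows that the number of ZPZS sequences of length $n$ and weight $w$ equals $\binom{n-w}{w}$.

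Finally I would sum over $w$. By Remark~\ref{bound on weight ZPZE}, any ZPZS sequence of length $n$ has weight $w \leq \lfloor n/2\rfloor$, so $w$ ranges over $0,1,\ldots,\lfloor n/2\rfloor$ (and $\binom{n-w}{w}$ vanishes once $w > \lfloor n/2\rfloor$, which is consistent with this bound). Collecting the counts over all weights then yields $U_n = \sum_{w=0}^{\lfloor n/2\rfloor}\binom{n-w}{w}$, as claimed.

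I expect the main obstacle to be verifying the non-adjacency count cleanly: specifically, checking that the shift map actually lands in the range $\{1,\ldots,m-w+1\}$ and is a genuine bijection (both injectivity and surjectivity), since the endpoint bookkeeping is exactly where off-by-one errors tend to creep in. The decoupling of the leading zero and the final summation are routine once this count is established.
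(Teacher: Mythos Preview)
Your proposal is correct and takes essentially the same approach as the paper: both count ZPZS sequences by weight $w$, establish that there are $\binom{n-w}{w}$ such sequences for each $w$, and then sum over $0 \le w \le \lfloor n/2\rfloor$ via Remark~\ref{bound on weight ZPZE}. The paper's argument for the per-weight count is terser---it observes directly that each bit-$1$ is preceded by a bit-$0$ and groups them into $w$ ``$01$'' blocks plus $n-2w$ loose zeros, giving $\binom{n-w}{w}$ arrangements---whereas you first strip the leading zero and then apply the shift bijection on the length-$(n-1)$ suffix; both routes are standard and yield the same count.
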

	\begin{proof}
		Each bit-1 is padded with at least one bit-0 in each ZPZS sequence.
		Therefore, if the number of one's in a ZPZS sequence of length $n$ is $w$ then there are $n-w$ positions where bit-0 can be allotted. 
		There are $\binom{n-w}{w}$ ZPZS sequences with $w$ ones.
		Hence, the result follows from Remark \ref{bound on weight ZPZE}.
	\end{proof}	
	For any $(n,U_n)$ ZPZS code $\mathcal{C}_n$ can be constructed as $\mathcal{C}_n=\{\a01,\b0:\a\in\mathcal{C}_{n-2}\mbox{ and }\b\in\mathcal{C}_{n-1}\}, \mbox{ for }  n\geq 3 $ where $\mathcal{C}_1$ = $\{0\}$ and $\mathcal{C}_2$ = $\{00,01\}$.
	Therefore, for $3\leq n$, $	U_n=U_{n-1}+U_{n-2}$ with the initial conditions $U_1$ = $1$ and $U_2$ = $2$.
	After solving the recurrence relation, one can obtain 
	\begin{equation}
		U_n = \frac{\sqrt{5}-1}{2\sqrt{5}}\left(\frac{1-\sqrt{5}}{2}\right)^n + \frac{\sqrt{5}+1}{2\sqrt{5}}\left(\frac{1+\sqrt{5}}{2}\right)^n,
		\label{Code rate ZPZS sequences}
	\end{equation}
	with the code rate being defined as $R^{U_n} = \frac{1}{n}\log_2 U_n$.
	Hence, from \eqref{Code rate ZPZS sequences}, one can observe $R^{U_n} \to\log_2\left(\frac{\sqrt{5}+1}{2}\right)\approx 0.6942$ as $n\to\infty$. 	
	
	\begin{remark}
		In any ZP sequence, except the last bit, each bit-1 is padded with a bit-0, and therefore, if the number of ones in a ZP sequence is $w$ then $w\leq\lceil\frac{n}{2}\rceil$. 
		And thus, for any positive integer $n$, the total number of ZP sequences is $T_n = \sum_{w = 0}^{\left\lceil n/2\right\rceil}\binom{n-w+1}{w}$.
	\end{remark}
	
	Any $(n,T_n)$ ZP code $\mathcal{C}_n$ can be constructed as $\mathcal{C}_n=\{\a01,\b0:\a\in\mathcal{C}_{n-2}\mbox{ and }\b\in\mathcal{C}_{n-1}\}, \mbox{ for }3\leq n$ where $\mathcal{C}_1$ = $\{0,1\}$ and $\mathcal{C}_2$ = $\{00,10,01\}$.
	Therefore, for $3\leq n$, $	T_n=T_{n-1}+T_{n-2}$ with the initial conditions $T_1$ = $2$ and $T_2$ = $3$.
	After solving the recurrence relation, one can obtain
	\begin{equation}
		T_n = \frac{\sqrt{5}-3}{2\sqrt{5}}\left(\frac{1-\sqrt{5}}{2}\right)^n + \frac{\sqrt{5}+3}{2\sqrt{5}}\left(\frac{1+\sqrt{5}}{2}\right)^n,    
		\label{Code rate ZP sequences}
	\end{equation}
	with the code rate being defined as $R^{T_n} = \frac{1}{n}\log_2 T_n$.
	Hence, from \eqref{Code rate ZP sequences}, one can observe $R^{T_n}\to\log_2\left(\frac{\sqrt{5}+1}{2}\right)\approx 0.6942$ as $n\to\infty$.
	
	Following are some remarks on the asymptotic code rate of the proposed ZP codes.
	\begin{remark}\label{bound1}
		Let us consider any positive integer $q$ and $d_i$ $(\geq 2)$ for $i = 1,2,\ldots,q$, $s.t.,$ $d_i$ is bounded by some positive constant integer $\mathcal{D}$, $i.e.,$ $d_i\leq\mathcal{D}$. Now, for any positive integers $n$ and $\mathcal{S}$ with $n = 2+\sum_{i=1}^{q} d_i$ and $\mathcal{S} = 2^{(q+2)}-1$, the asymptotic code rate of an $(n,\mathcal{S})$ ZP code is
		
		\begin{align}
			\lim\limits_{n\to \infty}\left(\frac{1}{n}\log_2\mathcal{S}\right) = & \lim\limits_{q\to \infty} \frac{\log_2(2^{q+2}-1)}{2+\sum\limits_{i=1}^qd_i} \notag \\ = & \lim\limits_{q\to \infty} \frac{\log_2(2^{q+2}-1)}{2+q\Tilde{d}}\notag \\
			\leq & \lim\limits_{q\to \infty} \frac{q+2}{2+q\Tilde{d}} \notag \\ 
			= & \frac{1}{\Tilde{d}}.
		\end{align}		
		where $\Tilde{d} = \frac{1}{q}\sum\limits_{i=1}^q d_i$.
		Note that the condition $2\leq d_i\leq \mathcal{D}$ leads to the fact that $2\leq\lim\limits_{q\to\infty}\Tilde{d}\leq\mathcal{D}$.
	\end{remark}
	\begin{remark}
		From Remark \ref{bound1}, for $d_i = d,\ (i = 1,2,\ldots,q)$, the asymptotic code rate of the $(qd+2,2^{(q+2)}-1)$ ZP code is $\frac{1}{d}$.
	\end{remark}
	\begin{figure*}
		\centering
			\subfloat[ZPZS linear codes.]{\includegraphics[width=0.5\linewidth]{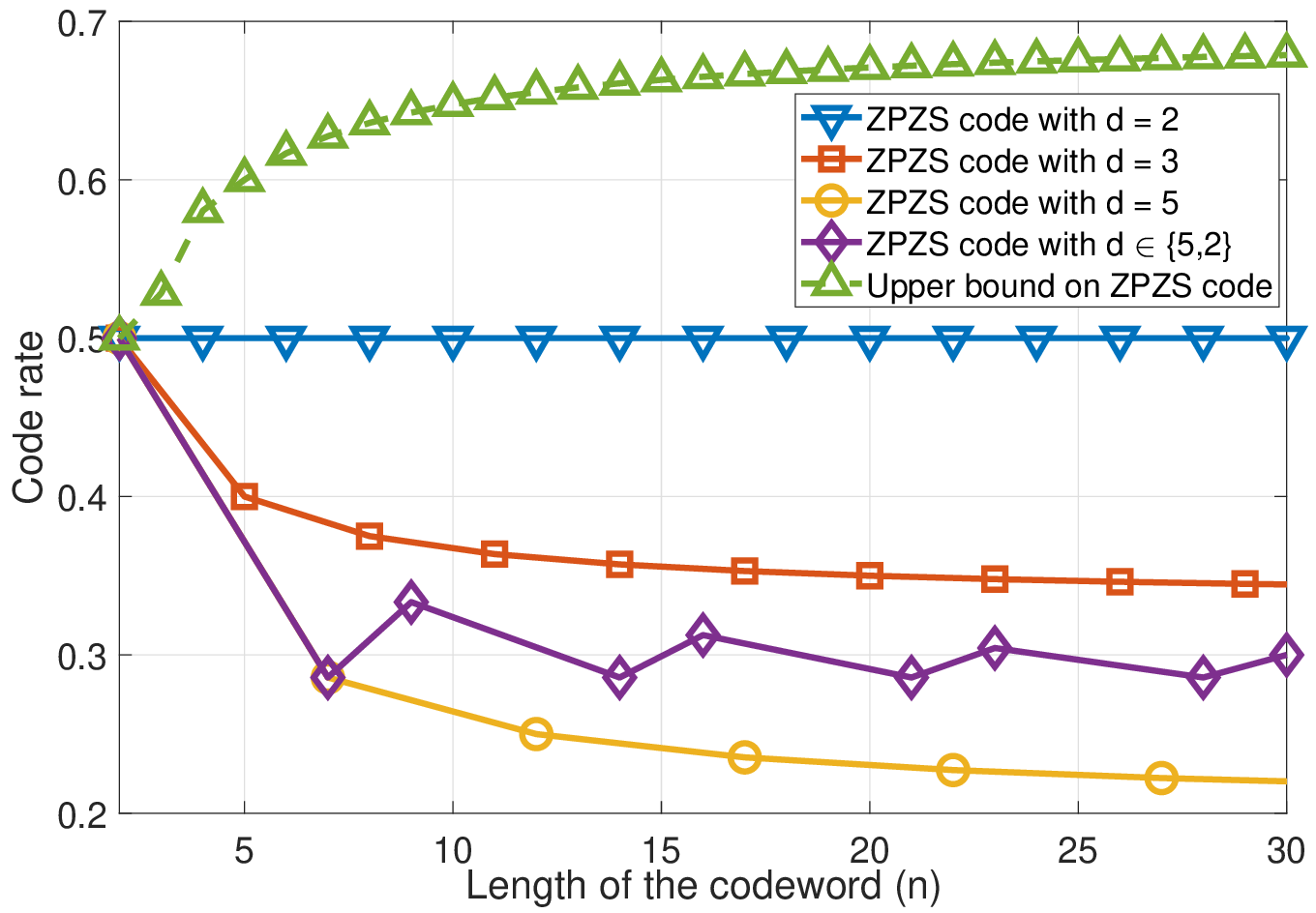}
			\label{code1}}
			\subfloat[ZP codes.]{\includegraphics[width=0.5\linewidth]{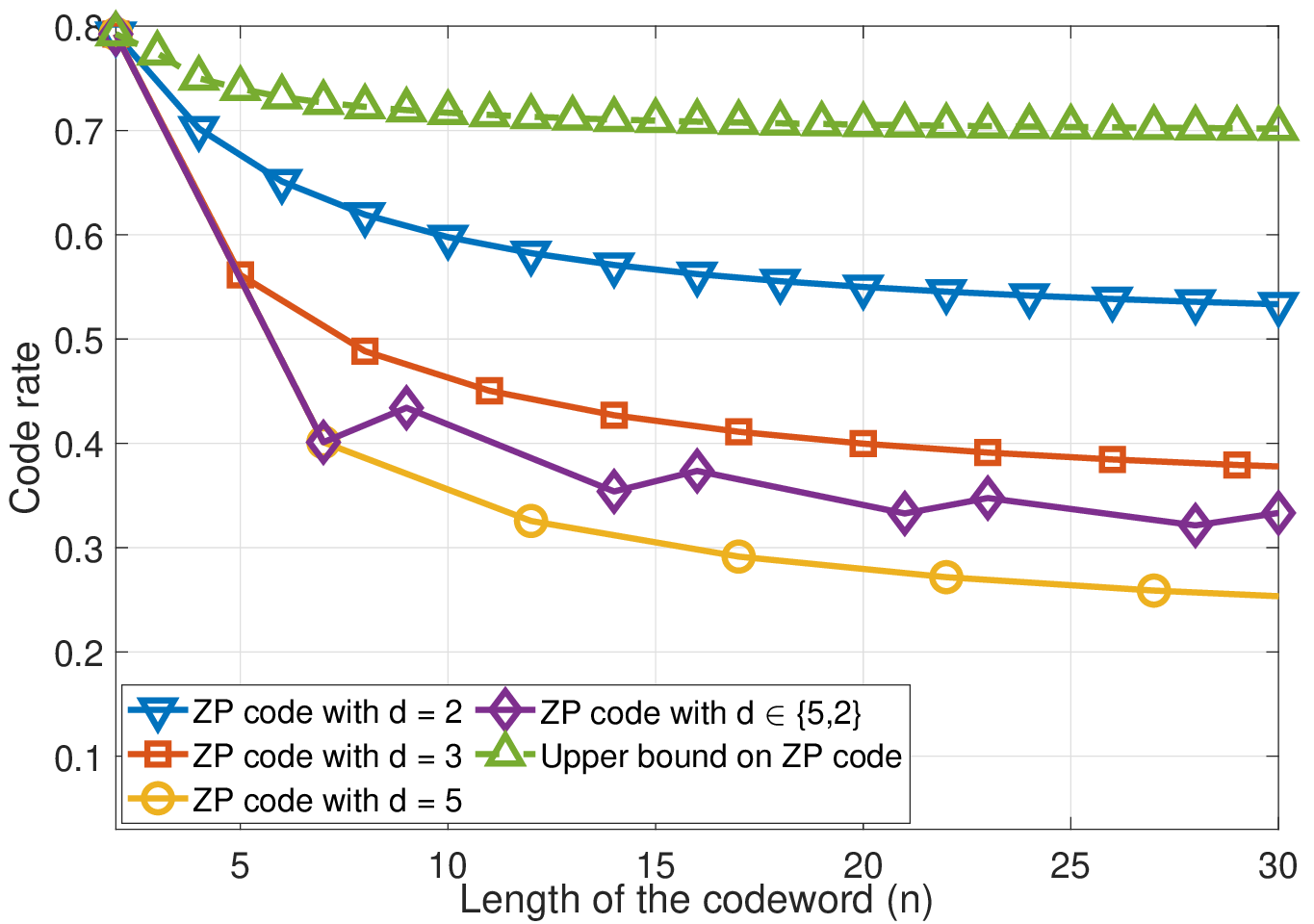}  
			\label{code2}}
		\caption{Comparison of code rate for the ZPZS and ZP codes.}
		\label{coderate}
	\end{figure*}
	In Fig. \ref{coderate} (a), we observe the code rate of ZPZS linear codes in comparison with the upper bounds on code rate with ZPZS constraints.
	On the other hand,  Fig. \ref{coderate} (b) illustrates the comparison of code rates between the proposed family of ZP codes and the upper bound on the code rate with ZP constraints. We observe that for the same code length, the ZP code achieves a better code rate compared to the ZPZS linear code, and with an increasing value of $d$, the code rate decreases for both the ZPZS and ZP codes. Note that the periodic nature (with a period of two) in the code rate for the ZP code with $d\in \{5,2\}$ arises due to two different number of zero padding between two consecutive bit-1s.
	
	Similarly, we can obtain the asymptotic code rate for the LOZP code as given in Remark \ref{asymptotic_rate_lozp}.
	\begin{remark}\label{asymptotic_rate_lozp}
		From Lemma \ref{constr4} and Remark \ref{bound1}, for any positive integers $q$, $\tau ~(\geq 2)$, $n$ and $k$ with $n = \tau+\sum_{i=1}^{q} d_i$, $k = \tau+q$ and $d_i\leq\mathcal{D}$ for $i=1,2,\ldots,q$, the asymptotic code rate of an $[n,k]$ LOZP code is
		\begin{align}
			&\lim\limits_{n\to \infty}\frac{k}{n} =  
			\begin{cases}
				\lim\limits_{\frac{\tau}{q}\to\infty}\frac{\frac{\tau}{q}+1}{\frac{\tau}{q}+\Tilde{d}} = 1, & \mbox{for } \tau\gg q \\
				\lim\limits_{\frac{q}{\tau}\to\infty}\frac{\frac{\tau}{q}+1}{\frac{\tau}{q}+\Tilde{d}} = \frac{1}{\Tilde{d}}, & \mbox{for } \tau\ll q,
			\end{cases} 
		\end{align}	
		where $\Tilde{d} = \frac{1}{q}\sum\limits_{i=1}^q d_i$.
	\end{remark}	
	
	\section{Performance Evaluation}\label{Sec 6}
	In this section, we compare the expected ISI and average BER performance for different linear codes: (i) single error correcting $[7, 4]$ Hamming code \cite{7273857}, (ii) two error correcting $[8, 4]$ Reed Solomon code \cite{7859349} and non-linear codes: (i) $(7, 20)$ ISI-mtg code, (ii) $(5, 7)$ ISI-mtg code \cite{8972472} and the (iii) un-coded case with our proposed families of channel codes: (i) $(6, 15)$ ZP code $\mathcal{C}_{2,2}\cup T(\mathcal{C}_{2,2})$, (ii) $(5, 7)$ ZP code $\mathcal{C}_{3,1}\cup T(\mathcal{C}_{3,1})$, (iii) $(9, 15)$ ZP code $\mathcal{C}_{2}\cup T(\mathcal{C}_{2})$ with $d \in \{5,2\}$, (iv) $(7, 7)$ ZP code $\mathcal{C}_{5,1}\cup T(\mathcal{C}_{5,1})$ and (v) $[8, 5]$ LOZP code $\mathcal{C}^2_{2,3}$, (vi) $[12, 7]$ LOZP code $\mathcal{C}^2_{2,5}$.
	Through the encoding and decoding mechanisms over $10^7$ transmitted message blocks with  Monte Carlo simulations, the average BER of the proposed codes is determined with the simulation parameters outlined in Table II. Note that each transmitted message block explicitly refers to one complete codeword generated by the respective channel codes.
		Also, the average BER of the codes is obtained with a fixed threshold detector with an optimized threshold.
	
	The subsequent sections are divided into four parts: (a) effect of ISI with different codebook (section \ref{subsubsec_channel_without_ref_isi}),  (b) average BER performance with transmitted number of molecules (section \ref{subsubsec_channel_without_ref_ber_mol}), (c) average BER performance with receiver noise (section \ref{subsubsec_channel_without_ref_ber_noise}) and (d) effect of channel refresh (section \ref{subsubsec_channel_with_refresh}.
	These sections are further divided into two parts: analyzing the performance of the coded system in the MCvD channel across two different data rate regions: (i) one corresponding to $t_s = 0.2$s, and (ii) another corresponding to $t_s = 0.3$s. 
	The choice of this symbol duration $(t_s)$ largely depends on the distance between the Tx and Rx to ensure most of the molecules reach at the Rx within this interval \cite{7331300}. 
	However, the authors also show that the data rate is inversely proportionate to the symbol duration, and hence the value of $t_s$ should be chosen accordingly.
	For the parameters in Table \ref{tab1}, the channel coefficients $p^{\mathrm{ch}}_1$, $p^{\mathrm{ch}}_2$ and $p^{\mathrm{ch}}_3$ are $0.1875$, $0.0777$ and $0.0390$ at $t_s = 0.2$s and $0.2344$, $0.0698$ and $0.0336$ at $t_s = 0.3$s, respectively. 
	Therefore, a small change of $0.1$s in symbol duration increases the probability of reception of the molecule at first interval while reducing it in subsequent intervals.
	Consequently, the average ISI and BER performance with $t_s = 0.2$s and $0.3$s will also vary due to this difference in channel coefficients, and is therefore discussed in the subsequent sections.

	\begin{table}[t]
		\centering
		\caption{Simulation parameters \cite{7331300}}
		\begin{tabular}{|l|c|c|}
			\hline				
			{{Parameter}}& {{Symbol}}& {{Value}} \\
			\hline\hline
			Radius of the Rx & $r_{0}$& $5 \:\mu$m \\
			Distance between PT and Rx & $d_{\mathrm{tr}}$& $10 \:\mu$m \\
			Diffusion coefficient & $D^{\mathrm{ch}}$& $79.4 \mu\text{m}^2/s$ \\
			Symbol duration & $t_s$& $\{0.2,0.3\}$ s   \\
			Number of transmitted molecules for bit-1 & $M$& $200-500$\\
			Channel memory & $L$ & $11-40$\\
			Noise variance & $\sigma_n^2$ & $0-40$ \\        
			\hline
		\end{tabular}
		\label{tab1}
	\end{table}
	
	\begin{figure}[t]
		\centering
		\includegraphics[width=1\linewidth,keepaspectratio]{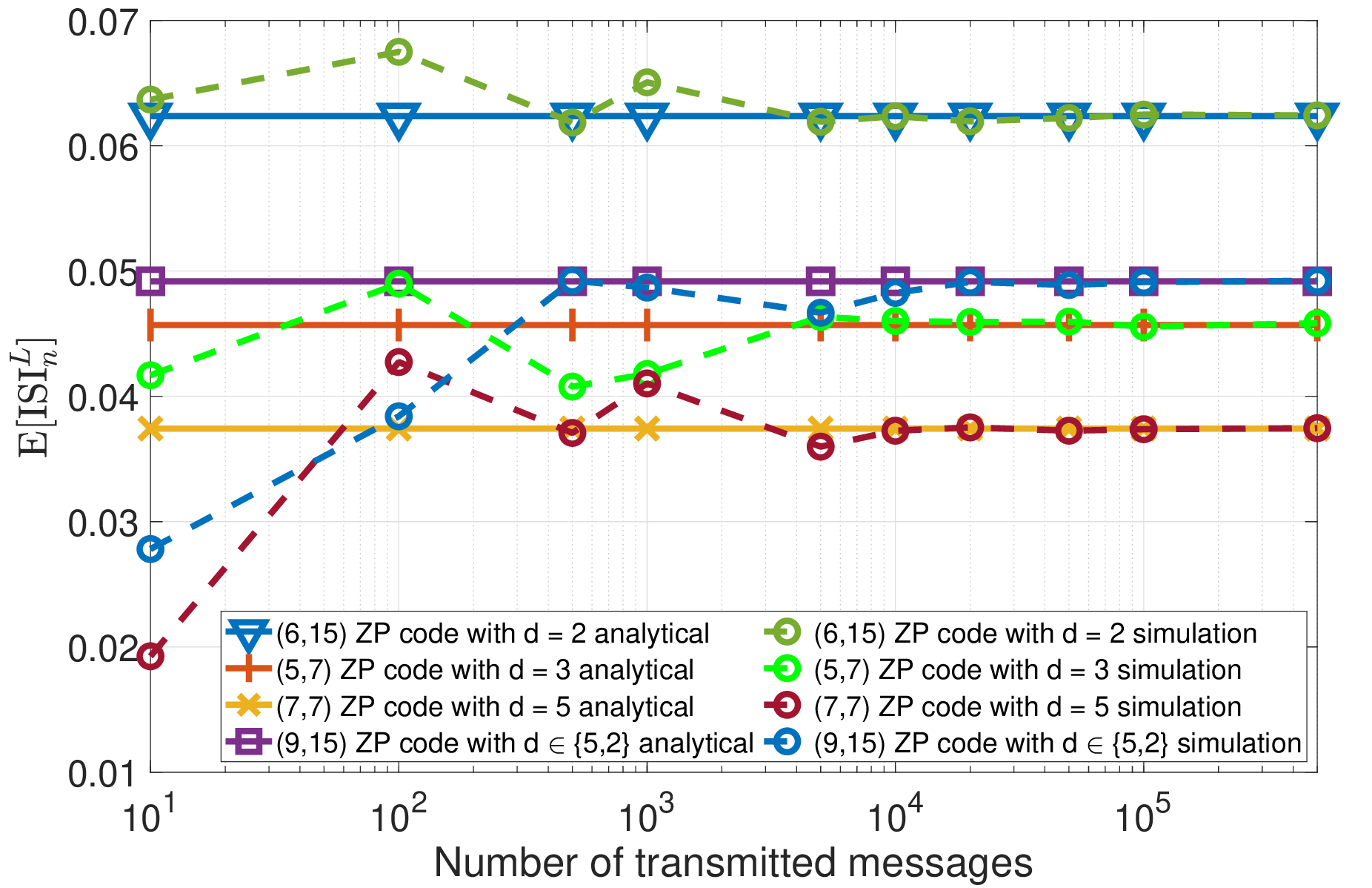} 
		\caption{Comparison of analytical and simulated last bit expected ISI with $t_s = 0.2$s, $L=20$ (without channel refresh).}\label{isi_simulation}
	\end{figure}
	
	\begin{table*}[ht]
		\centering
		\caption{ISI, average bit-1 density and code rate comparison for different channel codes at $t_s = 0.3$s and $L = 11.$}
			\begin{tabular}{|l|c|c|c|c|c|l|}
				\hline
				Code ($\mathcal{C}$)             & Parameter    &   Average bit-1     & $\mathrm{E}[\mathrm{ISI}^L_n]$ & ISI$_{\mathrm{avg}} (\mathcal{C})$  & Code rate         & Remark    \\
				& & density ($\Delta(\mathcal{C})$) & & & & \\ \hline\hline  
				$\mathcal{C}_{5,1}$         & $[7,2]$       & 0.1429  & 0.0088 & 0.0261 & 0.2857            &  ZPZS code with $d = 5$\\ \hline
				$\mathcal{C}_{2}$           & $[9,3]$       &  0.1667  & 0.0244 & 0.0305 & 0.3333            &  ZPZS code with $d \in \{5,2\}$\\ \hline
				$\mathcal{C}_{3,1}$         & $[5,2]$        & 0.2000  & 0.0206  & 0.0366 & 0.4000            &  ZPZS code with $d = 3$\\ \hline
				$\mathcal{C}_{3,2}$         & $[8,3]$        & 0.1875 & 0.0192  & 0.0343 & 0.3750            &  ZPZS code with $d = 3$\\ \hline
				$\mathcal{C}_{2,2}$         & $[6,3]$        &  0.2500  & 0.0331  & 0.0457 & 0.5000            &  ZPZS code with $d = 2$\\ \hline
				$\mathcal{C}_{2,3}$         & $[8,4]$        &  0.2500 & 0.0331   & 0.0457 & 0.5000            &  ZPZS code with $d = 2$\\ \hline
				$\mathcal{C}_{5,1}\cup T(\mathcal{C}_{5,1})$       & $(7,7)$      &  0.1633   & 0.0290  & 0.0298 & 0.4011            &  ZP code with $d = 5$\\	\hline		
				$\mathcal{C}_{2}\cup T(\mathcal{C}_{2})$ & $(9,15)$     & 0.1778    & 0.0398  & 0.0325 & 0.4341            &  ZP code with $d\in\{5,2\}$\\ \hline
				$\mathcal{C}_{3,2}\cup T(\mathcal{C}_{3,2})$       & $(8,15)$     &  0.2000   & 0.0358  & 0.0366 & 0.4884            &  ZP code with $d = 3$\\	\hline		
				$\mathcal{C}_{3,1}\cup T(\mathcal{C}_{3,1})$       & $(5,7)$      & 0.2286   & 0.0407   & 0.0417 & 0.5615            &  ZP code with $d = 3$\\	\hline
				$\mathcal{C}_{2,3}\cup T(\mathcal{C}_{2,3})$       & $(8,31)$     &  0.2581   & 0.0472   & 0.0472 & 0.6193            &  ZP code with $d = 2$\\	\hline	
				$\mathcal{C}_{2,2}\cup T(\mathcal{C}_{2,2})$       & $(6,15)$     &  0.2667   & 0.0487   & 0.0487 & 0.6511            &  ZP code with $d = 2$\\	\hline		
				$\mathcal{C}^2_{2,3}$       & $[8,5]$     &   0.3125  & 0.0365   & 0.0571 & 0.6250            &  LOZP code with $d = 2,$ $ \tau = 2$\\	\hline	
				$\mathcal{C}^2_{2,5}$       & $[12,7]$    &   0.2917   & 0.0349   & 0.0533 & 0.5833            & LOZP code with $d = 2,$ $ \tau = 2$\\	\hline
				OMP code       & $[8, 5]$    &   0.3125   & 0.0377   & 0.0571 & 0.6250            & Based on OMP sequence distribution ($d = 2,$ $ \tau = 2$)\\	\hline
				OEP code       & $[8, 5]$    &   0.3125   & 0.0427   & 0.0571 & 0.6250            & Based on OEP sequence distribution ($d = 2,$ $ \tau = 2$)\\	\hline	
				$\textbf{CW}_7$ \cite{8972472}   & $(7,20)$   &  0.2714   & 0.0494   & 0.0496 & 0.6174            & ISI-mtg code \\ \hline
				$\textbf{CW}_5$ \cite{8972472}   & $(5,7)$    & 0.2857  & 0.0526  & 0.0522 & 0.5615            & ISI-mtg code \\ \hline
				Hamming (7,4) \cite{7273857}   & $[7,4]$      &  0.5000    & 0.0914  & 0.0914 & 0.5714           & Hamming code \\ \hline
				Un-coded                       & $(7,128)$    &  0.5000   & 0.0914  & 0.0914 & 1.0000            & Un-coded case \\ \hline
			\end{tabular}
			\label{table_isi_coderate_all_codes}
		\end{table*}
		
		\subsection{Effect of ISI with Different Codebook}\label{subsubsec_channel_without_ref_isi}
		In Fig. \ref{isi_simulation}, we compare the analytical expression of the expected ISI of the considered ZP codes with the Monte Carlo simulation. Number of transmitted messages varies from 10 to $10^5$, and Fig. \ref{isi_simulation} depicts that for a large number of transmitted message blocks, the analytical expression aligns perfectly with the simulation outcomes, thus validating the analytical closed-form expression for average ISI from Theorem \ref{ISI_density_theorem} and Remark \ref{isi_zp_codes}.
		Fig. \ref{fig:fig3} illustrates the impact of channel memory on the last bit ISI for the ZP code ($\mathcal{C}\cup T(\mathcal{C})$), Hamming code, ISI-mtg code and the un-coded case for symbol duration $t_s = $ 0.2s and $t_s = $ 0.3s, which are discussed below.
		\begin{figure}[t]
			\centering
			\includegraphics[width=1\linewidth]{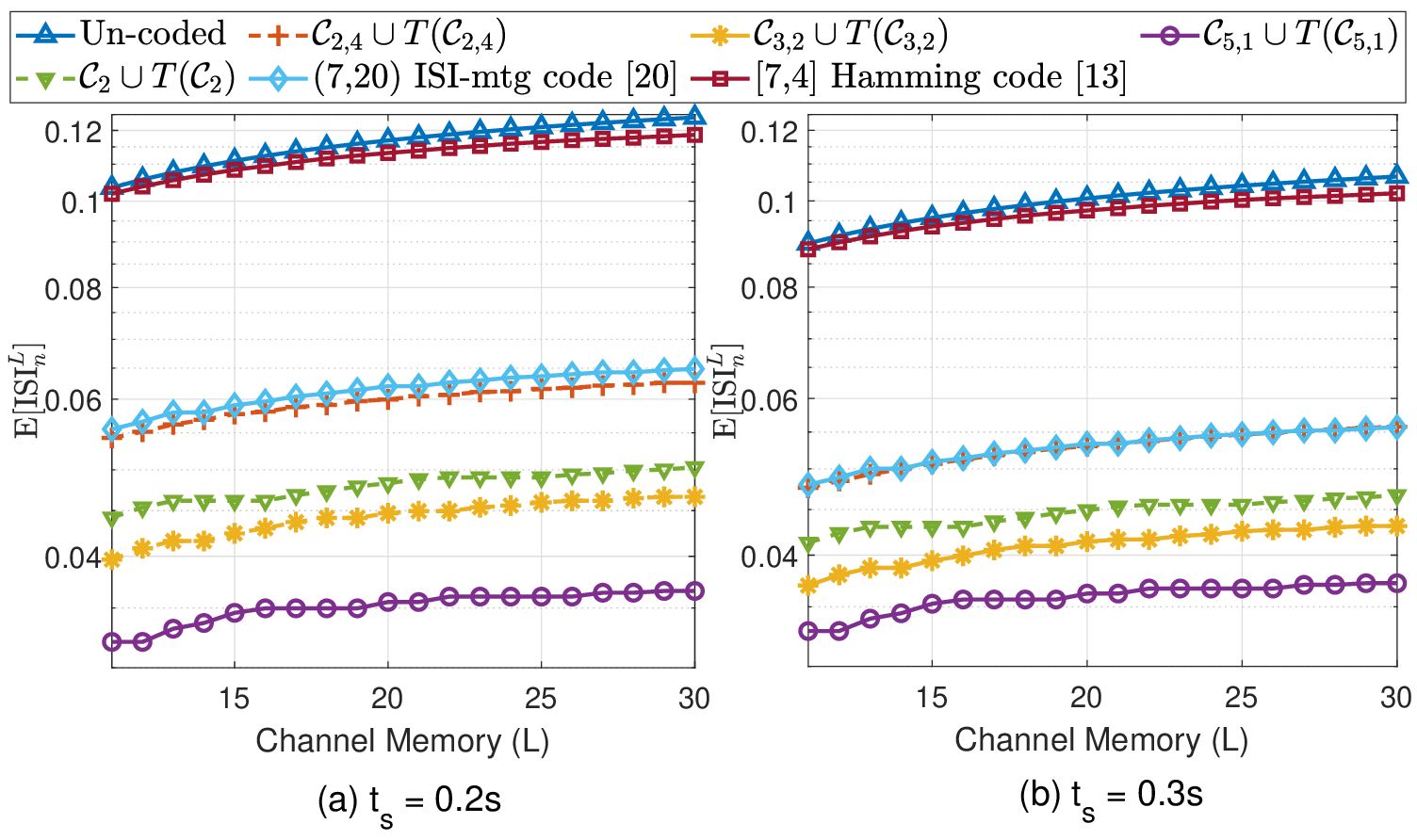}  
			\caption{Comparison of last bit expected ISI with channel memory: (a) $t_s = 0.2s$ and (b) $t_s = 0.3s$ (without channel refresh).}
			\label{fig:fig3}
		\end{figure}
		
		\begin{figure}
			\centering
			\includegraphics[width=1\linewidth]{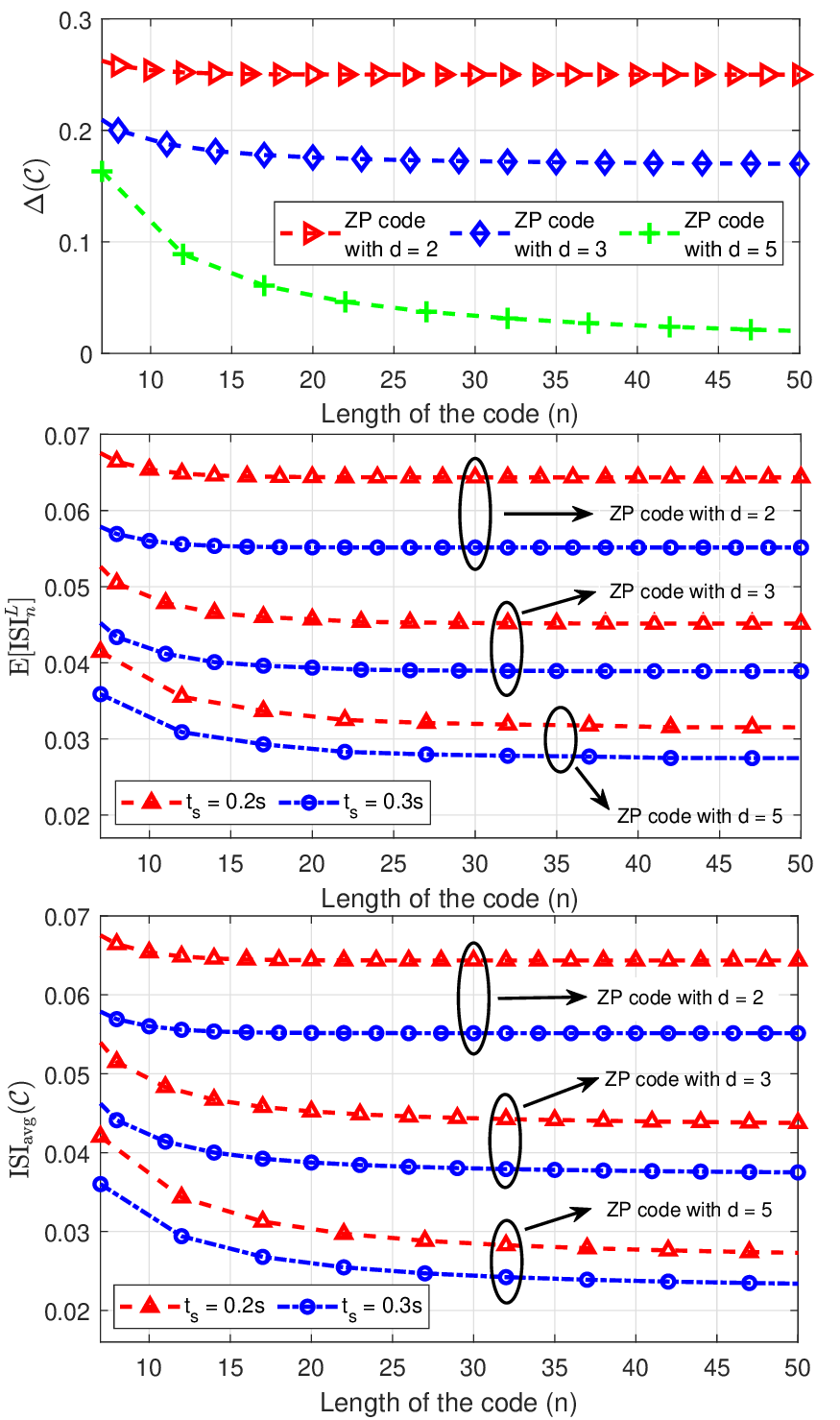}  
			\caption{Comparison of average bit-1 density, last bit expected ISI and average code ISI with length of the code for $t_s \in \{0.2\text{s}, 0.3\text{s}\}$ and $L = 11$ (without channel refresh).}
			\label{fig_isi_avg_density_vs_length}
		\end{figure}
		\subsubsection{Data rate regime with $t_s = 0.2$s} To understand the ISI performance from Fig.  \ref{fig:fig3}, we first need to obtain the average bit-1 density of the codes.
		For instance, the average bit-1 densities for the mentioned $(6, 15)$,  $(5, 7)$, $(9,15)$, $(7, 7)$ ZP codes and $(7, 20)$ ISI-mtg code are $0.2667$, $0.2286$, 
		$0.1788$, $0.1633$ and $0.2714$, respectively, which can be obtained from Remark \ref{isi_zp_codes}.
		Given that the expected ISI for any code correlates with the average density of bit-1s within the codebook from Theorem \ref{ISI_density_theorem}, the proposed family of ZP codes, due to a lower average bit-1 density, outperforms the ISI-mtg codes, which is one of the best-performing ISI-reducing channel codes reported in the literature in an MCvD system. 
		Observations from Fig. \ref{fig:fig3} (a) also confirm that the $(7, 7)$ ZP code $\mathcal{C}_{5,1}\cup T(\mathcal{C}_{5,1})$ performs better than the other considered ISI-reducing codes due to their sparse bit-1 distribution in the generator matrix and lower average bit-1 density.
		Meanwhile, the $[7, 4]$ Hamming code and the un-coded scenarios result in a similar ISI performance due to the average bit-1 density of 0.5 across all symbol positions.
		
		\subsubsection{Data rate regime with $t_s = 0.3$s} In Fig.  \ref{fig:fig3}(b), we simulate the last bit ISI at $t_s = 0.3$s. As the channel experiences less ISI at $t_s = 0.3$s than $t_s = 0.2$s, we observe an improved ISI performance with the mentioned codes.
		Table \ref{table_isi_coderate_all_codes} provides a comprehensive comparison of the average density of bit-1 in the code (from \eqref{average_bit1_density}), expected ISI on the last bit of the code (from \eqref{eq_isi_with_density}), the overall average ISI of the codebook (from \eqref{equation_avg_isi_code}), and the code rates, including the ZP and linear ZPZS codes against existing channel codes at $L = 11$ and $t_s = 0.3$s. 
		The $(n, \mathcal{S})$ ZP code exhibits a higher average number of bit-1s before the last bit compared to the $[n, k]$ ZPZS linear code, leading to a greater ISI effects on the last bit. 
		Also observe that the $(6, 15)$ ZP code (code rate of 0.6511) performs better than the $(7, 20)$ ISI-mtg code (code rate 0.6174) in terms of both the last bit ISI and the average ISI across the codebook due to a lesser average bit-1 in the code, and thereby resulting in a performance gain with a higher code rate.
		The $(5, 7)$ ZP code $\mathcal{C}_{3,1}\cup T(\mathcal{C}_{3,1})$ ensures at least two bit-0s between consecutive bit-1s, compared to the $(5, 7)$ ISI-mtg code, which requires only one bit-0 between the consecutive bit-1s. This results in a lower average bit-1 density for the $(5, 7)$ ZP code (0.2286) than the $(5, 7)$ ISI-mtg code (0.2857), which also leads to a reduced interference on bit-0 for the $(5, 7)$ ZP code. Consequently, the $(5, 7)$ ZP code achieves lower last bit ISI and average ISI than the $(5, 7)$ ISI-mtg code at the same code rate of 0.5615.
		
		Fig. \ref{fig_isi_avg_density_vs_length} compares the average bit-1 density, last bit expected ISI and the average ISI of the code in two different data rate regimes with the length of the code $(n)$ for channel memory $L = 11$. 
		The results show that as the length of the code increases, the average density of bit-1 initially decreases but eventually becomes constant after a certain length.
		Consequently, the average ISI saturates beyond a certain code length for given parameters and symbol duration.
		To improve the ISI performance, one can either choose a longer symbol duration or increase the zero padding between consecutive bit-1s. Also, for practical applications in bio-nanomachines, excessively large block lengths are undesirable. 
		Therefore, as demonstrated in Fig. \ref{fig_isi_avg_density_vs_length}, selecting shorter code parameters for the ZP code ensures both implementability and achieving satisfactory ISI performance in an MCvD system.
		
		\subsection{BER Performance with Varying Transmitted Number of Molecules}	\label{subsubsec_channel_without_ref_ber_mol}
		In Fig. \ref{fig_mol}, we demonstrate the average BER performance of our proposed family of ZP codes alongside existing channel codes, analyzing the impact of the transmitted number of molecules under two data rate regimes.
		
		\subsubsection{Data rate regime with $t_s = 0.2$s} Fig. \ref{fig_mol}(a) illustrates that in a higher data rate MCvD channel, where the impact of ISI is more pronounced, the proposed family of ZP codes performs better than the existing channel codes mentioned in this paper, across the range of transmitted molecule numbers. 
		For codes with higher bit-1 densities, such as the uncoded and Hamming codes, persistent molecules in the channel cause significant interference on bit-0 during symbol detection, degrading the channel BER. In contrast, ISI-mitigating codes, including the proposed ZP codes, ensure at least one bit-0 between consecutive bit-1s.	As the channel coefficient $p^{\mathrm{ch}}_{2}$ is more dominant among all the coefficients of $p^{\mathrm{ch}}_{i}$ for $2\leq i \leq n$, the expected ISI largely depends on the average bit-1 density of the most recent bit. Thus, with the ZP constraint, we observe a better BER performance during shorter symbol durations.
		\begin{figure*}	
				\centering
				\subfloat[$t_s = 0.2$s and $\sigma_n^2 = 0$.]{\includegraphics[width=0.5\linewidth]{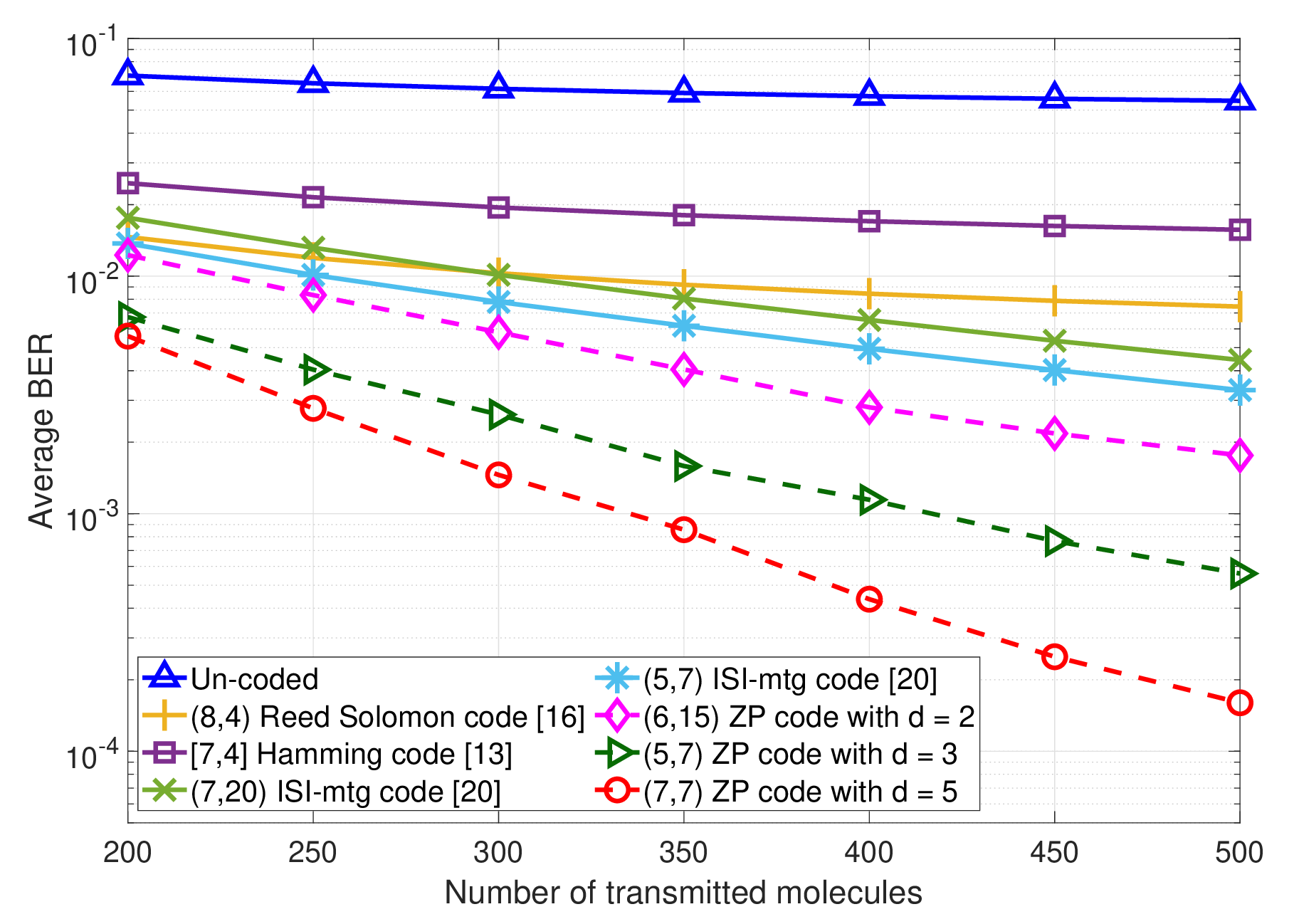} 
				\label{mol1}}
				\subfloat[$t_s = 0.3$s and $\sigma_n^2 = 0$.]{\includegraphics[width=0.5\linewidth]{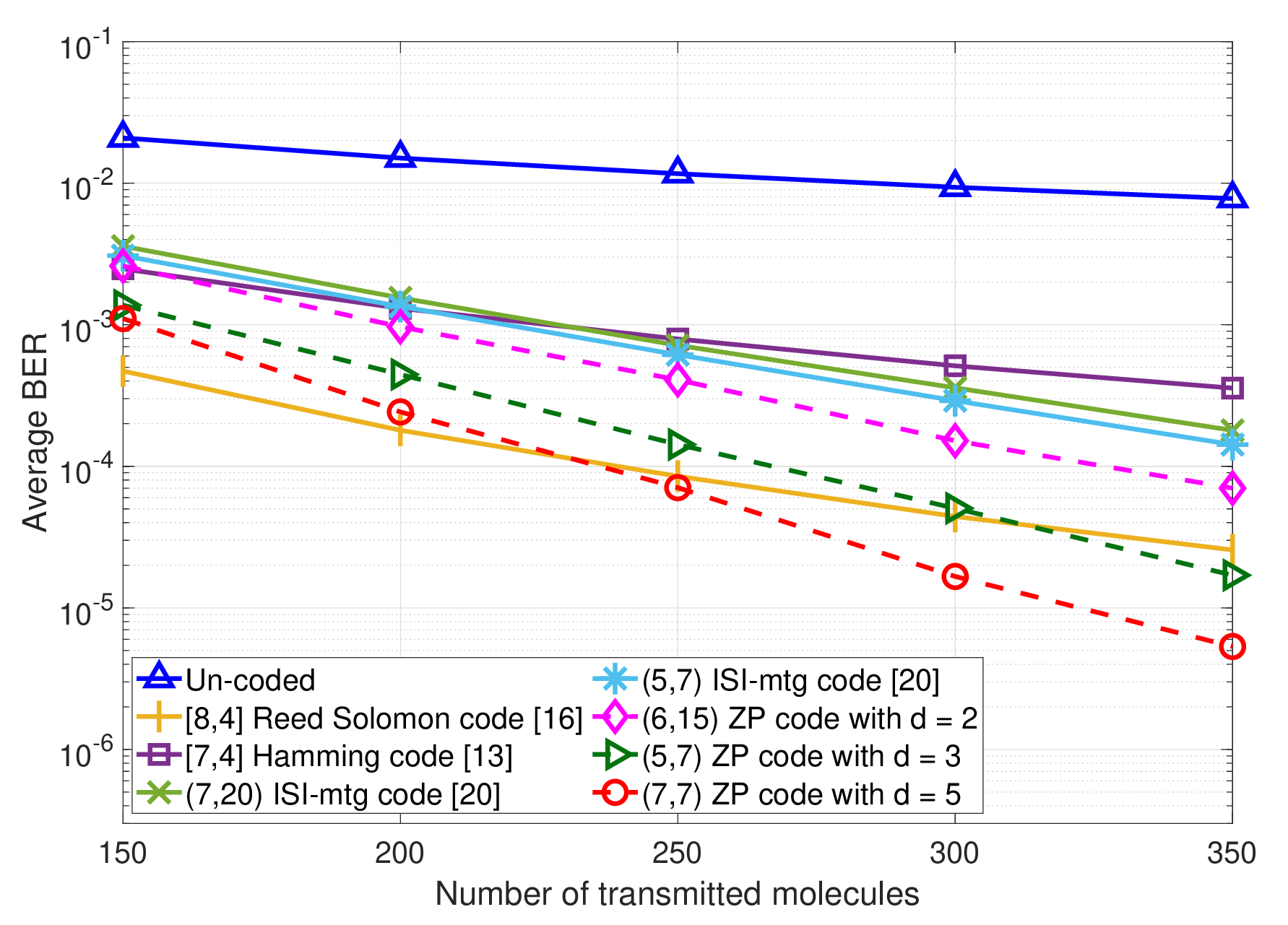}  
				\label{mol2}}
			\caption{Average BER comparison varying transmitted molecules in different data rate regimes ($t_s \in \{0.2\text{s},0.3\text{s}\}$) with $L = 40$ (without channel refresh).}
			\label{fig_mol}
		\end{figure*}
		\begin{figure*}
				\centering
				\subfloat[$t_s = 0.2$s and $M = 500$.]{\includegraphics[width=0.5\linewidth]{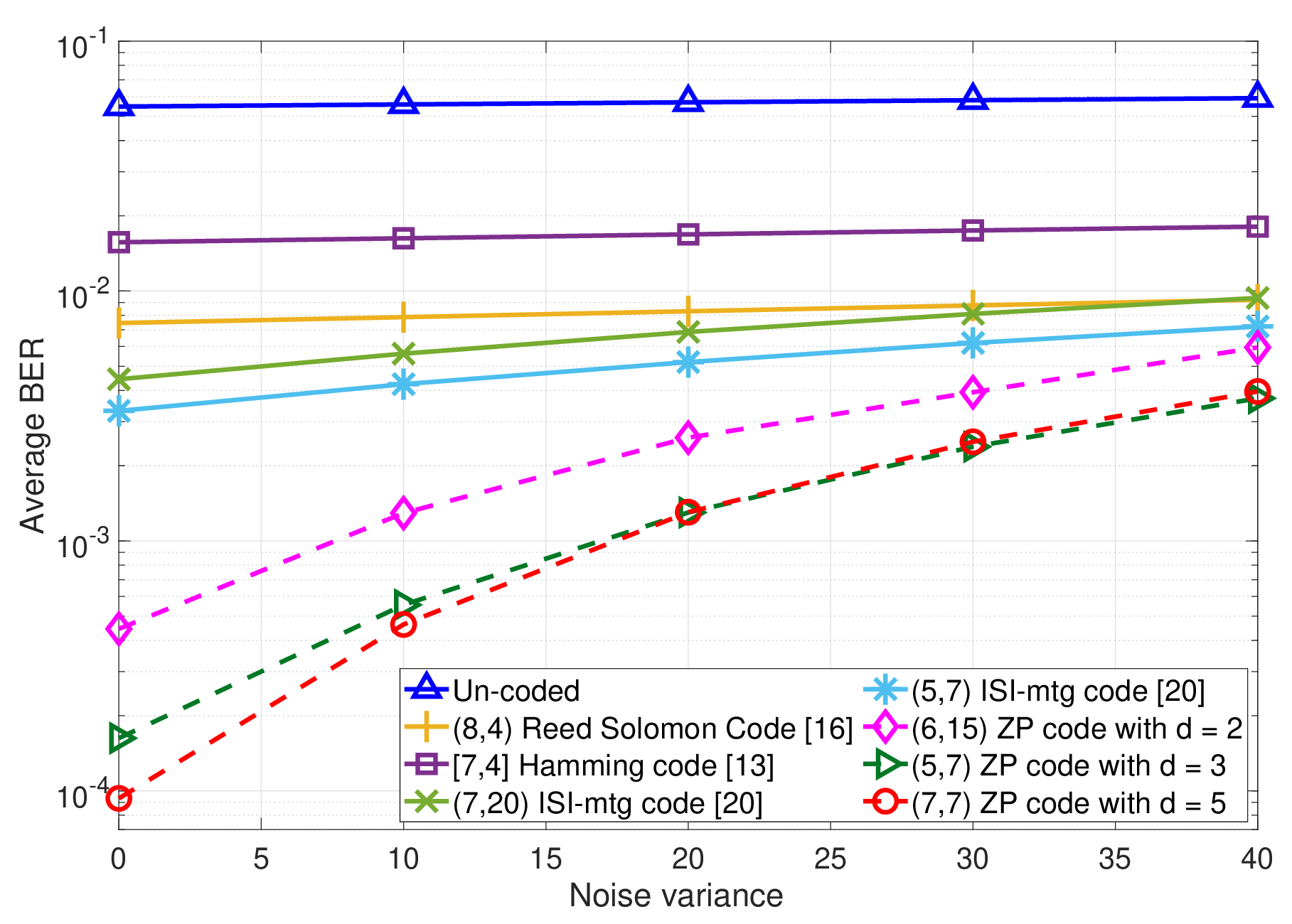}  
				\label{sigma1_zp}}
				\subfloat[$t_s = 0.3$s and $M = 350$.]{\includegraphics[width=0.5\linewidth]{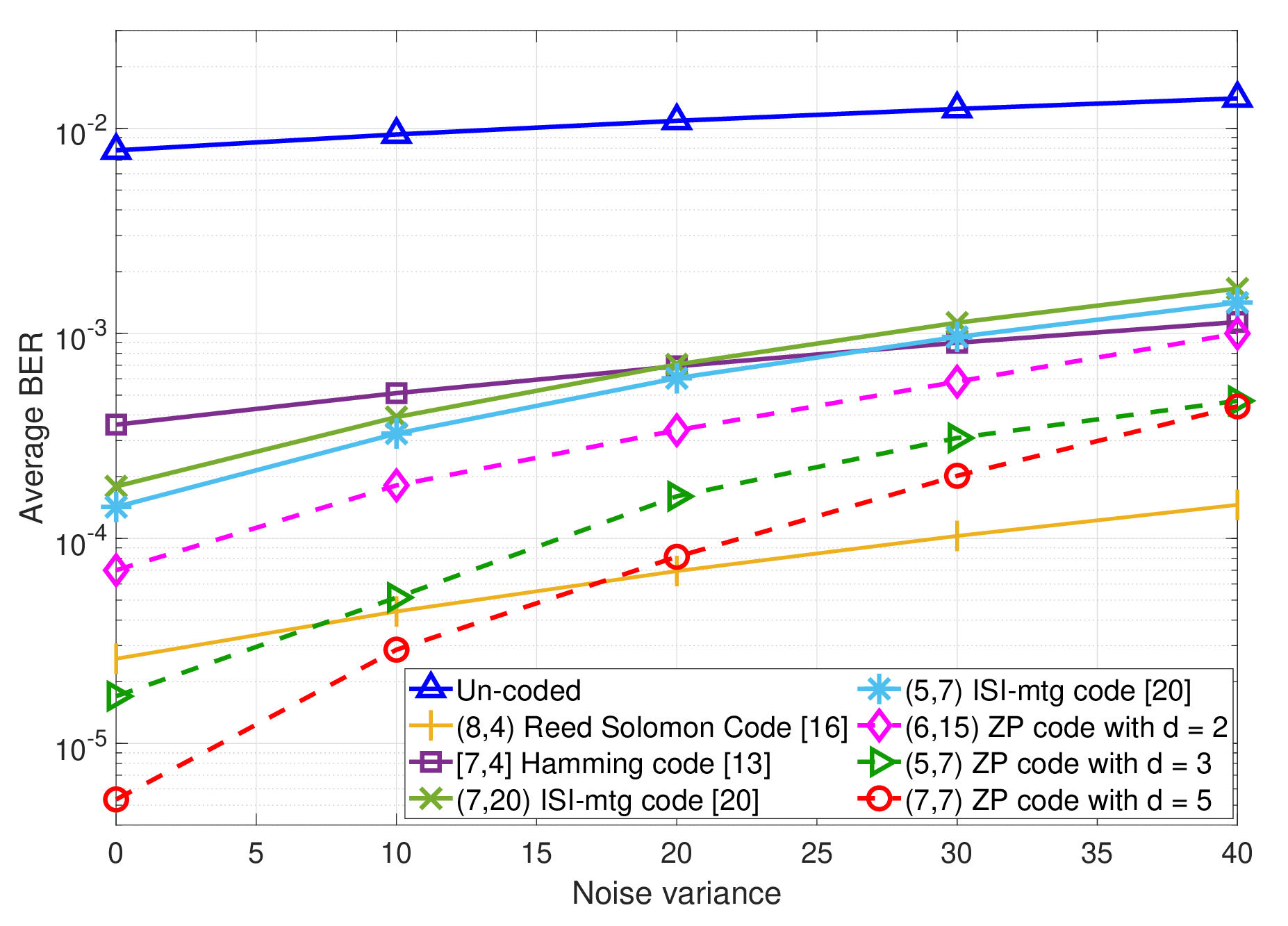}  
				\label{sigma2_zp}}
			\caption{Average BER comparison varying receiver noise in different data rate regimes ($t_s \in \{0.2\text{s},0.3\text{s}\}$) with $L = 40$ (without channel refresh).}
			\label{fig_sigma}
		\end{figure*}
	
		As the (6, 15) ZP code (code rate 0.6511) has a lower average bit-1 density than the (7, 20) ISI-mtg code (code rate 0.6174), the ZP code simultaneously shows a better BER performance over the ISI-mtg code, as depicted in Fig. \ref{fig_mol}(a). Similarly, the 
		(5, 7) ZP code demonstrates a superior BER performance compared to the (5, 7) ISI-mtg codes. Notably, the (7,7) ZP code achieves a better BER performance than the other ZP codes due to its sparse bit-1 distribution in the generator matrix. Hence, in this data rate regime with shorter symbol durations, ZP codes are more effective at minimizing ISI than conventional error correcting codes, thus emphasizing such constraints in the MCvD channel.
		
		\subsubsection{Data rate regime with $t_s = 0.3$s} In a comparatively lower data rate regime, the impact of ISI is reduced due to a longer symbol duration. 
		Therefore, from Fig. \ref{fig_mol}(b), we can observe that the $[8, 4]$ Reed Solomon code performs better than other codes due to its two error correcting property, especially when the number of transmitted molecules is relatively low.
		However, with a larger number of transmitted molecules, the residual molecules in the MCvD channel increase. The remaining molecules in the channel lead to an enhanced memory overhead to the next symbol received by the receiver and cause an error (bit flip) during the symbol detection. Thus, ISI-reducing codes (with more number of bit-0s) become crucial to minimize the memory overhead when the transmitted number of molecules is comparatively large.
		In such cases, ZP codes perform better than the Reed Solomon code due to their lower average bit-1 density and the ISI-reducing (zero padding) constraint.
		For instance, at $M = 350$ and $t_s = 0.3$s, the system achieves an average BER of $2.56\times 10^{-5}$ with the $[8, 4]$ Reed Solomon code (code rate of 0.5).
		In comparison, the $(5, 7)$ ZP code $\mathcal{C}_{3,1}\cup T(\mathcal{C}_{3,1})$, with a higher code rate of 0.5615, achieves a lower average BER $1.726\times 10^{-5}$.
		
		This observation emphasizes that in a channel with a longer symbol duration, the error correcting properties become dominant over the ISI-reducing properties. 
		However, in scenarios with an increased number of transmitted molecules, ISI-reducing ZP codes prove to be more effective than error correcting codes such as the Reed Solomon or Hamming code.
		
		\subsection{BER Performance with Varying Receiver Noise}\label{subsubsec_channel_without_ref_ber_noise}	In Fig. \ref{fig_sigma}, we illustrate the BER performance with receiver noise for the proposed ZP code and compare them with existing codes with $t_s = 0.2$s and $t_s = 0.3$s, respectively. 
		
		\subsubsection{Data rate regime with $t_s = 0.2$s} In Fig. \ref{fig_sigma}(a) with $M = 500$ and $t_s = 0.2$s, we observe that both the un-coded and the $[7,4]$ Hamming code do not improve the system performance due to the high bit-1 density and subsequently large ISI impact in the channel. 
		Also, when the channel noise is relatively small, ISI in the channel is still dominant over the receiver noise. Therefore, in such cases, we obtain a superior BER performance with the ZP codes than the existing codes due to their ISI-reducing properties (ZP constraint and low average bit-1 density).
		Whereas, when receiver noise increases and becomes more dominant than the ISI effect, the BER performance of the proposed ZP codes aligns with that of the ISI-mtg code, as neither possesses error correcting properties.  However, the (5, 7) ZP code (code rate 0.5615), due to the presence of at least two bit-0s between consecutive bit-1s, attains a better BER performance than the ISI-mtg code (code rate 0.5615) and traditional error correcting codes (code rate 0.5 in Reed Solomon and 0.5714 in Hamming code).
		
		\subsubsection{Data rate regime with $t_s = 0.3$s} Fig. \ref{fig_sigma}(b) shows that, with $M = 350$ and $t_s = 0.3$s, the impact of ISI diminishes, allowing the Reed Solomon code to perform better than other channel codes due to the two error correcting property. While the single error correcting $[7, 4]$ Hamming code also shows a similar BER performance with the $(6, 15)$ ZP code at higher noise variance ($\sigma_n^2 = 40$).
		However, at a comparatively lesser noisy region ($0\leq \sigma_n^2\leq 30$), where the ISI effects are more dominant over the channel noise, the ZP codes perform better than the existing single error correcting and the ISI-mtg codes in an MCvD channel.
		Therefore, we observe that the error correcting codes show an improved BER performance with $t_s = 0.3$s than the channel with $t_s = 0.2$s due to a lesser ISI impact, which follows a similar argument from the explanation of Fig. \ref{fig_mol}(b) in section \ref{subsubsec_channel_without_ref_ber_mol}.
		
			\subsection{BER Performance with Varying Transmission Distance}
			From \cite[Eq. (23)]{6807659}, as the transmission distance increases, the probability of one molecule reaching the Rx for a fixed time reduces.
			Therefore, for a fair comparison of the BER performance with different transmission distance $d_{\mathrm{tr}}$, we have fixed the capture probability of one molecule until time $t$ to be $0.2$ in this paper, and accordingly computed the time period $t$.
			Following are the required symbol durations to obtain the capture probability of $0.2$:
			\begin{enumerate}
				\item $d_{\mathrm{tr}} = 10\mu$m: $t = 0.223$s,
				\item $d_{\mathrm{tr}} = 10.5\mu$m: $t = 0.293$s,
				\item $d_{\mathrm{tr}} = 11\mu$m: $t = 0.381$s,
				\item $d_{\mathrm{tr}} = 11.5\mu$m: $t = 0.488$s.
			\end{enumerate}	
			Clearly, as transmission distance increases the symbol duration also increases for a fixed capture probability of one molecule.
			Hence, we can summarize the following observations combining both Fig. \ref{fig_ber_vs_ts} and Fig. \ref{fig_ber_vs_dist}:\\
			(i) \textit{Noiseless channel and shorter transmission distance:} 
			When the transmission distance is short, the symbol duration is also small. 
			Consequently, the effect on the current symbol from the previously transmitted symbols becomes very prominent and the interference in the channel becomes dominant over the noise.
			For instance, Fig. \ref{fig_ber_vs_ts} demonstrates that the considered ZP codes in this paper have a performance gain over the other codes for $0.2\text{s} \leq t_s \leq 0.25\text{s}$ in a noiseless channel.
			Also, from  Fig. \ref{fig_ber_vs_dist}, with $\sigma_n^2 = 0$ and  $10\mu\text{m} \leq d_{\mathrm{tr}} \leq 10.5\mu\text{m}$, all the ZP codes perform better compared to the other codes in the paper.\\
			(ii) \textit{Noiseless channel and larger transmission distance:}
			As the transmission distance becomes large, the symbol duration also increases to maintain the same capture probability of $0.2$.
			Therefore, the effect from the earlier symbol diminishes on the current symbol and the effect of ISI also reduces.
			In such scenarios, Both the error-correcting properties and the ISI-reducing properties of the codes can improve the BER performance.
			For instance, at $d_{\mathrm{tr}} = 11.5\mu$m, the symbol duration becomes sufficiently large to reduce the effect of ISI in the channel.
			Consequently, the two error-correcting $[8, 4]$ Reed-Solomon code shows a similar BER performance with the ISI-reducing $(5,7)$ ZP code.\\
			(iii) \textit{Noisy channel and shorter transmission distance:}
			Even in a noisy channel ($\sigma_n^2 = 30$), the effect of ISI becomes dominant over the channel noise when the symbol duration is small.
			Consequently, the ZP codes perform better compared to the Reed Solomon codes (at a comparable code rate) in the region $0.2\text{s} \leq t_s \leq 0.25\text{s}$.
			Also, for $10\mu\text{m} \leq d_{\mathrm{tr}} \leq  10.5\mu\text{m}$, the $(5, 7)$ ZP code has a performance gain over the $[8,4]$ Reed-Solomon code.\\
			(iv) \textit{Noisy channel and larger transmission distance:}
			In a noisy channel, higher the transmission distance, higher the symbol duration is, which eventually eases out the ISI effect in the channel. 
			Therefore, for larger transmission distance, the channel noise becomes dominant over the ISI. 
			For instance, with $\sigma_n^2 = 30$ and  $d_{\mathrm{tr}} = 11.5 \mu$m, the two-error correcting $[8,4]$ Reed Solomon code attains an improved system performance compared to the ZP codes.
			Also, observe that the single error-correcting $[7,4]$ Hamming code shows a similar BER performance to the $(5,7)$ ISI-mtg code in this regime.
		\begin{figure}
			\centering
			\includegraphics[width = 1\linewidth]{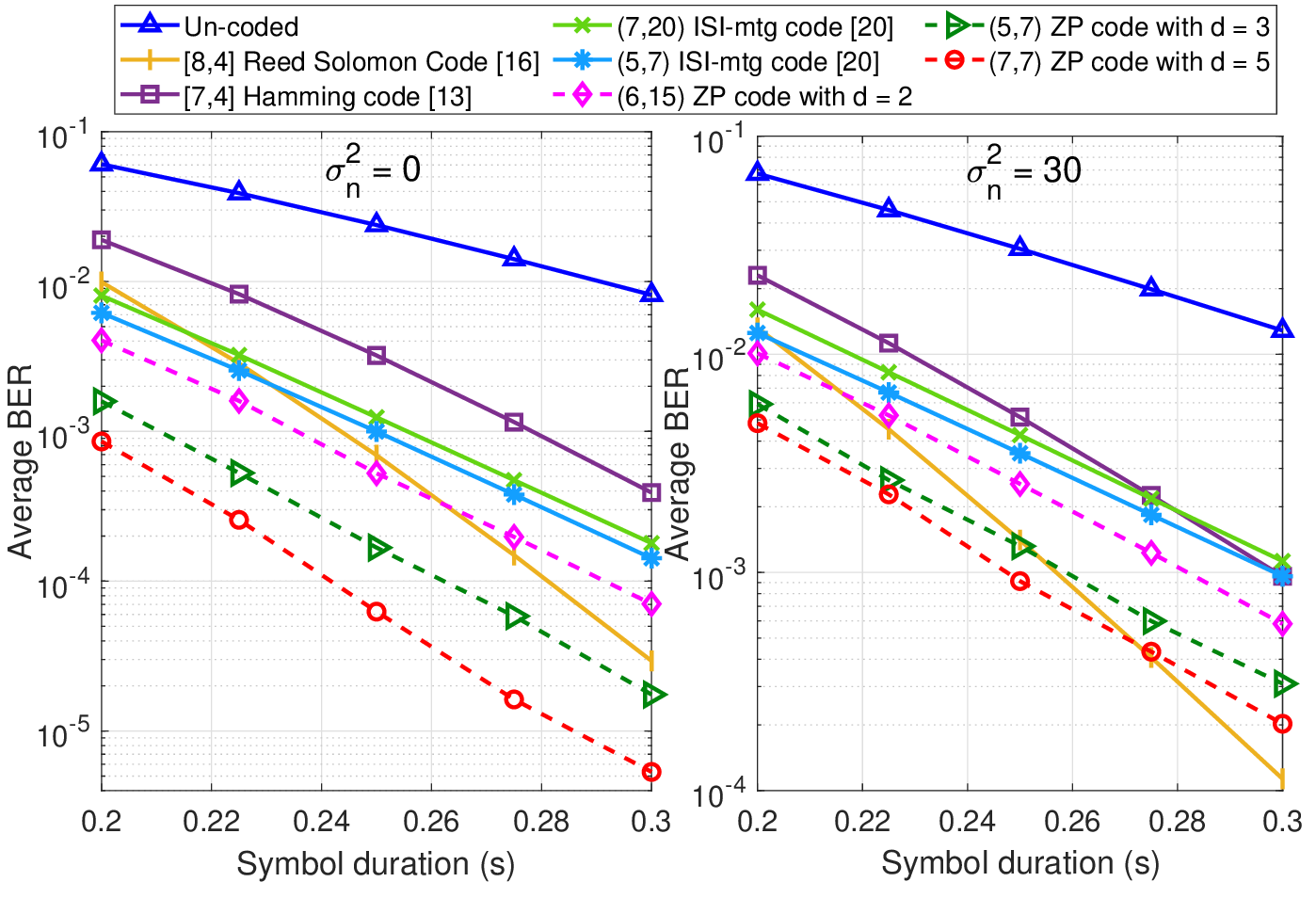}
			\caption{Average BER comparison with symbol duration $t_s$ for $L = 40$, $M = 350$ and $\sigma_n^2\in \{0, 30\}$ (without channel refresh).}
			\label{fig_ber_vs_ts}
		\end{figure}
		\begin{figure}
			\centering
			\includegraphics[width = 1\linewidth]{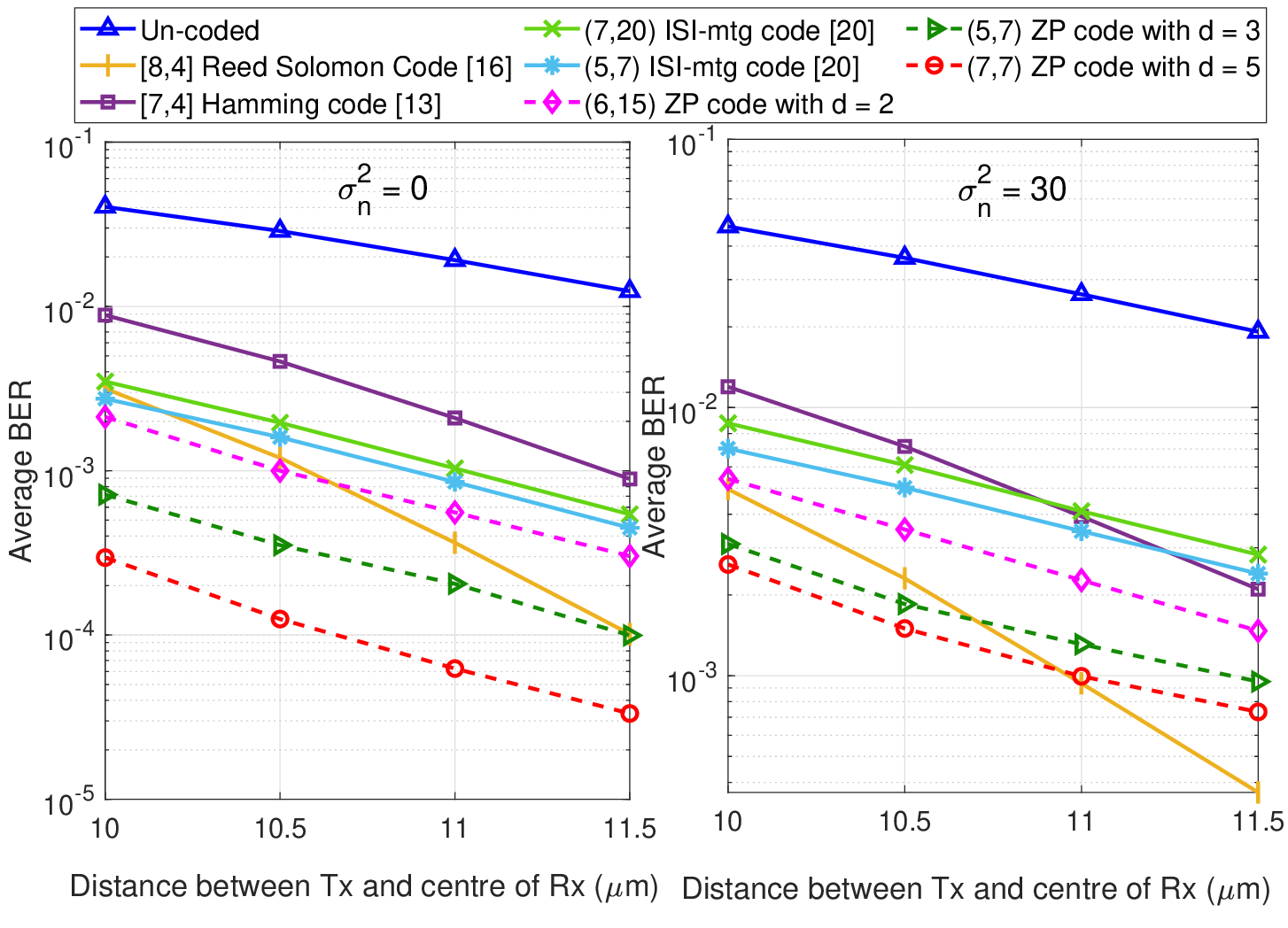}
			\caption{Average BER comparison with transmission distance $d_{\mathrm{tr}}$ for $L = 40$, $M = 350$ and $\sigma_n^2\in \{0, 30\}$ (without channel refresh).}
			\label{fig_ber_vs_dist}
		\end{figure}
	
		Therefore, combining Fig. \ref{fig_ber_vs_ts} and Fig. \ref{fig_ber_vs_dist}, we can conclude that the ZP codes are preferred over classical error-correcting codes when ISI is more dominant in the channel, whereas the error-correcting codes are preferred when channel noise in the system becomes dominant over the ISI.

		\subsection{Effect of Channel  Refresh} \label{subsubsec_channel_with_refresh}
		In this section, we compare the ISI and average BER performance, where the channel is refreshed after every successful message reception with the parameter $L = n-1$ and also compare the code performance in a channel without refresh.
		We have considered the $[8, 5]$ and $[12, 7]$ LOZP codes to compare the performance with $[7, 4]$ Hamming code, $[8, 4]$ Reed Solomon code, $(7, 20)$ ISI-mtg code, $(5, 7)$ ISI-mtg code alongside other constructed linear codes ($[8, 5]$ OMP and $[8, 5]$ OEP code) based on sequence distributions from \cite{9840783}. 
		Note that a linear ZPZS code can achieve a maximum code rate of $0.5$, as constructed in Lemma \ref{constr_1}. Therefore, the consecutive bit-1s with a linear code construction become important to improve the code rate with an accepted BER performance.
		For example, with length $n = 8$, the linear LOZP code with parameters $\tau = 2$ and $d = 2$, attains a code rate of 0.6250, whereas the ZP code with a non-linear construction can achieve a maximum (asymptotic) code rate of 0.6193.
		In such cases, placement of these consecutive bit-1s within the codeword becomes an important metric where the channel periodically gets refreshed by the enzyme emission from the receiver.
		
		In Fig. \ref{ber_with_without_refresh},  we demonstrate the average BER performance of the considered codes, analyzing the impact of channel refresh. We observe that channel refresh improves system performance across all codes by diminishing the impact of ISI.
		Note that the LOZP, OMP and OEP codes with code length $n = 8$ and dimension $k = 5$, have the identical average bit-1 density of 0.3125 (Table \ref{table_isi_coderate_all_codes}). However, among these considered codes in Fig. \ref{ber_with_without_refresh}(a) (channel without noise) and \ref{ber_with_without_refresh}(b) (channel with noise), the $[8, 5]$ LOZP code exhibits a better BER performance in a channel scenario with refresh of $L = n -1$.
		Therefore, this result validates our statement in Lemma \ref{bit_1_position_lemma}, that the consecutive bit-1s are desirable at the initial positions to reduce the ISI effect in such scenarios.
		For example, with $t_s = 0.2$s, $M = 500$ and $\sigma_n^2 = 0$, the system achieves an average BER of $1.5140\times 10^{-5}$ the $[8, 5]$ LOZP code, while with the $[8, 5]$ OMP and $[8, 5]$ OEP code, it achieves average BER of $4.3650\times 10^{-4}$ and $1.1526\times 10^{-3}$, respectively in an MCvD channel. 
		In channels without refresh, the construction of the OMP and OEP codes prevents them from having three consecutive bit-1s, with the maximum weighted codewords being	10110101 (OMP) and 10101101 (OEP). In contrast, the LOZP code's maximum weighted codeword is 11010101, which can have three consecutive bit-1s and thus contributes to higher ISI in channels without refresh. Hence, in such scenarios, OMP and OEP codes perform better than the LOZP code in both data rate regimes. However, in a channel with a refresh, the ZP code, as discussed in section \ref{subsubsec_channel_without_ref_ber_mol}, outperforms both OMP and OEP codes due to the absence of consecutive bit-1s, thus preferred over OMP and OEP codes in a channel with a refresh.
		
		The $[8, 4]$ Reed Solomon code, due to its two error correcting property, performs better than the remaining codes in a noisy channel with refresh and longer symbol duration ($t_s = 0.3$s), where noise becomes dominant over the ISI. Whereas, in the remaining cases with channel refresh, the $[12, 7]$ LOZP code achieves a better performance than the remaining mentioned codes (at a similar code rate), as it restricts more than two consecutive bit-1s at the beginning of the codeword, which is one of the ISI-reducing properties of a code.
		Hence, in addition to the average bit-1 density and ZP constraint, the location of bit-1s in a linear code also emerges as an important metric in controlling ISI. 
		
		\begin{figure*}[htbp]
				\centering
				\subfloat[Without noise ($\sigma_n^2 = 0).$]{\includegraphics[width=0.5\linewidth]{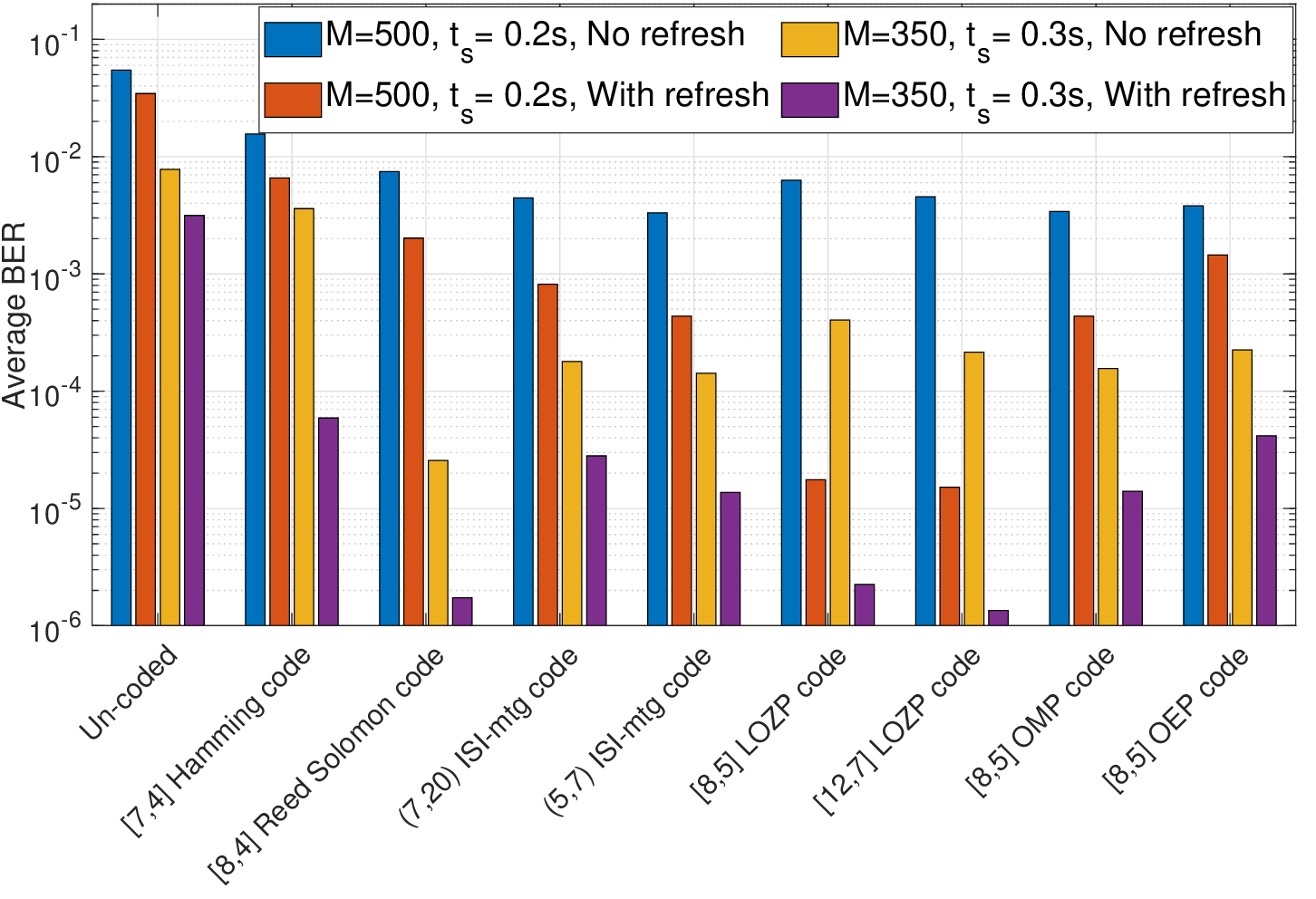}  
				\label{ber_vs_mol_with_without_refresh}}
				\subfloat[With noise ($\sigma_n^2 = 40).$]{\includegraphics[width=0.5\linewidth]{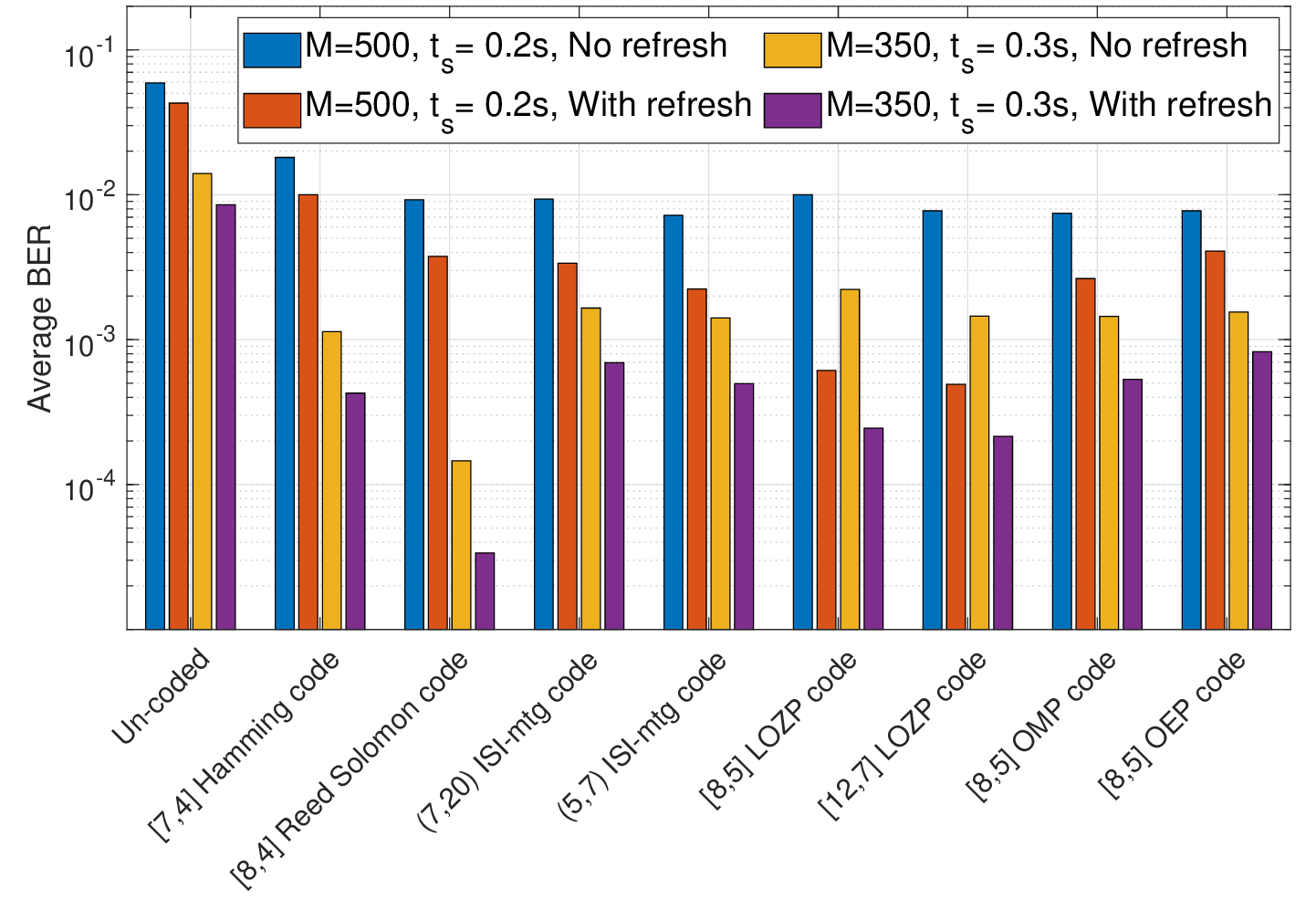}  
				\label{ber_vs_noise_with_without_refresh}}
			\caption{Average BER comparison for different linear and non-linear codes in different data rate regimes without channel refresh ($L = 40$) and with channel refresh ($L = n-1$).}
			\label{ber_with_without_refresh}
		\end{figure*}
		
		\section{Conclusion}\label{sec 7}
		In this paper, we first construct a family of binary codes based on the zero padding constraint using a linear approach and then propose a linear Leading One Zero Pad code by relaxing the zero padding constraint to achieve an improved code rate.
		We derive a closed-form expression on the expected ISI for the ZP and linear LOZP codes and show a one-to-one relation between the average bit-1 density and the expected ISI of the code.
		From the simulation results, it is evident that increasing the number of zeros between two consecutive bit-1s lowers the average bit-1 density in the code with a reduced code rate, thereby leading to an improved ISI performance. 
		
		The proposed family of binary ZP codes, such as the $(6, 15)$ ZP code, outperforms the conventional two error correcting $[8, 4]$ Reed Solomon code, single error correcting $[7,4]$ Hamming code and the existing $(7, 20)$ ISI-mtg code in terms of average BER in an MCvD channel with symbol duration $0.2$s. 
		This improvement is obtained due to the ZP code's smaller average ISI, consequently leading to a lower BER while maintaining a similar code rate compared to these codes.
		In data rate scenarios with a symbol duration of $0.3$s, where ISI effects are less severe, error correcting codes show better BER performance than the symbol duration of $0.2$s.
		However, as residual molecules in the channel increase, the $(5, 7)$ ZP code demonstrates superior BER performance compared to the error correcting codes, leveraging the ZP code's lower average bit-1 density to combat ISI.
		Due to the absence of consecutive bit-1s in the ZP code, we propose a simple location-based decoding technique for the decoding of the codes, which we call the MLR decoding algorithm. 
		
		In an MCvD channel characterized by refreshment after each successful transmission, the proposed $[8, 5]$ and $[12,  7]$ linear LOZP codes perform better over the considered linear and non-linear codes in this paper, with a symbol duration of $0.2$s. 
		Notably, in such channels, an improved BER performance with the linear LOZP codes motivates us to consider the location of bit-1s in the initial positions of the codewords as an important metric alongside the zero padding constraint and the average bit-1 density of the code. 
		Finally, we derive an upper bound on the code rate for the given code constraints and compare the code rate of the proposed codes with different parameters. 
		Combining Fig. \ref{fig_mol}, Fig. \ref{fig_sigma} and Fig. \ref{ber_with_without_refresh}, Table \ref{table_choose_best_codes} highlights the most suitable code (similar code rate) for different channel conditions. 
		For noisy channels with longer symbol duration, the $[8, 4]$ Reed Solomon code is preferred as noise becomes dominant over ISI in this scenario. 
		In other scenarios depending on channel refresh, the ZP and linear LOZP codes demonstrate superior performance due to their ISI-reducing properties.
		
		\begin{table}[t]
			\centering
			\caption{Code Selections with Different Channel Conditions.}
				\begin{tabular}{|p{1cm}|p{1.2cm}|p{1.75cm}|p{3cm}|}
					\hline
					Channel noise & Symbol duration & Channel \newline characteristics & Code \\ \hline
					&  $t_s = 0.2$s      & without refresh  & (5, 7) ZP code \\ \cline{3-4}
					&       &  with refresh  &[12, 7] LOZP code \\ \cline{2-4}
					Without noise &  $t_s = 0.3$s      &  without refresh  & (5, 7) ZP code, \newline [8, 4] Reed Solomon code  \\ \cline{3-4}
					&      &  with refresh  & [12, 7] LOZP code,\newline [8, 4] Reed Solomon code \\	\cline{1-4}
					& $t_s = 0.2$s       & without refresh  & (5, 7) ZP code  \\ \cline{3-4}
					With  &       &  with refresh  & [12, 7] LOZP code \\ 	\cline{2-4}
					noise & $t_s = 0.3$s       & without refresh  & [8, 4] Reed Solomon code \\ \cline{3-4}
					&        & with refresh  & [8, 4] Reed Solomon code \\
					\hline
				\end{tabular}
				\label{table_choose_best_codes}
			\end{table}
\bibliographystyle{IEEEtran}


		\end{document}